\documentclass[10pt]{article}
\usepackage{geometry}
\usepackage{setspace}
\usepackage{algorithm}
\usepackage{algorithmic}
\usepackage{url} 
\usepackage{cite}
\usepackage{amsmath,amssymb,amsfonts}
\usepackage{graphicx}
\usepackage{textcomp}
\usepackage{xcolor}
\usepackage{subfigure}
\usepackage{bm}
\usepackage{appendix}
\usepackage{amsthm}
\usepackage{setspace}  %use this package to set linespacing as desired
% \usepackage[
% backend=biber,
% style=apa,
% ]{biblatex}

\theoremstyle{definition}
\newtheorem{theorem}{\textbf{Theorem}}
\newtheorem{definition}{\textbf{Definition}}

\newtheorem{remark}{\textbf{Remark}}

\newtheorem{lemma}{\textbf{Lemma}}
\newtheorem{corollary}{\textbf{Corollary}}

\allowdisplaybreaks[4]
\def\BibTeX{{\rm B\kern-.05em{\sc i\kern-.025em b}\kern-.08em
    T\kern-.1667em\lower.7ex\hbox{E}\kern-.125emX}}
\newcommand{\todo}[1]{\textcolor{black}{#1}}

% Set margins
\geometry{margin=0.8in}

% Set line spacing
% \setstretch{1.7}
\doublespacing

\title{\todo{A Prudent Framework for Understanding Risk-Awareness \\in Demand Response}}
\author{
    Liudong Chen\textsuperscript{}, Bolun Xu \\
    Earth and Environmental Engineering, Columbia University \\
    \{lc3671, bx2177\}@columbia.edu  \\
    500 W 120th Street, NYC, NY, 10027, USA \\
}
\date{} % Remove date
% \bibliographystyle{apa}
% \addbibresource{ref.bib}

\begin{document}

\maketitle

\begin{abstract}
\todo{We show that risk-aware behaviors in demand response originate from superquadratic state-dependent cost functions and price uncertainty with skewed distributions. We obtain such results through developing a novel theoretical demand response framework that combines non-anticipatory multi-stage decision-making with superquadratic cost functions. We introduce the concept of prudent demand, defined by a positive third-order derivative of the cost function, which is the first principle for risk-averse behavior despite a risk-neutral objective. Our analysis establishes that future price uncertainty affects immediate consumption decisions, and the extent of this response scales proportionally with the skewness of the price distribution. We visualize our theoretical findings through numerical simulations and illustrate their practical implications using a real-world case study.}
\end{abstract}

\textbf{Keywords:} OR in energy, Demand response, Prudence, Risk behaviors, Sequential decision-making

\section{Introduction}
Utility companies and load-serving entities are increasingly introducing dynamic pricing plans to promote flexible demand solutions, including storage units, smart appliances, and electric vehicles~\cite{intro_1}, with the expectation that consumers will strategically adjust their electricity usage to minimize costs.
One prominent form of dynamic pricing is real-time pricing, in which electricity prices fluctuate over time and are disclosed only at the point of delivery~\cite{optimal_response}. This price variability, influenced by wholesale market rates or local grid conditions, enables incentive-based demand response (DR)~\cite{doe_document} at relatively low cost, thereby improving system reliability and reducing demand during system contingencies~\cite{DR_qiadratic_reason}.
\todo{For instance, PJM Interconnection (PJM)’s economic DR program partially exposes participating customers to real-time wholesale price variability, encouraging strategic demand adjustments~\cite{PJM}. Similarly, Electric Reliability Council of Texas (ERCOT)’s real-time market exposes wholesale customers to highly variable prices updated every five minutes, while its four coincident peaks (4CP) program imposes substantial charges based on customers' demand during system peak hours identified retrospectively~\cite{4CP_ercot}.}

While real-time tariffs provide utilities with the flexibility to adjust prices in response to current grid conditions, they also introduce significant uncertainty for consumers. Exposure to fluctuating electricity prices elicits complex risk-aware behaviors~\cite{robust}. \todo{As suggested by numerous industry solutions, customers are advised to schedule curtailment plans or adjust flexibility resources proactively based on anticipated wholesale price patterns~\cite{PJM_CP}. To mitigate the charge associated with the ERCOT 4CP program, both industry practices and academic research recommend that customers proactively adjust their consumption based on peak load predictions~\cite{baosen_cdc, 4CP}. Similarly, when significant events are forecasted, such as during the 2024 solar eclipse, many system operators schedule backup resources (e.g., batteries, gas turbines) in advance to mitigate the risks associated with extreme events~\cite{NREL}. On the residential side, programs such as Tesla’s Storm Watch automatically charge home batteries to full capacity in anticipation of severe weather forecasts~\cite{tesla}, reflecting consumers’ willingness to act in advance of low-probability high-impact events. Collectively, these examples indicate strong behavior: specifically, risk-averse behaviors that prompt early action and skewness-averse behaviors that elicit stronger responses to extreme tail events. }
% Say here, more examples, and indicates the prudent and skewness aversion behavior. and this  }, which must be studied systematically for efficient and equitable tariff design.  

\todo{Real-world experiences highlight the need to systematically study consumer risk-aware behaviors to better support program and tariff design. A natural approach is to conduct real-world controlled experiments by offering dynamic prices, and perform subsequent analysis on the observed consumption outcomes~\cite{learning_model_free}. However, such experiments are often impractical due to infrastructure constraints, privacy concerns, and regulatory hurdles. More fundamentally, electricity is widely regarded as a basic human necessity, and both the public and utilities are strongly averse to exposing consumers, especially vulnerable populations, to price volatility purely for experimental purposes~\cite{ansarin2022review,horowitz2014equity}. These ethical and political sensitivities make it difficult to justify controlled price manipulation at scale. As a result, empirical studies remain limited, and it is often more practical to use decision-theoretic models that capture consumer costs, constraints, and uncertainty within an optimization framework.}

\todo{Previous literature has commonly employed quadratic or piecewise-linear utility function formulations to approximate decision behavior under stochastic settings, although in the deterministic setting, there are higher-order analyses. However, these approaches generally fail to capture the true operational characteristics of household appliances, thermal comfort preferences, and battery storage capabilities~\cite{optimal_response_GM, Stackelberg_game, battery}. For example, thermal discomfort does not decrease linearly with temperature adjustments~\cite{thermal_nonlinear}, nor do energy usage patterns evolve purely in a quadratic manner; rather, they are influenced by high-dimensional, non-linear individual preferences~\cite{feedback}. Similarly, batteries and many household appliances operate under strict physical constraints (e.g., hard capacity limits, ramp rate restrictions) that are inadequately represented by quadratic models, which inherently apply only soft penalties. 
Moreover, most stochastic programming research considers uncertainty only through expected values~\cite{powell2019unified}, often neglecting distributional effects that naturally arise in realistic, skewed distributions~\cite{schuhmacher2021justifying}. Although some studies introduce risk aversion by employing mean-variance or conditional value-at-risk (CVaR) formulations~\cite{oum2006hedging}, these approaches generally rely on strong distributional assumptions and presuppose risk-averse preferences without explaining their origins or deriving them naturally from the structural properties of the decision-making model. Even in studies that consider higher-order formulations within stochastic settings, the emphasis is typically placed on developing algorithms for simulation, rather than on theoretical exploration to understand the risk-aware behaviors~\cite{nalpas2017portfolio}.}

\todo{This gap naturally underscores the need for a more sophisticated utility function formulation that can capture these features and more accurately reflect real-world risk-aware decision behavior under skewed uncertainty conditions. In light of this, a critical research question arises: \emph{What is the first principal of demand's risk-aware decision behavior under price uncertainty with skewed distributions}. Addressing this question is the main focus of this paper.}

% \todo{introduce the key research questions here. the simplicity of quadratic or linear utility functions often fails to accurately capture consumers' nuanced preferences and constraints. and underscore the need for a more sophisticated utility function formulation. }

% Nevertheless, \todo{delete} For example, thermal comfort does not decrease linearly with temperature adjustments or change in a purely quadratic manner. Instead, it is influenced by individual, high-dimensional, non-linear preferences~\cite{thermal_nonlinear, feedback}. 
% Similarly, batteries and many household appliances operate within strict capacity and ramp rate limits that a quadratic model, with its inherent soft penalties, inadequately represents.

\subsection{\todo{Summary of contributions and implications}}
\todo{In this work, we establish a theoretical framework to model demand behavior under volatile future electricity prices. We focus on a generic demand model---representative of flexible loads such as battery storage or thermostatically controlled Heating, Ventilation, and Air Conditioning (HVAC) systems---that accommodates different cost function structures, specifically, quadratic and superquadratic forms. The latter, characterized by nonzero third-order derivatives, enables us to uncover the structural foundations of risk-aware behavior in demand response. While prior work has extensively examined the role of expectation in uncertainty, it is increasingly evident that higher-order distributional features, particularly skewness, play a critical role in shaping decision-making. We therefore focus on the more realistic skewed distribution and capture the distributional effect on the decision behavior. 
To this end, we introduce the concept of \emph{prudent demand}, motivated by the classical economic notion of prudence, which describes the sensitivity of optimal responses to risk and captures higher-order moment risk behavior~\cite{prudent_earliet, ebert2024first, risk_averse}. 
 }

 We contribute to the literature by:
\begin{itemize}
    \item \textbf{Demonstrating the limitations of quadratic demand models}: We prove that these models are distribution-insensitive and thus fail to capture meaningful behavioral responses to skewed price uncertainty or shifts in the price distribution.
    
    \item \textbf{Establishing that superquadratic cost functions naturally induce prudent demand}: Even under risk-neutral objectives, such functions result in precautionary behavior, where future price distribution influences immediate consumption decisions. This connects operational models to classical notions of prudence in economics and reveals that commonly used risk-averse formulations can be viewed as approximations of deeper structural properties.
    
    \item \textbf{Deriving skewness-averse behavior analytically}: We show that the magnitude of demand adjustment scales with the skewness of price distribution, offering new insights into how rare, high-impact events influence electricity consumption decisions.
\end{itemize}

Our findings carry several practical implications:
\begin{itemize}
    \item \textbf{For system operators}: The widespread use of quadratic demand models may result in misaligned incentives or poor forecasts. Incorporating behaviorally aware formulations can utilize the 'built-in' flexibility of demand and improve demand-side management under uncertainty.
    
    \item \textbf{For policymakers}: Understanding that consumers respond to distributional risk supports strategies that include pre-event communication and better-designed dynamic pricing to enhance grid reliability.
    
    \item \textbf{For researchers}: Our results provide a principled method for embedding higher-order distributional effects into demand modeling, offering a bridge between economic theory and real-world energy system operations.
\end{itemize}

The remaining of the paper is organized as follows: Section II reviews related literature, \todo{Section III introduces the model and preliminaries, Section IV introduces the distribution-insensitive demands' definition and conditions, Section V extends to prudent demands' analysis in terms of definition, conditions, and revelation, Section VI describes simulation results under DR settings, and Section VII concludes the paper and discuss practical implications.}

\section{Background and literature review}
\subsection{Dynamic pricing for electricity consumers}
Dynamic pricing schemes, also known as time-varying tariffs, are designed to leverage demand-side flexibility but also introduce greater variability into electricity prices~\cite{tariff_dynamic1}.
\todo{Electricity prices in wholesale markets are inherently volatile~\cite{TX_realtime}, and in regions such as PJM and ERCOT, large industrial or commercial consumers can elect to receive real-time wholesale prices directly~\cite{PJM_real}.} During the 2021 Winter Storm Uri, wholesale real-time prices surged to \$9,000/MWh in Texas, resulting in some consumers receiving electricity bills exceeding ten thousand dollars~\cite{nyt}. \todo{In addition, programs such as ERCOT’s 4CP impose charges based on peak periods that are identified ex post to the following year’s bills. The resulting charges introduce significant uncertainty for consumers~\cite{4CP_ercot}.}
% as they still kept lights and heat on.  

On the retail side, utilities are increasingly adopting dynamic tariffs to encourage demand response. Many utilities offer time-of-use (ToU) tariffs for residential customers, with different rates for peak and off-peak periods, aiming to provide consumers with opportunities for savings. Examples include Con Edison in New York City~\cite{Con_ed_ny}, PG\&E in California~\cite{PGE}, and the Salt River Project (SRP) in Arizona~\cite{SRP}. Beyond ToU, utilities are now experimenting with more advanced time-varying tariffs. For instance, OG\&E in Oklahoma implements a smart hours pricing scheme that varies prices within the designated ToU peak window~\cite{OGE}, while Ameren in Illinois offers a Power Smart Pricing program that issues hourly electricity prices to residential customers on a day-ahead basis~\cite{real_time_price}. Although many of these programs issue prices in advance and thus limit real-time uncertainty for consumers, they increasingly link retail prices to wholesale market conditions, indicating a broader trend toward dynamic pricing models that aim to enhance consumer flexibility in electricity consumption.

\subsection{\todo{Responsiveness of electricity consumers}}
\todo{The key to effective dynamic tariff design is understanding how consumers respond to price signals. However, privacy, affordability, and fairness concerns make it difficult to conduct controlled real-world experiments, particularly under variable real-time tariff structures.} As a result, much of the prior research has focused on model-based approaches, employing speculative utility functions to represent consumer decision-making processes~\cite{DR_intro_earlist, pennings2003shape}. These models typically aim to capture behaviors such as appliance usage shifting or thermal comfort adjustments in response to dynamic prices~\cite{optimal_response_GM, learning_parameter}.

\todo{The optimal solution to a consumer's utility maximization problem comes from the first-order optimality condition, which requires a concave and monotonic increasing function~\cite{optimal_response, DR_qiadratic_reason}. Quadratic functions are among the simplest forms that satisfy these conditions while also maintaining computational tractability and allowing for closed-form solutions.} However, the reliance on quadratic formulations represents a trade-off for mathematical convenience~\cite{DR_qiadratic_review}, which may not accurately capture real-world consumer response behaviors. In parallel, learning-based methods have been developed to infer consumer consumption patterns by estimating utility function parameters from observed data~\cite{learning_data, learning_parameter}, integrating user feedback into the control loop~\cite{feedback}, and even modeling individual response behaviors through model-free formulations~\cite{learning_model_free}.
Nevertheless, these methods face the inherent risk of extrapolating learned behavior to unseen price patterns.

\todo{Electricity consumers' responses to uncertain prices are widely studied, often considering uncertainty only through expected values~\cite{powell2019unified}. To capture risk-aware behavior, some works intentionally introduce risk aversion considerations, such as CVaR~\cite{cvar2}, to expect consumers to demonstrate risk-averse behaviors. Other studies apply mean-variance approaches to model price and quantity risk in forward contracts~\cite{oum2006hedging}, or consider the variance of consumers' random responsiveness when designing demand response contracts~\cite{aid2022optimal}. Moving toward more realistic settings, some research treats consumers' price responses as a decision-dependent source of uncertainty, influencing system operations such as unit commitment scheduling~\cite{lejeune2024profit}. However, existing approaches generally presume the form of risk aversion exogenously, either through ad hoc risk measures or utility assumptions, without explaining how risk-averse behaviors arise structurally from the decision model itself. Moreover, prior work has largely focused on modeling techniques and empirical estimation, rather than offering a theoretical analysis of the fundamental structure underlying demand risk-aware decision behavior. }

\subsection{\todo{Prudence and higher moment analysis}}
\todo{Prudence is a classical economic concept that captures the sensitivity of optimal decision-making to the presence of future risk~\cite{prudent_earliet}. While related to the notion of risk aversion first introduced by Pratt~\cite{pratt_risk}, prudence, introduced by Kimball from a precautionary saving perspective~\cite{prudent_earliet}, characterizes the agent’s willingness to adjust current decisions, such as saving more today, to prepare for future uncertainty. Formally, risk aversion is a lower-order risk attitude characterized by the second-order derivative of the utility function, whereas prudence corresponds to a higher-order risk attitude, associated with the third-order derivative~\cite{risk_averse}. As the order of risk attitudes increases, the model can capture the influence of higher-order moments of the uncertainty distribution, including variance, skewness, and kurtosis.
Several works have focused on mean-variance-skewness-kurtosis analysis, theoretically incorporating higher moments into risk-based optimization frameworks to reflect richer behavioral preferences and to distinguish risk-averse and risk-loving tendencies~\cite{zhang2020supply, ren2024risk}. Numerical algorithms have also been developed to solve these multi-moment optimization problems~\cite{nalpas2017portfolio}.}

\todo{As prudence has been shown to imply inherent risk-averse behavior through precautionary saving, it naturally leads to loss-averse decision-making when it comes to action~\cite{loss_averse}. From a theoretical perspective, prudent agents are proven to exert greater effort in earlier periods to accumulate larger wealth reserves for future risk periods~\cite{two_period_1}, even under deeper uncertainty, such as ambiguity, where the risk distribution is unknown~\cite{ambiguity}. Prudence has also been linked to skewness preference: agents are generally willing to bear a higher level of overall risk or accept lower expected payoffs when the risk is right-skewed, which naturally motivates left-skewness-seeking behavior~\cite{skewness_left}. This behavior aligns with the concept of prudence, wherein prudent individuals prefer to apportion risk asymmetrically to mitigate exposure to adverse outcomes~\cite{skewness_1}. Empirical evidence further supports these theoretical findings, demonstrating observed correlations between prudence and skewness preferences in experimental studies~\cite{experiment_2}.}

All of these behaviors are ultimately reflected in the optimal decision-making process under uncertainty. When focusing only on the time when risk materializes, prudence, through its higher-order properties, captures more nuanced behaviors than basic risk aversion by influencing discount factors across different outcome timings~\cite{time}. Prudence also implies that knowing less necessitates doing more~\cite{technology_risk}, leading individuals to become more accepting of the costs associated with risk management methods in the face of an unforeseen future~\cite{change_more}. This is because prudent individuals can compensate for uncertainty and lower their exposure to risk by preparing in advance, a behavior structurally reflected as precautionary effort~\cite{high_order}. \todo{Although many studies have examined optimal decision-making under uncertainty, the application and influence of prudence on electricity demand behavior has not yet been systematically explored, particularly within a sequential decision-making framework that incorporates time dependency, state transitions, and a dynamic cost structure.}

\section{Model and Preliminaries}
In this section, we formulate our model and introduce the definitions. We consider demand with a \emph{risk-neutral} cost-saving objective and linear system model responding to future electricity price uncertainty in a discrete time-varying system with stage $t\in [1,T]$. \todo{We denote the uncertain price at stage $t$ as $\lambda_t$, which comes from a distribution $\Lambda_{t}$, $\Lambda_{t} \in \mathcal{L}_{t}$. The power action and states at stage $t$ are denoted by $p_t, x_t$, respectively, where the state value can represent factors such as battery state-of-charge (SOC) or temperature. We use $C_t(x_t), G_t(p_t)$ to represent the state and action penalty cost functions, respectively. Here, we use the cost function to model the bounds on decision variables $x_t$ and $p_t$ (See Remark \ref{constraint_remark}).} The objective is to minimize the expectation of all stages' costs,
\begin{subequations} \label{original_problem}
\begin{align}
    \min_{p_t}&\;\mathbb{E}_{\Lambda_t} \sum_{t=1}^T \Big[\lambda_t p_t + C_t(x_{t}) + G_t(p_t)\Big] + V_T(x_T),\\
    \text{s.t. } & x_t  = A x_{t-1} + p_t, \label{cons} \\
    & \text{$p_t$ is non-anticipatory} \label{cons1}
 \end{align} 
\end{subequations}
where $V_T(x_T)$ is the end state-value function, representing the state value at the final stage for value continuity, and can be set to zero to show no final value. \eqref{cons} is the state transition constraint with the state discount factor $A\leq 1$. Finally, \eqref{cons1} states that the control is non-anticipatory to reflect the nature of multi-stage stochastic decision-making~\cite{shapiro2021lectures}.

\begin{definition} \label{def1} \emph{Normalized power and state cost.} We apply unit normalization and assume the system is in equilibrium at zero power and state to simplify the model and highlight our focus on disturbances and variations. For example, a HVAC system is set to control the room temperature to 25 Celsius, and then in our normalized system, the room temperature corresponds to $x_t = 0$ and $C_t(0) = 0$ represents no thermal discomfort. Formally, the normalization provides $C_t(0) = 0$ and $G_t(0) = 0$, with $C_t(x_t) \geq 0$ and $G_t(p_t) \geq 0$. We also define $G_t$ and $C_t$ as continuous and convex. Fig.~\ref{C,G,F} provides an example of the power and state cost functions.
\end{definition}

\begin{remark} \emph{\todo{Modeling hard and soft constraints.}}\label{constraint_remark}
\todo{We embed the state cost function $C_t$ and action cost function $G_t$ into the objective as penalty terms to represent the soft and hard bounds on states and actions. Lower-order functions, such as quadratics, impose moderate penalties and allow for controlled constraint violations, effectively modeling soft bounds. In contrast, higher-order functions, such as log-barrier or exponential functions, impose steep penalties as the state or action variables approach their limits, thus functionally equivalent to hard constraints.}
\end{remark}
\begin{figure}
    \centerline{\includegraphics[width=0.48\textwidth]{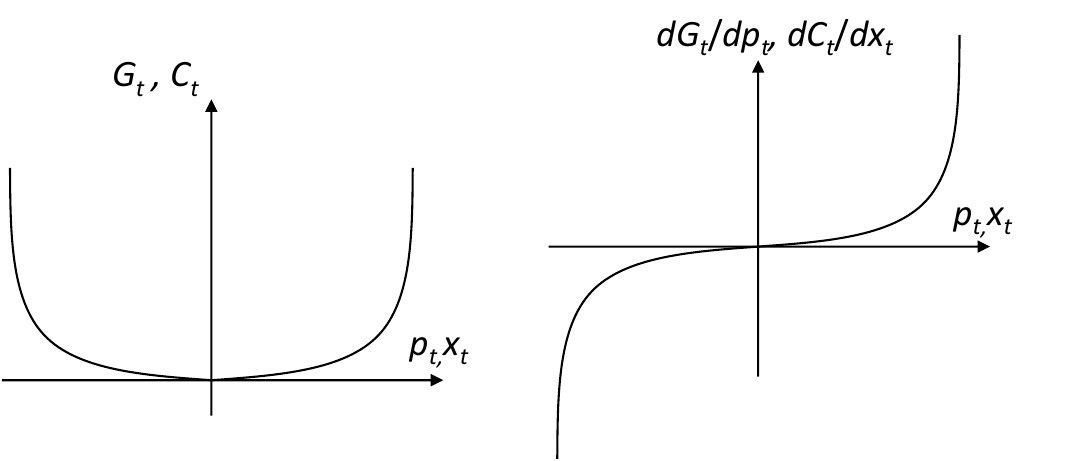}}
    \caption{Graph of function $C_t, G_t$ and their derivative.}
    \label{C,G,F}
\end{figure}

\begin{definition} \label{def2} \emph{Quadratic and super quadratic function.}
    We denote $a_{\mathrm{p}}$ as the function parameter and define the quadratic function as follows 
    \begin{subequations}
    \begin{align}
    G_t(p_t) = \frac{a_{\mathrm{p}}p_t^2}{2}, a_{\mathrm{p}} > 0,
\end{align}

    We define a superquadratic function as a function whose third-order derivative is positive for positive arguments,
    \begin{align}
        \frac{\partial^3 G_t(p_t)}{\partial p_t^3} >0 \text{ for } p_t>0.
    \end{align}          
    \end{subequations}
\end{definition}

\textbf{Stochastic dynamic programming reformulation.} To reflect the non-anticipative nature in the sequential decision-making under uncertainty, we use stochastic dynamic programming to reformulate \eqref{original_problem} by working backward and recursively solving a single-stage optimization, i.e., $\forall t\in [1,T]$: 
\begin{subequations}\label{eq1}
\begin{align}
    Q_{t-1} (x_{t-1}|\lambda_t) = &\min_{p_t} \lambda_t p_t + C_t (x_t) + G_t(p_t) + V_t(x_t)\\
    V_t(x_t) =& \mathbb{E}_{\Lambda_{t+1}}[Q_t(x_t|\lambda_{t+1})] \label{3b}\\
    \text{s.t.}\ x_t =& A x_{t-1} + p_t.  
\end{align}
\end{subequations}
where $Q_{t-1} (x_{t-1}|\lambda_{t})$ is the action-value function at stage $t-1$ parameterized by the price $\lambda_t$ at stage $t$, and $V_t(x_t)$ is the state-value function at stage $t$.

We show the notation used in the whole paper. We use $g_t(p_t) = \partial G_t(p_t)/ \partial p_t$, 
$c_t(x_t) = \partial C_t(x_t)/ \partial x_t$,
$v_t(x_t) = \partial V_t(x_t)/ \partial x_t$,
$q_t(x_{t}|\lambda_{t+1}) = \partial Q_t(x_t|\lambda_{t+1})/ \partial x_t$ to express the first-order derivative of the action cost function, state cost function, state-value function, and action-value function, respectively, and above $\cdot$ to express second-order derivative, such as $\dot{g}$. We also combine the derivative of state-related cost and express it as $h_t(x_t)$, i.e., $h_t(x_t)=c_t(x_t)+v_t(x_t)$. 

\section{Distribution-insensitive demand models}
In this section, we show the conditions for a demand model described in (\ref{eq1}) to be distribution-insensitive: the demand only responds to changes in future price expectations but is insensitive to changes in the price distribution if the expectations remain the same. The following theorem formally introduces this characteristic. 
\begin{theorem}\emph{Distribution-insensitive demand models.}\label{price_invariant_theorem}
    Consider the price $\lambda_{t}$ come from distribution $\Lambda_{t}, \Gamma_{t}$, $\{\Lambda_{t},\Gamma_{t}\} \in \mathcal{L}_{t}$ at stage $t$, \todo{where $\mathcal{L}_t$ denotes the set of distribution} with a fixed expectation $\mathbb{E}_{\Lambda_{t}}[\lambda_{t}] = \mathbb{E}_{\Gamma_{t}}[\lambda_{t}]$.
    Then a demand model described in (\ref{eq1}) with quadratic state cost function $C_t$ and action cost functions $G_t$ following Definitions \ref{def1} and \ref{def2} satisfies the \emph{distribution-insensitive} conditions \todo{that consumption at stage $t$ 
    is independent of all future distributions with fixed expectations at stage $\tau,\tau\in [t+1,T]$, i.e.,}
    % will not be affected by the distributions at stage $t+1$, i.e., 
    \begin{align}
        \todo{\mathbb{E}_{\Gamma_{\tau}} [Q_{t}(x_{t}|\lambda_{\tau})] 
        = \mathbb{E}_{\Lambda_{\tau}} [Q_{t}(x_{t}|\lambda_{\tau})],
         \forall \{\Lambda_{\tau},\Gamma_{\tau}\} \in \mathcal{L}_{\tau}.}\label{theorem1_equation}
        % p_t \perp \Lambda_{\tau}, 
    \end{align}         
    % where $\mu$ is the price comes from the distribution $\Lambda$.
\end{theorem}

\begin{proof}[Sketch of the proof]
    \todo{We first prove a one-step distribution-insensitive condition. To do so, we initially show that the derivative of the current action-value function equals the derivative of the future state-related cost, i.e., i.e., $q_{t}(x_{t}|\lambda_{t+1})=h_{t+1}(x_{t+1}(\lambda_{t+1}))$. Next, by taking the derivative of $h_{t+1}$ with regard to $\lambda_{t+1}$, we analyze the relationship between the derivative of the action-value function $q_{t}(x_t|\lambda_{t+1})$ and the future price $\lambda_{t+1}$. We then show that the derivative of the state-value function $v_t$ is a linear combination of all future functions $c_{\tau},\tau \in [t+1,T]$, indicating that, under the conditions specified in the theorem, $h_{t+1}(x_{t+1}(\lambda_{t+1}))$ is a linear function with regard to $\lambda_{t+1}$. Consequently, $q_t(x_t|\lambda_{t+1})$ is also linear in $\lambda_{t+1}$. We then prove that the linear relationship between $q_{t}(x_t|\lambda_{t+1})$ and $\lambda_{t+1}$ is both necessary and sufficient for satisfying the distribution-insensitive condition. Finally, we extend the one-step distribution-insensitive property to all future stages, thereby completing the proof of the theorem.}   
    The detailed proof is provided in the appendix.
\end{proof}

The theorem demonstrates that, under a quadratic state and action cost structure, the consumption decision of the demand model is independent of the future price distribution and depends only on its expectation. \todo{Specifically, (\ref{theorem1_equation}) shows that the action-value function $Q_t$ has the same expectation under different distributions, implying that the state-value function $V_t$, which captures the influence of future uncertainty at stage $t+1$, remains unchanged. Consequently, the demand model defined in (\ref{eq1}) at stage $t-1$ is equivalent under different future price distributions, leading to identical optimal solutions.} This result suggests that, in practice, even when future price uncertainty becomes more variable (while maintaining the same expectation), distribution-insensitive demand decisions in earlier stages remain unaffected.

\todo{In terms of risk-aware behavior, it is important to note that when state and action variables lie within a reasonable range, the price term $\lambda_t p_t$ dominates the objective function. As a result, from the perspective of cost, increasing price uncertainty or the presence of extreme price realizations may raise expected costs, which the demand model dislikes. However, by Theorem \ref{price_invariant_theorem}, the demand decisions do not change in response to this risk. This highlights a key limitation of the quadratic formulation: it is insufficient to capture true risk-aware decision behavior. To address this, we naturally extend the model to a superquadratic cost formulation to study richer behavioral responses under uncertainty, provided in the following corollary.}
% This reflects lower-order risk sensitivity, i.e., risk aversion, where the agent exhibits a general aversion to risk but does not adjust their decisions to respond to the risk~\cite{prudent_earliet, pratt_risk}. 
% This theorem rigorously shows the connection between stage $t$ and $t+1$, naturally motivating us to think about the relationship in a broader time horizon.

% \begin{corollary}\emph{Stage extrapolation}.
% Under the same conditions of Theorem \ref{price_invariant_theorem}, the distribution-insensitive demand model can be extrapolated to all future periods that the consumption at stage $t$ is independent of all future distributions with fixed expectations at stage $\tau$, $\tau \in [t+1,T]$.
% \end{corollary}

% \begin{proof}
%     We prove this Corollary by deduction. 
% \end{proof}

% This Corollary indicates the distribution-insensitive conditions hold for the price uncertainty in all future periods with fixed expectations. 
% Having established the conditions for distribution-insensitive behavior, we know quadratic demand formulation can't capture distributional influence. 
\begin{corollary}\label{corollary_1}\emph{Distribution-sensitive demand models.}
    Consider the price distribution described in Theorem \ref{price_invariant_theorem}, given that either or both super quadratic state cost function $C_t$ and action cost function $G_t$ following Definitions \ref{def1} and \ref{def2}, the demand model in (\ref{eq1}) becomes distribution-sensitive, except an extra case that has symmetrical price distributions with a mean of zero and demand's prior state is zero, i.e., $\mathbb{E}_{\Lambda_{t+1}}[\lambda_{t+1}] =0, x_{t-1}=0$. \label{corollarystate} 
\end{corollary}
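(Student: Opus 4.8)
The plan is to read the corollary as the logical negation of Theorem~\ref{price_invariant_theorem} together with one salvageable symmetric case, and to reuse the equivalence already established in that theorem's proof: the stage-$t$ consumption is distribution-insensitive if and only if the action-value derivative $q_t(x_t|\lambda_{t+1})$ is affine in $\lambda_{t+1}$, so that $v_t(x_t)=\mathbb{E}_{\Lambda_{t+1}}[q_t(x_t|\lambda_{t+1})]$ depends on $\Lambda_{t+1}$ only through its mean. Everything therefore hinges on when this affine property breaks. I would start from the envelope identity $q_t(x_t|\lambda_{t+1})=A\,h_{t+1}(x_{t+1}(\lambda_{t+1}))$ and the stage-$(t+1)$ first-order condition $\lambda_{t+1}+g_{t+1}(p_{t+1})+h_{t+1}(x_{t+1})=0$ with $p_{t+1}=x_{t+1}-Ax_t$, which implicitly defines $x_{t+1}(\lambda_{t+1})$ for a fixed $x_t$.

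First I would differentiate the first-order condition implicitly, obtaining $\partial x_{t+1}/\partial\lambda_{t+1}=-1/(\dot g_{t+1}+\dot h_{t+1})\ne 0$, and then differentiate $q_t$ twice in $\lambda_{t+1}$ to reach an expression of the form
\begin{equation*}
\frac{\partial^{2} q_t}{\partial\lambda_{t+1}^{2}}
=-A\,\frac{\partial x_{t+1}}{\partial\lambda_{t+1}}\cdot
\frac{(C_{t+1}'''+V_{t+1}''')\,\dot g_{t+1}-\dot h_{t+1}\,G_{t+1}'''}{(\dot g_{t+1}+\dot h_{t+1})^{2}},
\end{equation*}
where primes denote third derivatives. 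In the quadratic regime $G_{t+1}'''=C_{t+1}'''=V_{t+1}'''=0$ and the numerator vanishes, recovering the affine (insensitive) case of the theorem. If $C_t$ or $G_t$ is super-quadratic then at least one third derivative is nonzero and the numerator is (generically) nonzero on an interval, so $q_t$ is strictly convex or concave there; a mean-preserving spread of $\Lambda_{t+1}$ then changes $v_t(x_t)$ by Jensen's inequality and hence shifts the stage-$t$ first-order solution, exhibiting two equal-mean distributions with different consumption, i.e.\ distribution sensitivity.

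For the exceptional case I would exploit the symmetry of the normalized cost functions about the equilibrium together with symmetric, zero-mean prices. By backward induction I would show that every value function is even: assuming $V_{t+1}$ is even, the substitution $p_{t+1}\mapsto -p_{t+1}$ in the stage-$(t+1)$ minimization gives $Q_t(-x_t|\lambda_{t+1})=Q_t(x_t|-\lambda_{t+1})$, and taking expectations over a distribution symmetric about zero yields $V_t(-x_t)=V_t(x_t)$. Differentiating the relation $Q_t(-x_t|\lambda_{t+1})=Q_t(x_t|-\lambda_{t+1})$ in $x_t$ at $x_t=0$ shows that $q_t(0|\lambda_{t+1})$ is an odd function of $\lambda_{t+1}$, so $v_t(0)=\mathbb{E}_{\Lambda_{t+1}}[q_t(0|\lambda_{t+1})]=0$ for \emph{every} symmetric zero-mean distribution. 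When $x_{t-1}=0$ and the realized price equals its zero mean, the operating point is $x_t=0$, since the first-order condition $\lambda_t+g_t(0)+c_t(0)+v_t(0)=0$ holds with every term vanishing; because $v_t(0)=0$ independently of the stage-$(t+1)$ distribution, the optimal action does not move across such distributions, which is precisely the stated exception. This does not contradict the theorem's necessity of affineness, since here insensitivity is asserted only within the symmetric zero-mean class, over which an odd integrand always integrates to its value at the mean.

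The step I expect to be the main obstacle is ruling out accidental cancellation in the numerator $(C_{t+1}'''+V_{t+1}''')\dot g_{t+1}-\dot h_{t+1}G_{t+1}'''$, because $V_{t+1}'''$ is generated recursively by the dynamic program and need not vanish once some cost is super-quadratic. I would handle this by anchoring the argument at the terminal stage, where $V_T\equiv 0$ forces $V_T'''=0$ and the numerator collapses to $C_T'''\,G_T''-C_T''\,G_T'''$, which is plainly nonzero whenever exactly one of $C_T,G_T$ is super-quadratic; I would then propagate non-affineness backward, arguing that a super-quadratic cost at any stage forces $V_t'''\not\equiv 0$ so that the cancellation set is at most measure zero in $x_t$, which already suffices to produce a sensitivity-witnessing pair of distributions.
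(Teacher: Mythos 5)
Your proposal is correct and follows essentially the same route as the paper: both reduce the claim to the failure of the affine-in-$\lambda_{t+1}$ property of $q_t$ that Theorem~\ref{price_invariant_theorem}'s proof shows is necessary and sufficient for insensitivity (the paper via the non-constancy of $\partial h_{t+1}/\partial\lambda_{t+1}$ from Lemma~\ref{q_t-1_lambda}, you via the equivalent nonvanishing of $\partial^{2}q_t/\partial\lambda_{t+1}^{2}$), and both dispose of the exceptional case by the symmetry of the normalized cost functions about the origin. Your explicit concern about cancellation between $C_{t+1}'''+V_{t+1}'''$ and $G_{t+1}'''$, together with the terminal-stage anchoring and backward propagation you propose to rule it out, addresses a point the paper's proof passes over silently, so that portion of your argument is if anything more careful than the published one.
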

\begin{proof}
The proof is provided in the appendix.
\end{proof}
 
\todo{This corollary indicates that a superquadratic demand formulation leads to a distribution-sensitive demand model. It is worth noting that the exceptional case where distribution-insensitive behavior still holds under a superquadratic formulation, as mentioned in the corollary, is generally unrealistic in practice, since price distributions are typically skewed and have nonzero means. }

\todo{Moreover, we emphasize that superquadratic demand formulations are not only theoretically valid but also behaviorally realistic. Consumers often exhibit higher-order utility structures. For instance, thermal discomfort tends to increase disproportionately when temperatures become extremely hot or cold, and many energy-consuming devices, such as batteries, operate under hard physical constraints (e.g., capacity limits). 
These real-world considerations naturally challenge the adequacy of the traditional quadratic, distribution-insensitive model and underscore the importance of using superquadratic formulations to better understand and capture demand behavior under uncertainty, which we explore in detail in the next section.}

% These two points naturally require a super quadratic demand formulation and result in distribution-sensitive behavior and challenge the quadratic distribution-insensitive model, which highlights using super quadratic formulations to study the demands' response behaviors under uncertain future conditions, where we discuss in the next section.

\section{Prudent demand models}
\todo{In this section, we study the distribution-sensitive demand model. Specifically, we introduce the concept of prudent demand, where the demand level adjusts proactively in response to future price distributions and expectations~\cite{skewness_left}. 
We show that when the state cost function $C_t$ is superquadratic and the action cost function $G_t$ is quadratic, the demand model exhibits prudence. We establish a connection between the future price distribution $\Lambda_{t+1}$ and the demand levels at earlier stages $p_{\tau},\tau \in [1,t]$. Furthermore, we reveal the skewness-aversion behavior of prudent demand: greater price skewness induces higher demand levels in earlier stages, and the magnitude of this adjustment increases with the degree of skewness.}

\todo{To analyze the prudent behavior of demand that is influenced by future price distributions, we first introduce special two-point price distributions as a foundational case. This serves as a basis for extending the framework to analyze more complex distributions.
Building upon this foundation, we then introduce the main Theorem.}
\begin{theorem}\emph{Prudent demand models}.\label{aggressive_theorem}
\begin{subequations}\label{price_model}
        Consider the price $\lambda_{t}$ come from two-point distributions $\Lambda_{t}(\gamma_t,\pi_t)$, $\Gamma_{t} (\gamma_t,\pi'_t)$
        % \{\Lambda_{t}, \Gamma_{\mathrm{l},t}, \Gamma_{\mathrm{r},t}\} \in \mathcal{L}_{t}$
        at stage $t$ satisfying the following given $w_{\Lambda_t},w_{\Gamma_{t}} \in [0,0.5]$:
        % \rev{I change the price model from $0 < \gamma_t < \pi_t < \pi'_t$ to the following}
        \begin{align}     
            -\mathbb{E}[\lambda_t] < \gamma_t < \pi_t &<\pi'_t, \label{price_constraint}\\
            -(1-w_{\Lambda_t})\gamma_t + w_{\Lambda_t}\pi_t &= \mathbb{E}_{\Lambda_t}[\lambda_t], \label{price_old}\\
             -(1-w_{\Gamma_{t}})\gamma_t + w_{\Gamma_{t}}\pi'_t &= \mathbb{E}_{\Gamma_{t}}[\lambda_t],  \label{new_price} \\
            \mathbb{E}_{\Lambda_t}[\lambda_t] = \mathbb{E}_{\Gamma_{t}}[\lambda_t] &= \mathbb{E}[\lambda_t] \geq 0. 
        \end{align}
        Then a demand model described in (\ref{eq1}), with super quadratic state cost function $C_t$ and quadratic action cost function $G_t$ following Definitions \ref{def1} and \ref{def2}, under the conditions of $x_0 = 0$, $\lambda_{\tau} = 0, \tau \in [1,t]$, satisfy the \emph{prudence} and its sensitivity conditions that demand’s consumption before stage $t$ shows the following properties with regard to price distributions at stage $t+1$:
    \begin{align}
        \mathbb{E}_{\Gamma_{\tau+1}} [Q_{\tau}(x_{\tau}|\lambda_{\tau+1})] \leq \mathbb{E}_{\Lambda_{\tau+1}} [Q_{\tau}(x_{\tau}|\lambda_{\tau+1})] \leq Q_{\tau}(x_{\tau}|\mathbb{E}_{\Lambda_{\tau+1}} [\lambda_{\tau+1}]),\quad  \forall \tau \leq t. \label{prudent1} 
    \end{align}    
\end{subequations}
\end{theorem}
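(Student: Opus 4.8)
The plan is to reduce the ordering in \eqref{prudent1} to a single analytic fact: that the marginal continuation value is a \emph{convex} function of the future price. The consumption before stage $t$ is pinned down by the stagewise optimality condition $g_\tau(p_\tau)+h_\tau(x_\tau)=-\lambda_\tau$, in which the only channel carrying future information is $v_\tau(x_\tau)=\mathbb{E}_{\Lambda_{\tau+1}}[q_\tau(x_\tau\mid\lambda_{\tau+1})]$; hence the prudent shift of demand is governed entirely by how the marginal action value $q_\tau=\partial Q_\tau/\partial x_\tau$ reacts to the price law. I would therefore first import the identity established in the proof of Theorem~\ref{price_invariant_theorem}, namely $q_\tau(x_\tau\mid\lambda_{\tau+1})=h_{\tau+1}\big(x_{\tau+1}(\lambda_{\tau+1})\big)$, which writes this marginal value as the state-related cost derivative evaluated at the optimal next state.

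Next I would compute the curvature of $\phi(\lambda):=h_{\tau+1}(x_{\tau+1}(\lambda))$ in $\lambda$. Differentiating the stage-$(\tau+1)$ optimality condition gives $\partial x_{\tau+1}/\partial\lambda_{\tau+1}=-1/(\dot g_{\tau+1}+\dot h_{\tau+1})$, and differentiating $\phi$ once more yields
\[
\phi''(\lambda)=\frac{\ddot h_{\tau+1}\,\dot g_{\tau+1}-\dot h_{\tau+1}\,\ddot g_{\tau+1}}{(\dot g_{\tau+1}+\dot h_{\tau+1})^{3}},
\]
where each added dot denotes one further differentiation. For the quadratic action cost of Definition~\ref{def2} we have $\dot g_{\tau+1}=a_{\mathrm p}$ and $\ddot g_{\tau+1}=0$, so $\phi''=a_{\mathrm p}\,\ddot h_{\tau+1}/(a_{\mathrm p}+\dot h_{\tau+1})^{3}$, which is nonnegative precisely when the state-related cost is super-quadratic with positive third derivative ($\ddot h_{\tau+1}\ge 0$). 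This is the analytic origin of prudence and explains why the affine/quadratic case of Theorem~\ref{price_invariant_theorem}, where $\phi''\equiv 0$, is distribution-insensitive.

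With convexity of $\phi=q_\tau(x_\tau\mid\cdot)$ in hand, all three inequalities follow from the convex order of the two-point laws. Fixing the low atom at $-\gamma_t$ and raising the high atom from $\pi_t$ to $\pi'_t>\pi_t$ at equal mean \eqref{price_old}--\eqref{new_price} forces $w_{\Gamma_t}<w_{\Lambda_t}$; splitting the mass $w_{\Lambda_t}$ at $\pi_t$ onto $\{-\gamma_t,\pi'_t\}$ mean-preservingly exhibits $\Gamma_t$ as a mean-preserving spread of $\Lambda_t$, so that $\delta_{\mathbb{E}[\lambda]}\le_{\mathrm{cx}}\Lambda_{\tau+1}\le_{\mathrm{cx}}\Gamma_{\tau+1}$. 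Evaluating the convex $\phi$ against this chain gives $\mathbb{E}_{\Gamma_{\tau+1}}[q_\tau]\ge\mathbb{E}_{\Lambda_{\tau+1}}[q_\tau]\ge q_\tau(x_\tau\mid\mathbb{E}[\lambda_{\tau+1}])$, i.e. the skewness/spread of the price raises the marginal value, and hence the prior demand, monotonically. The closing bound $q_\tau(x_\tau\mid\mathbb{E}[\lambda_{\tau+1}])\ge 0$ I would obtain from the normalization of Definition~\ref{def1} together with $x_0=0$, $\lambda_\tau=0$ for $\tau\le t$ and $\mathbb{E}[\lambda_{\tau+1}]\ge 0$, which keep the relevant optimal state on the side where $h_{\tau+1}=c_{\tau+1}+v_{\tau+1}\ge 0$ (exactly the side excluded by the degenerate zero-mean, zero-state case of Corollary~\ref{corollary_1}).

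Finally I would extend the single-stage argument to every $\tau\le t$ by backward induction, with inductive hypothesis that $V_{\tau+1}$ has nonnegative third derivative, so that $\ddot h_{\tau+1}=\ddot c_{\tau+1}+\ddot v_{\tau+1}\ge 0$ and the curvature formula keeps $\phi$ convex at that stage. The base case is the terminal $V_T$, and the step must show that the Bellman update $V_\tau(x_\tau)=\mathbb{E}[\min_{p}\{\cdots\}]$, composed with the state map $x_{\tau+1}=Ax_\tau+p$, preserves the sign of the third derivative. I expect this propagation to be the main obstacle: whereas in the quadratic case $v_\tau$ is an exact linear combination of the future $c_\tau$ and its third derivative vanishes identically, here I must control the third derivative of an implicitly defined value function across repeated minimization and expectation, and verify that the optimal-state map $x_{\tau+1}(\cdot)$ never flips the sign that the curvature formula relies on. The convexity of $G_t,C_t$ and the fact that the pre-risk prices $\lambda_\tau=0$ render those transitions deterministic should make the composition tractable, but this is the step demanding the most care.
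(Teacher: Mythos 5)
Your reduction to a convex-order statement is appealing, and one piece of it is genuinely right and arguably cleaner than the paper: fixing the low atom and pushing the high atom from $\pi_t$ to $\pi'_t$ at equal mean does exhibit $\Gamma_t$ as a mean-preserving spread of $\Lambda_t$, which is essentially the content the paper extracts by hand in Lemma~\ref{aggressive_theorem_lamma4} via an explicit derivative-and-secant computation. But the premise that carries your whole argument --- that $\lambda\mapsto q_\tau(x_\tau\mid\lambda)=A\,h_{\tau+1}(x_{\tau+1}(\lambda))$ is globally convex because $\ddot h_{\tau+1}\ge 0$ --- fails in the paper's setting, and the inequality you draw from it points the wrong way. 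The state cost is normalized and symmetric (Definition~\ref{def1}, Fig.~\ref{C,G,F}, and explicitly invoked in the proof of Lemma~\ref{aggressive_theorem_lamma2}), so $c_{\tau+1}$ is odd and $\ddot c_{\tau+1}=C_{\tau+1}'''$ changes sign at the origin: for the paper's log-barrier example, $\ddot c(x)>0$ for $x>0$ and $\ddot c(x)<0$ for $x<0$. Under the high price atom the optimal next state $x_{\pi,\tau+1}$ is negative, so your test function is concave there and convex near the low atom; it is neither globally convex nor globally concave on the support, and the chain $\delta_{\mathbb{E}[\lambda]}\le_{\mathrm{cx}}\Lambda\le_{\mathrm{cx}}\Gamma$ alone cannot close the argument. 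Moreover, prudence requires $v_\tau=\mathbb{E}[q_\tau]$ to \emph{decrease} under the spread (the stage-$\tau$ optimality condition $a_{\mathrm p}p_\tau+c_\tau(x_\tau)=-v_\tau(x_\tau)$ shows $x_\tau$ grows with $-v_\tau$, and the paper's Lemma~\ref{aggressive_theorem_lamma3} accordingly proves $-\mathbb{E}[q_t]>-q_t(x_t\mid\mathbb{E}[\lambda])$); your chain $\mathbb{E}_{\Gamma}[q_\tau]\ge\mathbb{E}_{\Lambda}[q_\tau]\ge q_\tau(x_\tau\mid\mathbb{E}[\lambda])$ would make $v_\tau$ larger and the prior demand smaller.

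The paper's Lemmas~\ref{aggressive_theorem_lamma1}--\ref{aggressive_theorem_lamma4} are precisely the machinery needed to cope with this sign-changing curvature, and they are the part your proposal omits: one first proves $x_t>0$ and $\mathbb{E}[\lambda_{t+1}]-a_{\mathrm p}Ax_t>0$ by contradiction from the optimality system, then uses the odd symmetry $\phi(-\lambda)=-\phi(\lambda)$ of the price-to-state map to reflect the atom lying in the concave region onto the convex side, and finishes with a two-case Jensen/secant-slope argument that works only because $\mathbb{E}[\lambda_{t+1}]\ge 0$ and $w\in[0,0.5]$. Your proposed backward induction on the sign of $V_\tau'''$ inherits the same obstruction (that sign will not be constant, for the same symmetry reason) and is also unnecessary here: since $\lambda_\tau=0$ for $\tau\le t$ and $V_{t+1}$ is set to zero, the paper collapses stages $1,\dots,t$ into a single deterministic program with only the two stage-$(t+1)$ scenarios, so no third-derivative propagation through the Bellman recursion is ever required. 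To rescue the convex-order route you would need to establish convexity of $-q_\tau(x_\tau\mid\cdot)$ on the half-line $\lambda\ge a_{\mathrm p}Ax_t$ and treat the low atom separately via the reflection --- at which point you have reconstructed Lemma~\ref{aggressive_theorem_lamma3}.
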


\begin{proof}[Sketch of the proof]
    This theorem shows that the future two-point price distribution with fixed expectations affects the current state-value function of the demand, and then affects the demands' actions in the earlier stages. Here, we study the right skewness condition, and the distribution $\Gamma$ is more right-skewed compared with the distribution $\Lambda$, and the left skewness follows the same analysis, which we show in the remark.
    The overview of this proof is to find the causal relationship between future price distribution and prior actions, with connecting variables and functions, i.e., 
    $\Lambda_{t+1} \sim q_{t} \sim v_t \sim x_t \sim p_{\tau,\forall \tau \in[1,t]}$. Specifically, we first connect future price distribution $\Lambda_{t+1}(\gamma_{t+1},\pi_{t+1})$ with the derivative of current state-value function $v_t$ by the model definition.
    Then, we connect the state and action $x_t, p_t$ to the future price $\gamma_{t+1},\pi_{t+1}$.
    This helps us rewrite the state-value function derivative $v_t$ as a function of future price. Then, we analyze the property of the reverse state-value function derivative $-v_t$, and include $\Gamma_{t+1}$ distributions to show the sensitivity of prudent demand.
    The proof consists of four general steps:
    \begin{itemize}
        \item \todo{We write out the extended form for the first stage optimization problem with future price and take the optimality conditions regarding actions~\cite{convex}. Then, we show that the derivative of the state-value function $v_t$ in the optimality conditions can be written as a function of future price.}
        \item From the optimality conditions of actions (states) at stage $t+1$, we reveal the function relationship between price $\gamma_{t+1}, \pi_{t+1}$ and state $x_{t+1}$, and show its monotonicity, symmetry, and concavity. 
        \item By using the price and state function property from the last step, we show that the reverse state-value function derivative $-v_t$ in the optimality conditions is positive and greater when using price distributions than when using price expectations, and the state value $x_t$ follows.
        \item To analyze the optimality conditions with a more right-skewed price distribution $\Gamma_{t+1}$, we further prove the sensitivity of the reverse state-value function derivative $-v_t$. We show a strict increase in $x_t$ with the right price skewness. Combined with the objective function and state transition, we show that all earlier stages' states and actions should be non-negative and non-decreasing with the skewness and complete the proof of Theorem \ref{aggressive_theorem}.
    \end{itemize}
    The full version of the proof is provided in the appendix.
\end{proof}

\begin{remark}\emph{Mirror theorem for left skewness.}\label{remark}
From Theorem \ref{aggressive_theorem}, demands show prudence with right-skewness price distributions $\Lambda, \Gamma$. As the system is linear and the cost function is symmetrical, the prudence and its sensitivity conditions also stand (reverse) for left skewed price distribution $\Gamma_{\mathrm{l}}(\gamma',\pi)$ and $\Lambda (\gamma,\pi)$ from Theorem \ref{aggressive_theorem}, i.e., $-\mathbb{E}[\lambda] <\gamma'<\gamma<\pi$, $-(1- w_{\mathrm{l}})\gamma' + w_{\mathrm{l}}\pi = \mathbb{E}[\lambda] \geq 0, w_{\Lambda},w_{\mathrm{l}}\in[0.5,1]$.
\end{remark}

This theorem highlights that our demand model, with time dependency, state transitions, and a high-order cost structure, a setting that naturally arises in human-centric decision-making contexts, exhibits prudent behavior. Specifically, when consumers are notified of an event with skewed price uncertainty at a future stage $t+1$, the action (demand) levels in all preceding periods either increase or remain unchanged, reflecting precautionary saving behavior. 
This also highlights that the state value change follows
$$
x_{\tau}^\Gamma \geq x_{\tau}^\Lambda \geq x_{\tau}^{\mathbb{E}_{\Lambda}} \geq 0, \quad \forall \tau \leq t,
$$
where for $\mathcal{D}\in\{\Gamma,\Lambda,\mathbb{E}_{\Lambda}\}$, $x_{\tau}^{\mathcal{D}}$ denotes the optimal demand at stage $\tau$ when the stage-$(t{+}1)$ price follows $\mathcal{D}$.
Importantly, the changes in demand levels are influenced by both the price expectation and the distributional characteristics. To eliminate the influence of normal (non-event) price fluctuations on the behavioral analysis, we normalize the price levels prior to the event to a reference value of zero. 
Our results align with the classical notion of prudence (Kimball~\cite{prudent_earliet}), whose defining condition $u'''>0$, where $u$ is the agent's utility over consumption, so that risk raises its expectation and induces precautionary behavior. Because our risk is carried by the price rather than additively by the state, the corresponding condition is convexity of the marginal state-value $-q_t(x_t\mid\lambda_{t+1})$ in the price $\lambda_{t+1}$---supplied by the super-quadratic state cost ($\partial^3 C_t/\partial x_t^3>0$ for $x_t>0$)---which reduces the expected marginal cost of holding $x_t>0$.
% We conclude this based on the function relationship of price $\lambda$ and state $x$ from the Lemma \ref{aggressive_theorem_lamma2} in the proof of Theorem \ref{aggressive_theorem}.

 % incorporating the full distribution leads to greater adjustments than relying only on the expected value, as demonstrated in (\ref{prudent1}). 

\todo{This preparatory behavior reflects an inherent risk-averse response in the demand model, rather than only bearing higher expected costs. Accordingly, our model offers a theoretical explanation for the first principle of risk-averse decision behavior and suggests that the commonly used risk-averse formulations are approximations of a more realistic, higher-order structure.
Moreover, greater skewness in the future price distribution leads to even higher demand levels in earlier periods, capturing the skewness aversion behavior. Specifically, when comparing more skewed price distributions $\Gamma$ with identical expectation, the state value $x_{t}$ strictly increases with the skewness of the distribution.}

\todo{In practice, the prudent behavior identified by our model suggests that pre-informing customers about upcoming uncertain events can enhance their preparatory actions, leading to increased savings in the state-dependent demand component and providing additional backup capacity to the system. Accordingly, operators should account for consumer behavior before the event when planning generation or curtailment strategies, rather than focusing solely on the event itself. For example, utilities may need to schedule additional generation in advance or implement price incentives to mitigate unintended demand peaks that arise from precautionary saving behavior ahead of the event. Moreover, regarding the skewness aversion behavior, operators should emphasize anticipating and preparing for tail-risk events, which have a greater impact on consumers' demand patterns. It also highlights the critical role of utilities in controlling the price ceiling to which consumers are exposed, as excessive exposure to highly variable prices can lead to unexpected demand patterns. For instance, utilities should avoid exposing retail consumers to variable prices, such as wholesale market prices, which often exhibit significant variation and skewness.}
% this theorem indicates we may obtain unexpected results when designing events. For example, the utility company may notify the consumers of a demand response event ahead of time and expect the consumer to shift their demand away from the peak time, but the demand shows prudence that will be prepared for the event ahead of time, which may cause another demand peak.

Note that we only imply the decision behavior from stage 1 to $t$ before the event. We directly show the increase of the reverse state-value function derivative and the state value just ahead of the event (at stage $t$). We obtain actions and states in earlier stages from the cost function structures and the state transition process. Obviously, the state value increment at stage $t$ will be distributed to the prior stages. Specifically, with quadratic action cost and no discount rate ($A=1$), the state value increment is evenly distributed across all the prior stages for the demand model because the derivative of the quadratic cost is linear. If considering the discount rate ($A<1$), the increment is distributed more in the later stages, and the influence of price distribution on the state value will be eliminated after a limited state transitions. This motivates us to provide the following Corollary.
\begin{corollary}\emph{Strict conditions}.
    Considering the price distributions and demand model described in Theorem \ref{aggressive_theorem} with the discount rate $A<1$, there exists a stage $\tau_0$, $\tau_0 < t$, such that the prudent conditions become strictly stands: 
    \begin{align}
        \mathbb{E}_{\Gamma_{\tau+1}} [Q_{\tau}(x_{\tau}|\lambda_{\tau+1})]  < \mathbb{E}_{\Lambda_{\tau+1}} [Q_{\tau}(x_{\tau}|\lambda_{\tau+1})] < Q_{\tau}(x_{\tau}|\mathbb{E}_{\Lambda_{\tau+1}} [\lambda_{\tau+1}]), \quad 
        \forall \tau_0 < \tau \leq t.\label{strictly_conditions}
    \end{align}  
\end{corollary}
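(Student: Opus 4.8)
The plan is to upgrade the weak chain \eqref{prudent1} to the strict chain \eqref{strictly_conditions} by showing that, for stages close enough to the event, the optimal pre-event state $x_\tau$ is \emph{strictly} positive, which is exactly what makes the prudence (third-order) effect strictly active. From the proof of Theorem~\ref{aggressive_theorem} I take as given the pre-event facts that the state just before the event is strictly positive and strictly ordered across the three worlds, $x_t^{\Gamma} > x_t^{\Lambda} > x_t^{\mathrm{exp}} \ge 0$ with $x_t^{\Lambda}>0$ (writing $x^{\mathrm{exp}}$ for the world that uses the deterministic price $\mathbb{E}_{\Lambda_{t+1}}[\lambda_{t+1}]$), and that $v_t$ is strictly convex in $x_t$ for $x_t>0$ so that its skewness/third-order contribution is strictly positive there and degenerates only at the symmetric point $x=0$. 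The two remaining tasks are then (i) to propagate strict positivity of the states backward from $t$ only over a finite window, and (ii) to convert ``$x_\tau>0$'' into ``each Jensen-type inequality in the chain is strict.''

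First I would write the backward optimality (co-state) conditions for the pre-event stages $\tau\le t$, where the prices are deterministically zero. With the quadratic action cost $G_\tau(p_\tau)=\tfrac{a_{\mathrm p}}{2}p_\tau^2$ (so $g_\tau(p_\tau)=a_{\mathrm p}p_\tau$) and the state recursion $x_\tau=Ax_{\tau-1}+p_\tau$ of \eqref{cons}, stationarity gives $a_{\mathrm p}p_t=-h_t(x_t)$ at the last pre-event stage and the recursion $a_{\mathrm p}p_\tau = a_{\mathrm p}A\,p_{\tau+1}-c_\tau(x_\tau)$ for $\tau<t$, together with the non-negativity $x_\tau,p_\tau\ge 0$ established in Theorem~\ref{aggressive_theorem}. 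Since $c_\tau(x_\tau)\ge 0$, this yields the envelope bound $0\le p_\tau\le A\,p_{\tau+1}\le A^{\,t-\tau}p_t$: with $A<1$ the optimal actions, and hence the accumulated states and the inter-world differences $\Delta x_\tau=x_\tau^{\Gamma}-x_\tau^{\Lambda}$, are squeezed by a geometrically decaying envelope as $\tau$ decreases. This is precisely the ``increment distributed more in the later stages'' phenomenon, and it is the mechanism distinguishing $A<1$ from the $A=1$ case, where the increment spreads evenly and every $x_\tau$, $\tau\ge 1$, stays strictly positive.

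Next I would define $\tau_0$ as the largest stage below which the decaying envelope, combined with the complementarity $p_\tau=\max\{0,\,A\,p_{\tau+1}-c_\tau(x_\tau)/a_{\mathrm p}\}$ coming from optimality and $p_\tau\ge 0$, pins the optimal state at the floor $x_\tau=0$ in all three worlds, i.e.\ the stage at which the event's influence is ``eliminated after limited state transitions.'' Because strict positivity holds at stage $t$ and positivity is preserved under one backward step whenever the propagated push $A\,p_{\tau+1}$ dominates the marginal state cost $c_\tau(x_\tau)/a_{\mathrm p}$, the set of stages on which $x_\tau>0$ contiguously from $t$ is a nonempty window $(\tau_0,t]$ with $\tau_0<t$. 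On this window the strict ordering at $t$ propagates to $x_\tau^{\Gamma}>x_\tau^{\Lambda}>x_\tau^{\mathrm{exp}}>0$, since the linearized backward map has strictly positive gain on the active (interior) stages.

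Finally, for each $\tau\in(\tau_0,t]$ I would convert $x_\tau>0$ into the strict value inequalities. At a strictly positive operating state the relevant downstream value derivative carries a strictly positive third-order (prudence) component, so the Jensen step $\mathbb{E}_{\Lambda_{\tau+1}}[Q_\tau(x_\tau|\lambda_{\tau+1})]>Q_\tau(x_\tau|\mathbb{E}_{\Lambda_{\tau+1}}[\lambda_{\tau+1}])$ is strict for the non-degenerate two-point law, while the skewness comparison $\mathbb{E}_{\Gamma_{\tau+1}}[Q_\tau]>\mathbb{E}_{\Lambda_{\tau+1}}[Q_\tau]$ is strict because $\Gamma_{\tau+1}$ is strictly more right-skewed, using $\pi'_{\tau+1}>\pi_{\tau+1}$ from \eqref{price_constraint}; the closing positivity $Q_\tau>0$ is the least delicate part, inherited from \eqref{prudent1} once $x_\tau>0$. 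Assembling these over $\tau\in(\tau_0,t]$ gives \eqref{strictly_conditions}. I expect the main obstacle to be the middle step: rigorously pinning down $\tau_0$ from the coupled backward co-state recursion, and in particular proving that the geometric $A^{\,t-\tau}$ envelope together with the non-negativity/complementarity of the optimal actions genuinely drives the state to \emph{exactly} zero (not merely small) before $\tau_0$, so that the window of strictness is truly finite and the contrast with the $A=1$ case is sharp.
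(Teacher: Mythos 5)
Your proposal follows essentially the same route as the paper: both arguments rest on the geometric decay $x_{\tau}\approx A^{t-\tau}x_t$ of the event's influence under $A<1$, combined with the strict ordering of states already established at stage $t$ in the proof of Theorem~\ref{aggressive_theorem}. The paper's own proof is explicitly an intuitive sketch (it works with $\approx$ throughout and never pins down $\tau_0$ exactly), so your attempt to make the backward co-state recursion precise is, if anything, more careful. Two remarks, though. First, the complementarity $p_\tau=\max\{0,\,A\,p_{\tau+1}-c_\tau(x_\tau)/a_{\mathrm p}\}$ you invoke is not part of the model: the actions are unconstrained decision variables and the stationarity conditions hold with equality; non-negativity of $p_\tau$ and $x_\tau$ is a \emph{derived} property of the optimum, not an active constraint, so with smooth super-quadratic state costs the state decays geometrically but never hits exactly zero after finitely many backward steps. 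Second, and as a consequence, the obstacle you flag at the end --- rigorously proving the state is driven to \emph{exactly} zero before $\tau_0$ --- is both likely false and unnecessary: the corollary only asserts the \emph{existence} of some $\tau_0<t$ with strictness on $(\tau_0,t]$, not that strictness fails for $\tau\le\tau_0$, so taking $\tau_0=t-1$ and citing the strict chain at stage $t$ from the proof of Theorem~\ref{aggressive_theorem} already suffices, and the decay envelope is needed only to push the window further back. With that adjustment your argument closes, in the same semi-rigorous spirit as the paper's own.
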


\begin{proof}
   We provide an intuitive proof of this corollary. Due to the discount factor $A$ with $A<1$, the influence of the state value at stage $t$ diminishes gradually as it propagates backward through earlier stages. This degradation eventually results in a zero action after a certain number of transitions (denote this stage as $t-\tau_0$), where $x_{\tau} \approx 0$ for all $\tau \leq \tau_0$. The stage $\tau_0$ approximately satisfies:
    \begin{align}
        x_{\tau_0} \approx A^{t-\tau_0} x_t.
    \end{align}
    
   This relationship reflects the degradation of the state-value function through transitions. When $A^{t-\tau_0} \approx 0$, we have $x_{\tau_0} \approx 0$, implying that the influence of the event occurring at stage $t+1$ is effectively eliminated after $t-\tau_0$ transitions. As a result, the action $p_{\tau},\forall \tau \leq \tau_0$ will not be affected by the price uncertainty at stage $t+1$.
    Thus, during stages $(\tau_0,t]$, we conclude that the strict prudent conditions for the demands hold.
\end{proof}

\todo{This corollary establishes the strict conditions under which prudent demand behavior arises, specifically identifying the exact periods where demand levels change, typically shortly before the event occurs. Beyond the extrapolation of strict conditions, we also note that Theorem \ref{aggressive_theorem} characterizes \emph{prudent} behavior and \emph{its sensitivity conditions} under a two-point price distribution. We subsequently extend the analysis to address more complex price distributions with fixed expectations. The prudent condition can be naturally generalized to arbitrary skewed price distributions. This is because any skewed distribution can be discretized into a combination of two-point price distributions with the same expectation, and each of these two-point pairs satisfies the prudent condition outlined in Theorem \ref{aggressive_theorem}. Thus, the overall demand model still exhibits prudence by aggregating these two-point distributions. We then provide an example in the following corollary to extend the sensitivity conditions of prudent demand to more complex price distributions.}
% This can be achieved by discretizing price distributions into combinations of two-point price distributions with the same expectations.
% which motivates us to provide the following Corollary.
% \begin{corollary}
%     \emph{Prudent demand distribution extrapolation.}\label{distribution_extrapolation}
%     Given skewed price distributions $\Lambda_t$ at stage $t$, the demand model described in Theorem \ref{aggressive_theorem} satisfies the prudent conditions, i.e.,
%     \begin{align}
%         \mathbb{E}_{\Lambda_{t+1}}[Q_t(x_t|\lambda_{t+1})] > Q_t(x_t|\mathbb{E}_{\Lambda_{t+1}}[\lambda_{t+1}]) >0
%         \label{corollary_distribution}
%     \end{align}
% \end{corollary}
% % \todo{should the right hand side be $$ instead of 0?}
% \begin{proof}
%     We give an intuitive analysis to show this Corollary. Every skewed distribution can be discretized into many two-point price pairs, with the same expectations. From Theorem \ref{aggressive_theorem}, denote superscript $*$ as price point, all price pair satisfy the prudent condition that $-(1-w_{\Lambda_{t+1}})Q_t(x_t|\gamma^*_{t+1}) + w_{\Lambda_{t+1}}Q_t(x_t|\pi^*_{t+1}) > Q_t(x_t|\mathbb{E}_{\Lambda_{t+1}}[\lambda_{t+1}])> 0$. Combining all price pairs, the demand model still shows prudence as (\ref{corollary_distribution}) and proves the Corollary.
% \end{proof}

\begin{corollary}\emph{Prudent demand sensitivity extrapolation.} \label{distribution} 
\begin{subequations}\label{complicated_pricemodel}
Consider two price distributions $\Lambda_t$ and $\Gamma_t$ at stage $t$ with probability density functions (PDFs) $f_{\Lambda_t}(\lambda_t)$ and $f_{\Gamma_t}(\lambda_t)$. Denote variable $\lambda_t$ as $\gamma_t,\pi_{\Lambda,t}$ for distribution $\Lambda_t$, $\gamma_t<\mu_t,\pi_{\Lambda,t}\geq \mu_t$, and $\gamma_t,\pi_{\Gamma,t}$ for distribution $\Gamma_t$, $\gamma_t<\mu_t,\pi_{\Gamma,t}\geq \mu_t$. Then
defined over a real set $\mathcal{X}_t$ satisfying
\begin{align}
 \mathbb{E}_{\Gamma_t}[\gamma_t] + \mathbb{E}_{\Gamma_t}[\pi_{\Gamma,t}] 
 =  \mathbb{E}_{\Lambda_t}[\gamma_t] + \mathbb{E}_{\Lambda_t}[\pi_{\Lambda,t}] = \mu_t \geq 0 \label{complicated_pricemodel1},\\
    f_{\Gamma_t}(\gamma_t) > f_{\Lambda_t}(\gamma_t), \forall \gamma_t \in \mathcal{X}_t, \label{complicated_pricemodel2}\\
    f_{\Gamma_t}(\pi_{\Gamma,t}) \leq f_{\Lambda_t}(\pi_{\Lambda,t}),  \forall \pi_{\Gamma,t} > \pi_{\Lambda,t}, \{\pi_{\Gamma,t}, \pi_{\Lambda,t}\} \in \mathcal{X}_t, \\
    f_{\Gamma}(\pi_{\Gamma})+f_{\Gamma}(\gamma) < f_{\Lambda}(\pi_{\Lambda})+f_{\Lambda}(\gamma), \forall \{\pi_{\Gamma,t}, \pi_{\Lambda,t}\} \in \mathcal{X}_t.\label{seperate_condition}\\
    f_{\Gamma_t}(\pi_{\Gamma,t}) \leq f_{\Gamma_t}(\gamma_t),  \forall \{\pi_{\Gamma,t}, \gamma_t\} \in \mathcal{X}_t,  \\
    f_{\Lambda_t}(\pi_{\Lambda,t}) \leq f_{\Lambda_t}(\gamma_t),  \forall \{\pi_{\Lambda,t}, \gamma_t\} \in \mathcal{X}_t, \label{complicated_pricemodel5} 
\end{align}
Then, it is sufficient for the demand model to show the sensitivity of prudence as described in Theorem \ref{aggressive_theorem},
\begin{align}
    \mathbb{E}_{\Gamma_{t+1}}[Q_t(x_t|\lambda_{t+1})] < \mathbb{E}_{\Lambda_{t+1}}[Q_t(x_t|\lambda_{t+1})] 
\end{align}
\end{subequations}
\end{corollary}

\begin{proof}
    The proof is provided in the appendix.
\end{proof}

This Corollary extrapolates the prudent demand sensitivity condition from a two-point price distribution to a continuous price distribution satisfying given conditions. Also, with discrete price distributions, by applying the PDF conditions (\ref{complicated_pricemodel}) to the probability mass function (PMF), the prudent demand sensitivity condition still holds.

\section{Case Study}
\todo{In this section, we first introduce an illustrative example and conduct a numerical simulation to demonstrate the prudent behavior of the demand model and perform a sensitivity analysis to investigate how key parameters influence the prudent outcomes. Following this, we apply our framework to a real-world case study to verify the effectiveness of the prudent demand formulation.}

\todo{We set the terminal state-value function to zero and assume a discount factor of $A=1$ unless otherwise specified. The parameter of the quadratic action cost function is set to $a_{\mathrm{p}}=1$. We adopt a log-barrier function to represent the state cost function, which imposes steep penalties near the boundaries. The boundary limit is set as $x_{\mathrm{max}} = 20$. The state cost function is defined as follows:}
\begin{subequations}
\begin{align}
    C_t(x_t) &= -\alpha_{\mathrm{c}} \ln(x_{\mathrm{max}} - x_{t}) - \alpha_{\mathrm{c}} \ln(x_{\mathrm{max}} + x_{t}) + 2\alpha_{\mathrm{c}} \ln x_{\mathrm{max}}, \label{state_cost}
    \end{align}
    \todo{where $\alpha_{\mathrm{c}}$ is a function parameter, set to 0.5. The first and second term indicates the upper and lower bounds of $x_t$, respectively, and the third term is the regularization to ensure $C(0)=0$. Taking the derivative of the state cost function with respect to $x_t$, we have
    \begin{align}
    c_t(x_t) &= \frac{\partial C_t(x_t)}{\partial x_t} = \frac{\alpha_{\mathrm{c}}}{x_{\mathrm{max}} - x_{t}} - \frac{\alpha_{\mathrm{c}}}{x_{\mathrm{max}} + x_{t}}.
\end{align}}
\end{subequations}

\subsection{An illustration example}
We first use a two-stage illustration example with a two-point price distribution $\gamma,\pi$, with fixed expectation $-(1-w)\gamma + w\pi = \mathbb{E}[\lambda] \geq 0$, and $w\in[0,0.5], \pi>\gamma>0$, to show the performance of the prudent demand. We set the event to happen at the 2nd stage and analyze the 1st stage's state value and demand level. The objective function is: 
\begin{subequations}
\begin{align}
    Q_0(x_0|\lambda_1,\gamma,\pi) &=\min_{p_1,p_{\mathrm{\pi},2},p_{\mathrm{\gamma},2}} 
    [\lambda_{1} p_{1} + \frac{a_{\mathrm{p}}}{2}p_{1}^2 - \alpha_{\mathrm{c}} \ln(x_{\mathrm{max}} - x_{1})
    - \alpha_{\mathrm{c}} \ln(x_{\mathrm{max}} + x_{1}) 
    + 2 \alpha_{\mathrm{c}} \ln x_{\mathrm{max}} ]   \nonumber \\
    &+w[\pi p_{\mathrm{\pi},2} + \frac{a_{\mathrm{p}}} {2}p_{\mathrm{\pi},2}^2 -
    \alpha_{\mathrm{c}} \ln(x_{\mathrm{max}} - x_{\mathrm{\pi},2}) 
    -\alpha_{\mathrm{c}} \ln(x_{\mathrm{max}} + x_{\mathrm{\pi},2}) + 2 \alpha_{\mathrm{c}} \ln x_{\mathrm{max}}]  \nonumber \\
    &+(1-w) [-\gamma p_{\mathrm{\gamma},2} + \frac{a_{\mathrm{p}}} {2}p_{\mathrm{\gamma},2}^2 
    -\alpha_{\mathrm{c}} \ln(x_{\mathrm{max}} 
    - x_{\mathrm{\gamma},2}) 
    - \alpha_{\mathrm{c}} \ln(x_{\mathrm{max}} + x_{\mathrm{\gamma},2}) + 2 \alpha_{\mathrm{c}} \ln x_{\mathrm{max}}] \\
    \text{s.t. } &x_1 = x_0 + p_1,\ 
    x_{\mathrm{\gamma},2} = x_1 + p_{\mathrm{\gamma},2}, \ 
    x_{\mathrm{\pi},2} = x_1 + p_{\mathrm{\pi},2}. \label{illustra1}
\end{align}
\end{subequations}
Taking the optimality conditions with regards to $p_1,p_{\mathrm{\gamma},2},p_{\mathrm{\pi},2}$ and take $\lambda_1= 0$ and $x_0=0$ inside, 
\begin{subequations}
    \begin{align}
        p_1: &a_{\mathrm{p}} p_1 + \frac{\alpha_{\mathrm{c}}}{x_{\mathrm{max}} - x_{1}} - \frac{\alpha_{\mathrm{c}}}{x_{\mathrm{max}}+ x_{1}} 
        +w [\frac{\alpha_{\mathrm{c}}}{x_{\mathrm{max}} - x_{\mathrm{\pi},2}} - \frac{\alpha_{\mathrm{c}}}{x_{\mathrm{max}}+ x_{\mathrm{\pi},2}} ] \nonumber\\
        &+(1-w) [\frac{\alpha_{\mathrm{c}}}{x_{\mathrm{max}} - x_{\mathrm{\gamma},2}} - \frac{\alpha_{\mathrm{c}}}{x_{\mathrm{max}}+ x_{\mathrm{\gamma},2}}] =0  \\
        p_{\mathrm{\pi},2}: &\pi + a_{\mathrm{p}} p_{\mathrm{\pi},2} +  \frac{\alpha_{\mathrm{c}}}{x_{\mathrm{max}} - x_{\mathrm{\pi},2}} - \frac{\alpha_{\mathrm{c}}}{x_{\mathrm{max}}+ x_{\mathrm{\pi},2}} =0 \\
        p_{\mathrm{\gamma},2}: &-\gamma + a_{\mathrm{p}} p_{\mathrm{\gamma},2} + \frac{\alpha_{\mathrm{c}}}{x_{\mathrm{max}} - x_{\mathrm{\gamma},2}} - \frac{\alpha_{\mathrm{c}}}{x_{\mathrm{max}}+ x_{\mathrm{\gamma},2}} =0 
    \end{align}
    among them, (\ref{illustra1}) also stands.
\end{subequations}

We set $\gamma = 1$, and gradually increase $\pi$ starting from 1, while maintaining $\mathbb{E}[\lambda] = 0$ to observe the influence of the price distribution. Fig. \ref{illu} shows the state value at the 1st stage. \todo{When $\pi=1$, the price distribution is symmetric, resulting in a state value of zero. As $\pi$ increases, the price distribution becomes skewed, and the state value shifts positively. Moreover, the state value increases as $\pi$ grows, demonstrating prudent demand's skewness-aversion behavior.}
\begin{figure}
\centering
\begin{minipage}{0.45\textwidth}
  \centering
  \includegraphics[width=0.75\textwidth]{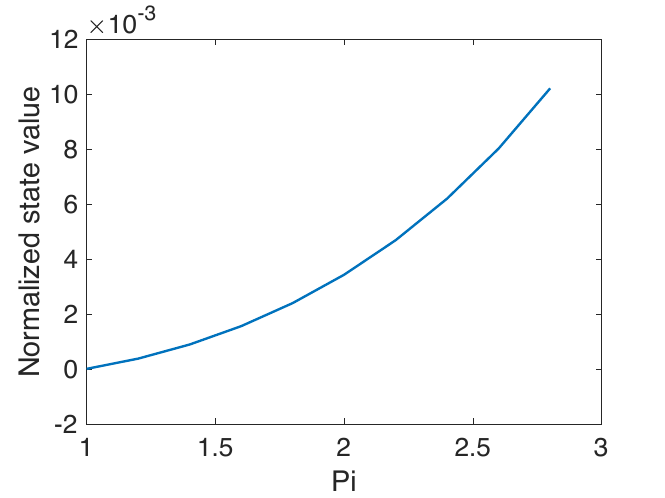}
  \caption{State value at 1st stage.}
  \label{illu}
\end{minipage}\hfill
\begin{minipage}{0.45\textwidth}
  \centering
  \includegraphics[width=0.75\textwidth]{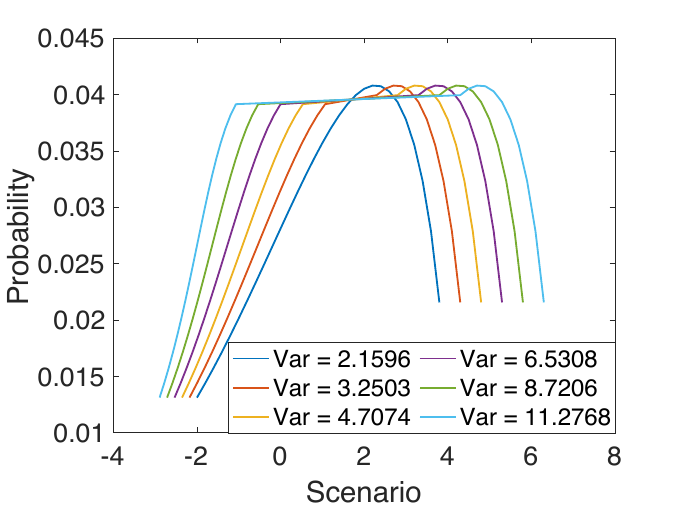}
  \caption{Uncertain price distribution.}
  \label{Con_distri}
\end{minipage}
\end{figure}

\subsection{Prudent demand}
We design a case study with 24 stages, i.e., $T=24$ (time slots), to verify the theoretical analysis of prudent demand behavior. In this setup, we construct six skewed price distributions, each with increasing variance but the same expectation. We assume that the uncertain event occurs at the 10th stage. With a fixed expectation and skewed distribution, greater variance in the price distribution corresponds to greater skewness. We discretize each price distribution into 30 scenarios with the occurrence probability of each scenario. Fig. \ref{Con_distri} shows the detailed price distributions.

\todo{We solve the problem using the stochastic dynamic programming framework described in (\ref{eq1}). Figure \ref{continuous}(a) shows the error curve with respect to the iteration number. The error is measured by the $l_2$-norm of the state value differences between successive iterations. The algorithm is considered converged once the state values stabilize. As shown, the error curve exhibits good convergence behavior, and the computation time is approximately 2 seconds for solving one price distribution.
Figure \ref{continuous}(b) presents the 24-stage state values under the first price distribution. The state values from stages 1 to 9 gradually increase until the event at stage 10, indicating changes in the demand level across all prior stages. This pattern reflects the risk-averse behavior of prudent demand, where the system proactively prepares before the uncertain event occurs. This insight suggests that operators should schedule generation or curtailment plans in advance to accommodate the proactive behavior of consumers. Alternatively, operators could consider announcing potential uncertain events ahead of time to leverage consumers' preparatory actions, rather than focusing only on responses when the event occurs. }
% , and the overall cost reduces with a higher state value at stage 9, i.e., just before the event happens. 

% We use stochastic dynamic programming as we described in (\ref{eq1}) to solve the problem, and Fig. \ref{continuous} (b) shows the error curve with the iteration number. The error is calculated by the $l_2$-norm of state value differences between two nearby iterations, and the algorithm converges when the state doesn't change. The error curve shows a good convergence performance, and the calculation time is nearly 2s for one price distribution.

\todo{Figure \ref{continuous}(c) illustrates the sensitivity condition of prudent demand. First, the state value at all prior stages increases across all price distributions. More importantly, the prudent demand shows skewness-averse behavior as the rate of increase (i.e., the slope) of the state value becomes steeper as the skewness of the price distribution increases. Theoretically, this behavior arises from the superquadratic cost function structure, where increases in the state value lead to disproportionately higher costs. In practical terms, this suggests that tail-risk events have a greater impact on changes in consumers' demand patterns. Accordingly, operators should focus more on tail-risk events and carefully manage the variance and price ceilings to which consumers are exposed to mitigate unexpected changes in consumers' demand patterns.}
% This indicates that more demand has been prepared ahead of time, and shows the sensitivity of prudent demand, i.e., the degree of skewness aversion. 
\begin{figure}
\centering
\subfigure[\todo{Convergence curve under 1st price distribution}]{\includegraphics[width=0.32\textwidth]{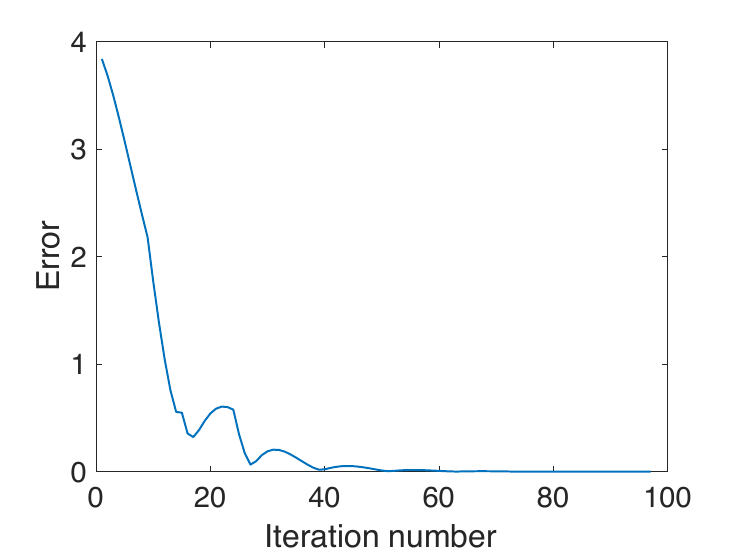}}
\subfigure[\todo{Normalized state value under 1st price distribution}]{\includegraphics[width=0.32\textwidth]{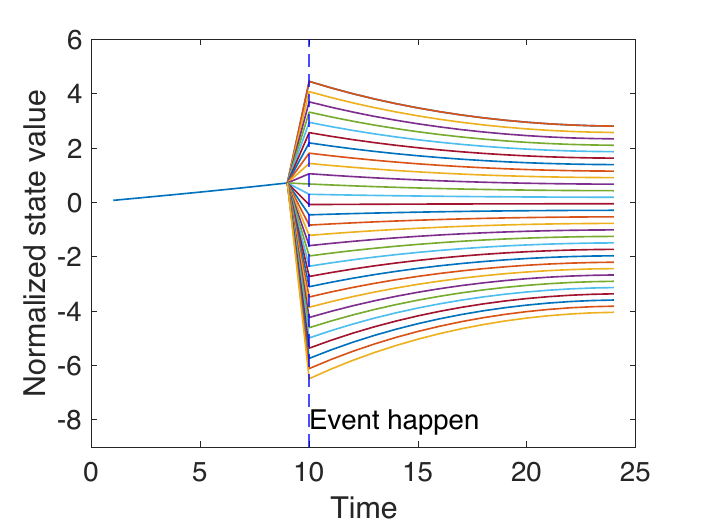}}
\subfigure[Normalized state value under all price distributions]{\includegraphics[width=0.32\textwidth]{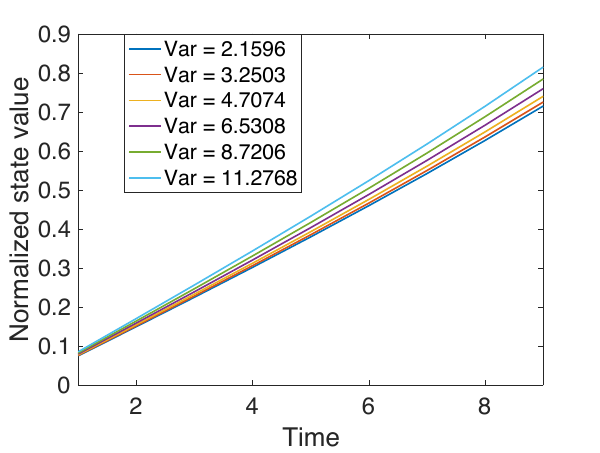}}
\caption{\todo{Prudent demand simulation results.}}
\label{continuous}
\end{figure}

\subsection{\todo{Sensitivity analysis of model parameters}}
\todo{We conduct a sensitivity analysis regarding key model parameters, specifically the discount factor $A$ and the state cost function parameter $\alpha_c$. These two parameters are selected because analyzing the action cost function parameter would yield similar insights to analyzing the state cost function parameter, and the superquadratic nature of the state cost function imposes a more significant effect on behavior. Moreover, varying the boundary limit of the state variable in the log-barrier function does not provide additional insight, as it is intuitively clear that a higher boundary would simply allow greater variation in the state value. Otherwise, since we normalize both power and state costs, and set the non-event price to zero to isolate the event’s influence, there are no other parameters that merit detailed analysis. For the following analysis, we fix the second price distribution as the reference setting.}

We vary the discount factor $A$ from 0.1 to 1 in increments of 0.1, and present the corresponding state values for stages 1 to 9 in Fig.~\ref{sensitivity}(a). As expected, a higher discount factor results in less loss during state transitions, leading to a greater accumulation of state value across stages. This, in turn, causes the influence of the event at stage 10 to have a stronger impact on earlier stages, as the effects of future uncertainty propagate more easily backward through the decision stages. Conversely, under a very low discount factor (e.g., $A = 0.1$), the influence of future uncertainty cannot effectively propagate to previous stages, thereby diminishing the precautionary saving behavior. These results indicate that prudent demand decision behavior is highly sensitive to the discount factor. Therefore, operators should carefully consider the discount structure when scheduling generation or designing curtailment plans in anticipation of prudent consumer responses.

\todo{We then vary the state cost function parameter $\alpha_c$ from 0.1 to 2.2 in increments of 0.3, and present the corresponding state values in Fig.~\ref{sensitivity}(b). Unlike the case with the discount factor, precautionary saving behavior persists across all parameter settings. Higher values of $\alpha_c$ imply greater potential loss from uncertainty at the event time for the same state value, thereby inducing stronger precautionary saving behavior. However, a saturation effect emerges: beyond a certain threshold, further increases in $\alpha_c$ do not significantly amplify savings. This is because, at very high parameter values, the cost of precautionary saving begins to outweigh the cost of expected risk at the event time. Therefore, operators should also carefully consider the magnitude of the state cost parameter when evaluating or quantifying the degree of precautionary savings.}

\begin{figure}
\centering
\subfigure[\todo{Normalized state value under different discount factor $A$}]{\includegraphics[width=0.4\textwidth]{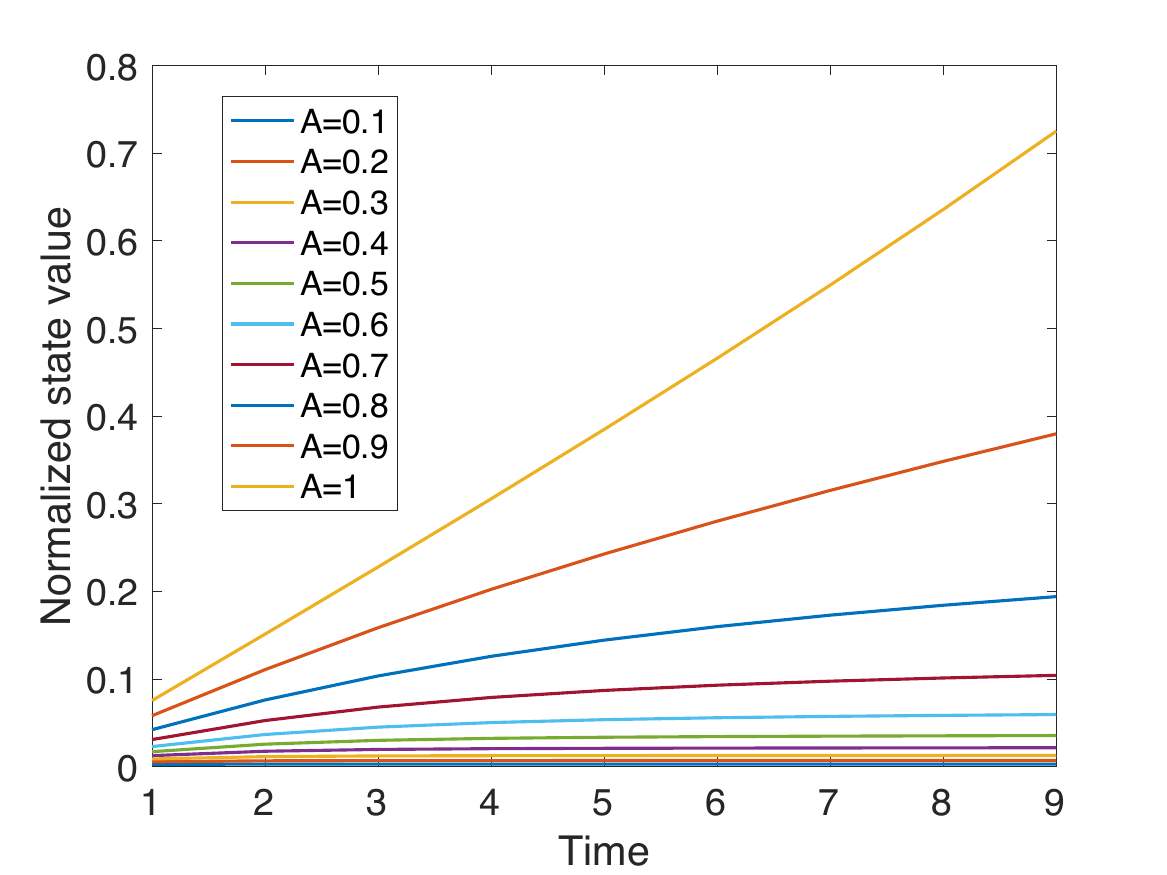}}
\subfigure[\todo{Normalized state value under different state cost function parameter $\alpha_c$}]{\includegraphics[width=0.4\textwidth]{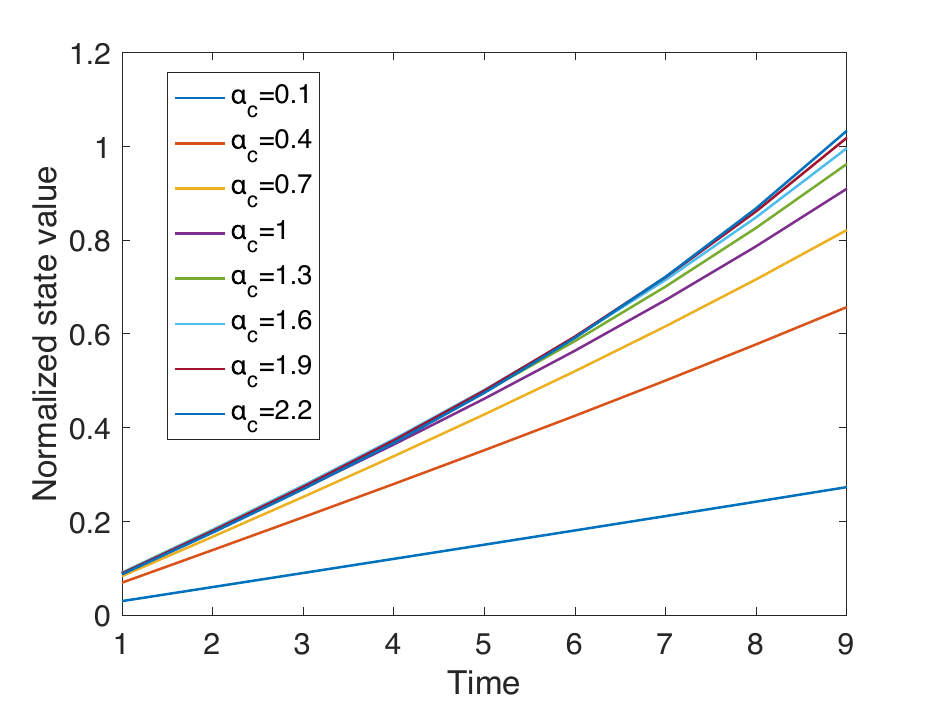}}
\caption{\todo{Sensitivity analysis of model parameters.}}
\label{sensitivity}
\end{figure}

\subsection{\todo{Real-world applications}}
\todo{In this section, we explore a real-world application scenario to demonstrate prudent demand behavior, specifically by comparing the performance of two modeling approaches: the quadratic demand model and the superquadratic demand model. Our goal is to show that the superquadratic formulation naturally induces prudent decision behavior. We adopt the modeling framework and dataset from a previous study~\cite{xu2020operational}, which uses real-world data from the New York independent system operator (NYISO). The dataset includes day-ahead (DA) prices and real-time (RT) prices from the year 2018, as well as a probability forecast of RT prices on February 1, based on historical DA–RT price biases observed during January 2018. Note that although we have access to the realized RT prices, we assume that RT prices are correlated with DA prices and can be partially inferred from them. Therefore, we use the empirical distribution of DA–RT differences from January to construct the distribution of price uncertainty for forecasting RT prices on February 1. The empirical distribution of DA-RT price differences is shown in Fig.~\ref{real_case}(a), which confirms the skewed nature of real-world price uncertainty.}

\todo{The model focuses on a battery system, treated as a specific form of demand, that performs arbitrage to maximize profit under non-anticipatory price uncertainty. We refer to the superquadratic model as the one used in the previous study~\cite{xu2020operational}, where hard bounds on charging/discharging rates and the state of charge (SoC) naturally result in a superquadratic cost structure. Hence, the analytical approach used previously to compute the value function derivative is directly applicable. To compare, we develop a relaxed quadratic model that introduces quadratic penalty terms in the objective for both charging/discharging actions and SoC levels, following the structure defined in Definition~\ref{def2}, with penalty coefficients set to 5. Using the same dual decomposition method, we derive a fixed-point equation in the dual variable that reflects the marginal value of SoC with penalty, and solve it numerically for each price realization. By taking the expectation over the price distribution, we obtain the value function derivative under uncertainty.}

\todo{Fig.~\ref{real_case}(b) presents the marginal value of SoC for both the superquadratic and quadratic models. The results show that large variance occurs during the evening times 20-21. Under the superquadratic model, the marginal value of SoC exhibits greater variability overall, particularly showing an increasing trend (mostly positively skewed) compared to the quadratic model, reflecting the prudent behavior. The quadratic model is more likely to capture the DA price, which corresponds to the mean value of RT price uncertainty. Specifically, the maximum marginal value of the SoC increases significantly prior to the high variance observed at time 20. Although the exact pattern may also be affected by intertemporal uncertainty and mutual price-state dependencies, this real-world application roughly shows superquadratic formulation provides richer and more responsive decision behavior. These findings suggest that incorporating superquadratic modeling allows operators to better identify and anticipate consumer demand patterns in the presence of uncertain future prices.}

\begin{figure}
\centering
\subfigure[\todo{Price distribution from NYISO}]{\includegraphics[width=0.4\textwidth]{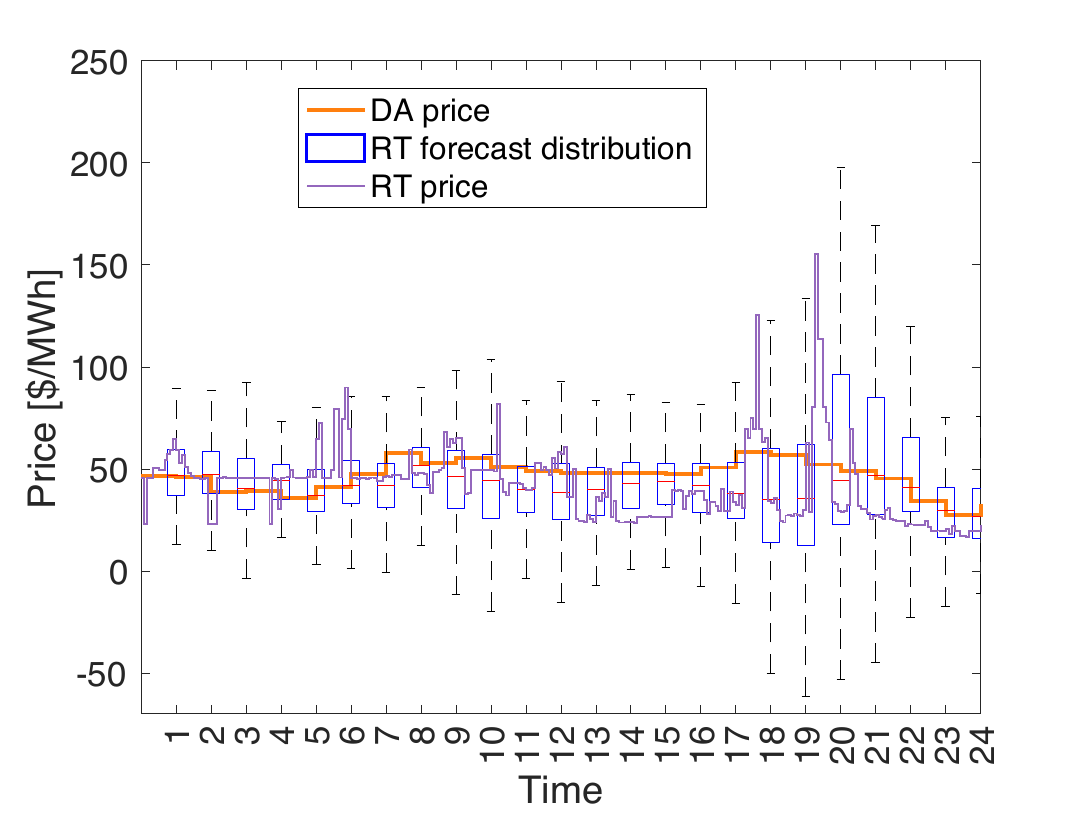}}
\subfigure[\todo{Battery SoC marginal value range under quadratic and superquadratic formulation}]
{\includegraphics[width=0.4\textwidth]{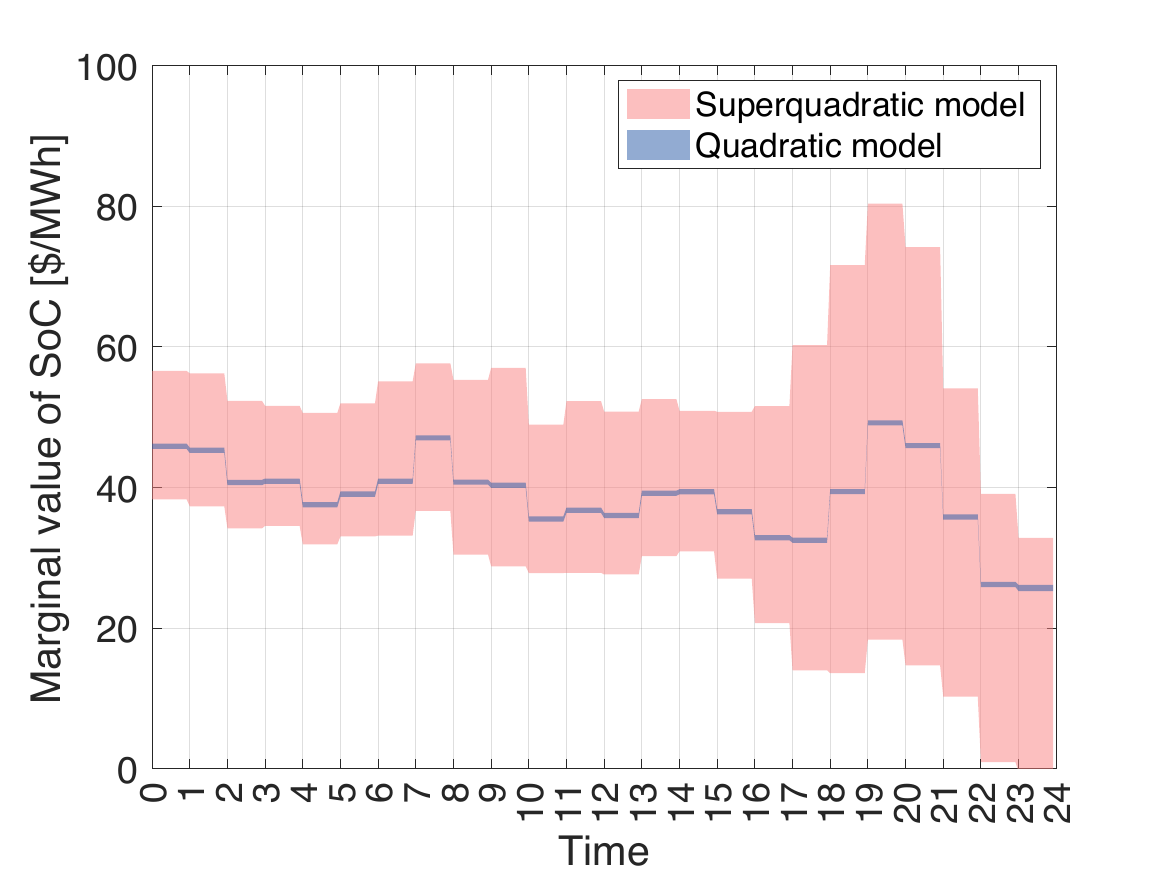}}
\caption{\todo{Real-world application.}}
\label{real_case}
\end{figure}

\section{Discussion and Conclusion}
\todo{We propose a theoretical framework to model demand decision behavior that combines non-anticipatory multi-stage stochastic decision making with non-quadratic cost functions. The framework reveals that demand with quadratic state cost functions, such as thermal discomfort, exhibits distribution-insensitive behavior, its action unaffected by changes in the distribution shape under fixed expectations. In contrast, demand behavior becomes prudent when the state cost formulation is superquadratic, a condition naturally arising in human-centric decision-making. Furthermore, we prove prudent demand exhibits skewness aversion, where more skewed distributions motivate greater precautionary savings. These results uncover fundamental behavioral properties in demand response and offer several insights for practical applications.} 
% behavior in response to price uncertainties, in which demand changes its consumption before the price event to prepare for the uncertainty, and this change also increases with the price distribution skewness. 

\todo{From a practical perspective, our findings suggest that practitioners and policy makers should adopt more sophisticated demand models, either through higher-order utility function formulations or the adoption of more accurate risk terms, especially considering physical and behavioral response limitations. Doing so can more accurately capture real demand behavior and better account for demand pattern changes to efficiently design time-varying tariffs. We also highlight that prudent behavior naturally leads to preparatory savings in the state-dependent components of demand, thereby providing additional backup capacity to the system ahead of emergencies. Consequently, operators should not only focus on the event time itself but also schedule additional generation or implement price incentives in advance to prevent unintended demand peaks. We also verify through numerical simulation that discounting factors and the sensitivity of demand to state value changes are important for accurately quantifying prudent behavior. Finally, our results emphasize the importance of anticipating and preparing for tail-risk events and controlling the price ceilings exposed to customers, for example, avoiding the direct exposure of retail consumers to highly volatile wholesale market prices, to maintain stable and controllable system operations. }

% the demand behavior contributes to many applications. For example, the dynamic pricing tariff mechanism design should consider another demand peak in advance when issuing an incentive-based demand response event for electric vehicles and consumers; bidding strategies designed for battery or virtual power plants should consider the 'pre-preparations' behavior. 
\todo{Future research will extend this framework by moving from the individual demand model to the agent setting, investigating whether prudent behavior persists when consumers act as price-makers rather than price-takers, and formulating decision frameworks incorporating strategic interactions among agents within a multi-agent decision-making environment.}

% \section{Acknowledgement}
% This work was supported by the National Science Foundation under award ECCS-2239046.

\bibliographystyle{apalike}
\bibliography{ref}

@article{pennings2003shape,
  title={The shape of utility functions and organizational behavior},
  author={Pennings, Joost ME and Smidts, Ale},
  journal={Management Science},
  volume={49},
  number={9},
  pages={1251--1263},
  year={2003},
  publisher={INFORMS}
}

@article{schuhmacher2021justifying,
  title={Justifying mean-variance portfolio selection when asset returns are skewed},
  author={Schuhmacher, Frank and Kohrs, Hendrik and Auer, Benjamin R},
  journal={Management Science},
  volume={67},
  number={12},
  pages={7812--7824},
  year={2021},
  publisher={INFORMS}
}

@article{ansarin2022review,
  title={A review of equity in electricity tariffs in the renewable energy era},
  author={Ansarin, Mohammad and Ghiassi-Farrokhfal, Yashar and Ketter, Wolfgang and Collins, John},
  journal={Renewable and Sustainable Energy Reviews},
  volume={161},
  pages={112333},
  year={2022},
  publisher={Elsevier}
}

@article{horowitz2014equity,
  title={Equity in residential electricity pricing},
  author={Horowitz, Shira and Lave, Lester},
  journal={The Energy Journal},
  volume={35},
  number={2},
  pages={1--24},
  year={2014},
  publisher={SAGE Publications Sage CA: Los Angeles, CA}
}

@article{aid2022optimal,
  title={Optimal electricity demand response contracting with responsiveness incentives},
  author={A{\"\i}d, Ren{\'e} and Possama{\"\i}, Dylan and Touzi, Nizar},
  journal={Mathematics of Operations Research},
  volume={47},
  number={3},
  pages={2112--2137},
  year={2022},
  publisher={INFORMS}
}

@article{lejeune2024profit,
  title={Profit-based unit commitment models with price-responsive decision-dependent uncertainty},
  author={Lejeune, Miguel A and Dehghanian, Payman and Ma, Wenbo},
  journal={European Journal of Operational Research},
  volume={314},
  number={3},
  pages={1052--1064},
  year={2024},
  publisher={Elsevier}
}

@article{powell2019unified,
  title={A unified framework for stochastic optimization},
  author={Powell, Warren B},
  journal={European journal of operational research},
  volume={275},
  number={3},
  pages={795--821},
  year={2019},
  publisher={Elsevier}
}

@article{oum2006hedging,
  title={Hedging quantity risks with standard power options in a competitive wholesale electricity market},
  author={Oum, Yumi and Oren, Shmuel and Deng, Shijie},
  journal={Naval Research Logistics (NRL)},
  volume={53},
  number={7},
  pages={697--712},
  year={2006},
  publisher={Wiley Online Library}
}

@article{zhang2020supply,
  title={Supply chains involving a mean-variance-skewness-kurtosis newsvendor: Analysis and coordination},
  author={Zhang, Juzhi and Sethi, Suresh P and Choi, Tsan-Ming and Cheng, TCE},
  journal={Production and Operations Management},
  volume={29},
  number={6},
  pages={1397--1430},
  year={2020},
  publisher={SAGE Publications Sage CA: Los Angeles, CA}
}

@article{ren2024risk,
  title={Risk-aversion versus risk-loving preferences in nonparametric frontier-based fund ratings: A buy-and-hold backtesting strategy},
  author={Ren, Tiantian and Kerstens, Kristiaan and Kumar, Saurav},
  journal={European Journal of Operational Research},
  volume={319},
  number={1},
  pages={332--344},
  year={2024},
  publisher={Elsevier}
}

@article{ebert2024first,
  title={First-Order Prudence and Its Implications for Precautionary Savings and the Risk-Free Rate},
  author={Ebert, Sebastian and Karehnke, Paul},
  journal={Operations Research},
  year={2024},
  publisher={INFORMS}
}

@article{nalpas2017portfolio,
  title={Portfolio selection in a multi-moment setting: A simple Monte-Carlo-FDH algorithm},
  author={Nalpas, Nicolas and Simar, L{\'e}opold and Vanhems, Anne},
  journal={European Journal of Operational Research},
  volume={263},
  number={1},
  pages={308--320},
  year={2017},
  publisher={Elsevier}
}

@misc{PJM,
   title = {DR Opportunities for End-Use customers in Wholesale Market},
 author = {{PJM Demand Response}},
   howpublished = {\url{https://www.pjm.com/-/media/DotCom/markets-ops/dsr/end-use-customer-fact-sheet.pdf}},
year = {2021}
}

@misc{4CP_ercot,
   title = {Overview of Demand
Response in ERCOT},
 author = {Kenan Ögelman},
   howpublished = {\url{https://www.ercot.com/files/docs/2023/05/19/ERCOT_Demand_Response__Summary_Spring_2023-update.pdf}},
year = {2016}
}

@misc{PJM_CP,
   title = {PJM’s Economic Demand response Program},
 author = {{CPower}},
   howpublished = {\url{https://cpowerenergy.com/wp-content/uploads/2017/08/PJM_Economic_Demand_Response_Program.pdf}},
year = {2017}
}

@inproceedings{baosen_cdc,
  title={Mitigation of Coincident Peak Charges via Approximate Dynamic Programming},
  author={Dowling, Chase P and Zhang, Baosen},
  booktitle={2019 IEEE 58th Conference on Decision and Control (CDC)},
  pages={4202--4207},
  year={2019},
  organization={IEEE}
}

@misc{4CP,
   title = {4CP Management System - A DYNAMIC SOLUTION TO MANAGING 4CP CHARGES},
 author = {{CPower}},
   howpublished = {\url{https://cpowerenergy.com/wp-content/uploads/2016/12/ERCOT_4CP_Web_Download.pdf}},
year = {2016}
}

@misc{NREL,
   title = {NREL Shows Live Grid Impacts From Total Solar Eclipse},
 author = {Moriah Petty},
   howpublished = {\url{https://www.nrel.gov/news/program/2024/nrel-shows-live-grid-impacts-from-the-total-solar-eclipse.html}},
year = {2024}
}

@misc{tesla,
   title = {Storm Watch},
note = {Retrieved Apr 27},
 author = {{Tesla}},
   howpublished = {\url{https://www.tesla.com/support/energy/powerwall/mobile-app/storm-watch?utm_source=chatgpt.com}},
year = {2025}
}

@article{DR_intro_earlist,
  title={Autonomous demand-side management based on game-theoretic energy consumption scheduling for the future smart grid},
  author={Mohsenian-Rad, Amir-Hamed and Wong, Vincent WS and Jatskevich, Juri and Schober, Robert and Leon-Garcia, Alberto},
  journal={IEEE transactions on Smart Grid},
  volume={1},
  number={3},
  pages={320--331},
  year={2010},
  publisher={IEEE}
}

@book{shapiro2021lectures,
  title={Lectures on stochastic programming: modeling and theory},
  author={Shapiro, Alexander and Dentcheva, Darinka and Ruszczynski, Andrzej},
  year={2021},
  publisher={SIAM}
}

@inproceedings{optimal_response_GM,
  title={Optimal demand response based on utility maximization in power networks},
  author={Li, Na and Chen, Lijun and Low, Steven H},
  booktitle={2011 IEEE power and energy society general meeting},
  pages={1--8},
  year={2011},
  organization={IEEE}
}

@article{Stackelberg_game,
  title={Dynamic pricing and distributed energy management for demand response},
  author={Jia, Liyan and Tong, Lang},
  journal={IEEE Transactions on Smart Grid},
  volume={7},
  number={2},
  pages={1128--1136},
  year={2016},
  publisher={IEEE}
}

@article{DR_qiadratic_review,
  title={Optimisation of demand response in electric power systems, a review},
  author={Jordehi, A Rezaee},
  journal={Renewable and sustainable energy reviews},
  volume={103},
  pages={308--319},
  year={2019},
  publisher={Elsevier}
}

@article{DR_qiadratic_reason,
  title={A survey on demand response in smart grids: Mathematical models and approaches},
  author={Deng, Ruilong and Yang, Zaiyue and Chow, Mo-Yuen and Chen, Jiming},
  journal={IEEE Transactions on Industrial Informatics},
  volume={11},
  number={3},
  pages={570--582},
  year={2015},
  publisher={IEEE}
}

@article{optimal_response,
  title={Optimal demand response: Problem formulation and deterministic case},
  author={Chen, Lijun and Li, Na and Jiang, Libin and Low, Steven H},
  journal={Control and optimization methods for electric smart grids},
  pages={63--85},
  year={2012},
  publisher={Springer}
}

@article{learning_data,
  title={Data-driven targeting of customers for demand response},
  author={Kwac, Jungsuk and Rajagopal, Ram},
  journal={IEEE Transactions on Smart Grid},
  volume={7},
  number={5},
  pages={2199--2207},
  year={2015},
  publisher={IEEE}
}

@article{learning_model_free,
  title={Data-driven modelling of energy demand response behaviour based on a large-scale residential trial},
  author={Antonopoulos, Ioannis and Robu, Valentin and Couraud, Benoit and Flynn, David},
  journal={Energy and AI},
  volume={4},
  pages={100071},
  year={2021},
  publisher={Elsevier}
}

@article{feedback,
  title={Reinforcement learning for demand response: A review of algorithms and modeling techniques},
  author={V{\'a}zquez-Canteli, Jos{\'e} R and Nagy, Zolt{\'a}n},
  journal={Applied energy},
  volume={235},
  pages={1072--1089},
  year={2019},
  publisher={Elsevier}
}

@article{learning_parameter,
  title={A distributed online pricing strategy for demand response programs},
  author={Li, Pan and Wang, Hao and Zhang, Baosen},
  journal={IEEE Transactions on Smart Grid},
  volume={10},
  number={1},
  pages={350--360},
  year={2017},
  publisher={IEEE}
}

@article{robust,
  title={Energy pricing and dispatch for smart grid retailers under demand response and market price uncertainty},
  author={Wei, Wei and Liu, Feng and Mei, Shengwei},
  journal={IEEE transactions on smart grid},
  volume={6},
  number={3},
  pages={1364--1374},
  year={2014},
  publisher={IEEE}
}

@inproceedings{cvar2,
  title={Optimal pricing for residential demand response: A stochastic optimization approach},
  author={Jia, Liyan and Tong, Lang},
  booktitle={2012 50th Annual Allerton Conference on Communication, Control, and Computing (Allerton)},
  pages={1879--1884},
  year={2012},
  organization={IEEE}
}

@article{intro_1,
  title={Toward a consumer-centric grid: A behavioral perspective},
  author={Saad, Walid and Glass, Arnold L and Mandayam, Narayan B and Poor, H Vincent},
  journal={Proceedings of the IEEE},
  volume={104},
  number={4},
  pages={865--882},
  year={2016},
  publisher={IEEE}
}

@article{doe_document,
  title={Benefits of demand response in electricity markets and recommendations for achieving them},
  author={{US Dept. Energy}},
  journal={US Dept. Energy, Washington, DC, USA, Tech. Rep},
  year={2006},
  publisher={Citeseer}
}

@inproceedings{xu2020operational,
  title={Operational valuation of energy storage under multi-stage price uncertainties},
  author={Xu, Bolun and Korp{\aa}s, Magnus and Botterud, Audun},
  booktitle={2020 59th IEEE Conference on Decision and Control (CDC)},
  pages={55--60},
  year={2020},
  organization={IEEE}
}

@ARTICLE{battery,
  author={Zheng, Ningkun and Jaworski, Joshua and Xu, Bolun},
  journal={IEEE Transactions on Power Systems}, 
  title={Arbitraging Variable Efficiency Energy Storage Using Analytical Stochastic Dynamic Programming}, 
  year={2022},
  volume={37},
  number={6},
  pages={4785-4795},
  keywords={Energy storage;Real-time systems;Dynamic programming;Power system economics;Optimal control;Markov processes;Energy storage;power system economics;stochastic optimal control},
  doi={10.1109/TPWRS.2022.3154353}}

@article{thermal_nonlinear,
  title={Model predictive control for thermal energy storage and thermal comfort optimization of building demand response in smart grids},
  author={Tang, Rui and Wang, Shengwei},
  journal={Applied Energy},
  volume={242},
  pages={873--882},
  year={2019},
  publisher={Elsevier}
}

@misc{prudent_earliet,
  title={Precautionary Saving in the Small and in the Large},
  author={Kimball, Miles S},
  year={1989},
  booktitle={{National Bureau of Economic Research Cambridge, Mass., USA}}
}

@incollection{pratt_risk,
  title={Risk aversion in the small and in the large},
  author={Pratt, John W},
  booktitle={Uncertainty in economics},
  pages={59--79},
  year={1978},
  publisher={Elsevier}
}

@article{two_period_1,
  title={Optimal prevention and prudence in a two-period model},
  author={Menegatti, Mario},
  journal={Mathematical Social Sciences},
  volume={58},
  number={3},
  pages={393--397},
  year={2009},
  publisher={Elsevier}
}

@article{ambiguity,
  title={The impact of ambiguity and prudence on prevention decisions},
  author={Berger, Lo{\"\i}c},
  journal={Theory and Decision},
  volume={80},
  pages={389--409},
  year={2016},
  publisher={Springer}
}

@article{skewness_left,
  title={Putting risk in its proper place},
  author={Eeckhoudt, Louis and Schlesinger, Harris},
  journal={American Economic Review},
  volume={96},
  number={1},
  pages={280--289},
  year={2006},
  publisher={American Economic Association}
}

@article{skewness_1,
  title={Joint measurement of risk aversion, prudence, and temperance},
  author={Ebert, Sebastian and Wiesen, Daniel},
  journal={Journal of Risk and Uncertainty},
  volume={48},
  pages={231--252},
  year={2014},
  publisher={Springer}
}

@article{experiment_2,
  title={Testing for prudence and skewness seeking},
  author={Ebert, Sebastian and Wiesen, Daniel},
  journal={Management Science},
  volume={57},
  number={7},
  pages={1334--1349},
  year={2011},
  publisher={INFORMS}
}

@article{time,
  title={Decision making when things are only a matter of time},
  author={Ebert, Sebastian},
  journal={Operations Research},
  volume={68},
  number={5},
  pages={1564--1575},
  year={2020},
  publisher={INFORMS}
}

@article{technology_risk,
  title={Should we do more when we know less? The effect of technology risk on optimal effort},
  author={Li, Lu and Peter, Richard},
  journal={Journal of Risk and Insurance},
  volume={88},
  number={3},
  pages={695--725},
  year={2021},
  publisher={Wiley Online Library}
}

@article{risk_averse,
  title={New results on the relationship among risk aversion, prudence and temperance},
  author={Menegatti, Mario},
  journal={European Journal of Operational Research},
  volume={232},
  number={3},
  pages={613--617},
  year={2014},
  publisher={Elsevier}
}

@article{high_order,
  title={New results on high-order risk changes},
  author={Menegatti, Mario},
  journal={European Journal of Operational Research},
  volume={243},
  number={2},
  pages={678--681},
  year={2015},
  publisher={Elsevier}
}

@article{loss_averse,
  title={Loss-averse preferences and portfolio choices: An extension},
  author={Eeckhoudt, Louis and Fiori, Anna Maria and Gianin, Emanuela Rosazza},
  journal={European Journal of Operational Research},
  volume={249},
  number={1},
  pages={224--230},
  year={2016},
  publisher={Elsevier}
}

@article{change_more,
  title={Sometimes more, sometimes less: Prudence and the diversification of risky insurance coverage},
  author={Reichel, Lukas and Schmeiser, Hato and Schreiber, Florian},
  journal={European Journal of Operational Research},
  volume={292},
  number={2},
  pages={770--783},
  year={2021},
  publisher={Elsevier}
}

@misc{real_time_price,
    author = {{Ameren}},
   title = {Power Smart Pricing Plan},
note = {Retrieved Apr 27},
   howpublished = {\url{https://www.ameren.com/illinois/account/ customer-service/bill/power-smart-pricing/}},
    year = {2025}
}

@misc{nyt,
  title = {His Lights Stayed on During Texas’ Storm. Now He Owes \$16,752.},
  author={Giulia, McDonnell Nieto del Rio and Nicholas, Bogel-Burroughs and Ivan, Penn},
  note = {\url{https://www.nytimes.com/2021/02/20/us/texas-storm-electric-bills.html\#commentsContainer}},
howpublished = {New York Times},
year = 2021,
}

@misc{TX_realtime,
   title = {Real-Time Market},
note = {Retrieved Apr 27},
    author = {{Ercot}},
   howpublished = {\url{https://www.ercot.com/mktinfo/rtm/}},
year = {2025}
}

@misc{PJM_real,
   title = {Energy Market},
note = {Retrieved Apr 27},
    author = {{PJM}},
   howpublished = {\url{https://www.pjm.com/markets-and-operations/energy}},
year = {2025}
}

@misc{Con_ed_ny,
   title = {Con Edison Time-of-Use Rates},
note = {Retrieved Apr 27},
 author = {ConEd},
   howpublished = {\url{https://www.coned.com/en/accounts-billing/your-bill/time-of-use}},
year = {2025}
}

@misc{SRP,
   title = {Salt River Project (SRP) Time-of-Use Price Plan™ (TOU)},
note = {Retrieved Apr 27},
author = {SRP},
   howpublished = {\url{https://www.srpnet.com/price-plans/residential-electric/time-of-use}},
year = {2025}
}

@misc{PGE,
   title = {Find your best rate plan},
author ={PG\&E},
note = {Retrieved Apr 27},
   howpublished = {\url{https://www.pge.com/en/account/rate-plans/find-your-best-rate-plan.html}},
year = {2025}
}

@misc{OGE,
   title = {SmartHours},
author ={OG\&E},
note = {Retrieved Apr 27},
   howpublished = {\url{https://www.oge.com/wps/portal/ord/residential/pricing-options/smart-hours}},
year = {2025}
}

@article{tariff_dynamic1,
  title={The economics of fixed cost recovery by utilities},
  author={Borenstein, Severin},
  journal={The Electricity Journal},
  volume={29},
  number={7},
  pages={5--12},
  year={2016},
  publisher={Elsevier}
}

@book{convex,
  title={Convex optimization},
  author={Boyd, Stephen P and Vandenberghe, Lieven},
  year={2004},
  publisher={Cambridge university press}
}

\newpage

\appendix
\renewcommand{\thesection}{Appendix \Alph{section}}

\section{Proof of Theorem \ref{price_invariant_theorem}}
\emph{Overview of the proof:} 
\todo{We first prove the one-step distribution-insensitive condition and derive a series of lemmas that will work for the subsequent analysis. We derive the formula of the derivative of action-value function $q_{t}(x_t|\lambda_{t+1})$ and show the relationship between $q_{t}(x_t|\lambda_{t+1}), p_{t+1}$ and future price $\lambda_{t+1}$ (Lemma \ref{q_t-1_lambda}).} We then show that the state-value function derivative $v$ is a linear combination of the state cost function derivative $c$ (Lemma \ref{linear_combination_value_function}). \todo{Based on this, we show that the linear relationship between $q_{t}(x_t|\lambda_{t+1})$ and $\lambda_{t+1}$ is necessary and sufficient for the one-step distribution-insensitive condition. Finally, we extend the one-step property to all future stages and complete the proof of the Theorem.}

% \begin{lemma}\emph{Derivative of action-value function}.\label{stage_cost_function_lemma}
% Consider the demand model describe in (\ref{eq1}), for all $t\in [0,T-1]$,
% \begin{align}
%     q_{t}(x_{t}|\lambda_{t+1})=A h_{t+1}(x_{t+1}).
% \end{align}
% \end{lemma}
\begin{lemma}\emph{Relationship between $p_{t+1},q_{t}(x_t|\lambda_{t+1})$ and $\lambda_{t+1}$}. \label{q_t-1_lambda}
Consider the demand model describe in (\ref{eq1}), for all $t\in [0,T-1]$, 
    \begin{subequations}
    \begin{align}
    \frac{\partial p_{t+1}}{\partial \lambda_{t+1}} = 
    -\frac{1}{\dot{g}_{t+1}(p_{t+1}) + \dot{h}_{t+1}(x_{t+1})},\\
    \frac{\partial q_{t}(x_t|\lambda_{t+1})}{\partial \lambda_{t+1}} = 
    A\frac{\partial h_{t+1}(x_{t+1})}{\partial \lambda_{t+1}} = 
    \frac{-A \dot{h}_{t+1}(x_{t+1})}{\dot{g}_{t+1}(p_{t+1}) + \dot{h}_{t+1}(x_{t+1}) }. \label{h_t/lamba_t}
    \end{align}
    \end{subequations}
\end{lemma}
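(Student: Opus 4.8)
The plan is to treat the optimality condition established in Lemma~\ref{stage_cost_function_lemma} as an implicit equation that defines the optimal action $p_{t+1}$ as a function of the price $\lambda_{t+1}$, and then to differentiate it. Recall that at the optimum, condition~(\ref{p*}) reads $\lambda_{t+1} + g_{t+1}(p_{t+1}) + h_{t+1}(x_{t+1}) = 0$, together with the transition $x_{t+1} = A x_t + p_{t+1}$. Holding the incoming state $x_t$ fixed, as is appropriate for the single-stage problem in~(\ref{eq1}), the state $x_{t+1}$ depends on $\lambda_{t+1}$ only through $p_{t+1}$, so that $\partial x_{t+1}/\partial \lambda_{t+1} = \partial p_{t+1}/\partial \lambda_{t+1}$.

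First I would differentiate~(\ref{p*}) with respect to $\lambda_{t+1}$. Applying the chain rule together with the relation above gives $1 + \dot{g}_{t+1}(p_{t+1})\,\partial p_{t+1}/\partial\lambda_{t+1} + \dot{h}_{t+1}(x_{t+1})\,\partial p_{t+1}/\partial\lambda_{t+1} = 0$. Solving this single linear equation for $\partial p_{t+1}/\partial\lambda_{t+1}$ yields the first claimed identity directly. For the second identity, I would start from the relation $q_t(x_t|\lambda_{t+1}) = A h_{t+1}(x_{t+1})$ supplied by Lemma~\ref{stage_cost_function_lemma}, differentiate its right-hand side by the chain rule to obtain $A\,\dot{h}_{t+1}(x_{t+1})\,\partial x_{t+1}/\partial\lambda_{t+1}$, and then substitute the value of $\partial x_{t+1}/\partial\lambda_{t+1} = \partial p_{t+1}/\partial\lambda_{t+1}$ just computed. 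This reproduces exactly the stated ratio $-A\dot{h}_{t+1}/(\dot{g}_{t+1}+\dot{h}_{t+1})$.

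The main obstacle is not the computation itself but its justification: the implicit differentiation is legitimate only if $p_{t+1}$ is a well-defined, differentiable function of $\lambda_{t+1}$ near the optimum, which in turn requires the denominator $\dot{g}_{t+1}(p_{t+1}) + \dot{h}_{t+1}(x_{t+1})$ to be nonzero. I would discharge this via the implicit function theorem, observing that convexity of $G_{t+1}$ from Definition~\ref{def1} (with $\dot{g}_{t+1} > 0$ in the quadratic case) and convexity of the combined state term $h_{t+1} = c_{t+1} + v_{t+1}$ together force the denominator to be strictly positive. The convexity of $C_{t+1}$ is assumed in Definition~\ref{def1}, while the convexity of $V_{t+1}$ follows inductively from its definition in~(\ref{3b}) as an expectation of the convex action-value functions $Q_t$. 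A strictly positive denominator simultaneously guarantees uniqueness of the stage optimizer, the required smoothness of $p_{t+1}(\lambda_{t+1})$, and the validity of dividing through in both identities.
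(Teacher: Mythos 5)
Your proposal is correct and follows essentially the same route as the paper: implicit differentiation of the stage optimality condition $\lambda_{t+1} + g_{t+1}(p_{t+1}) + h_{t+1}(x_{t+1}) = 0$ with $x_t$ held fixed (so that $\partial x_{t+1}/\partial\lambda_{t+1} = \partial p_{t+1}/\partial\lambda_{t+1}$), followed by the chain rule applied to $q_t = A h_{t+1}(x_{t+1})$ from Lemma~\ref{stage_cost_function_lemma}. Your additional remark on invoking the implicit function theorem and the strict positivity of the denominator $\dot{g}_{t+1} + \dot{h}_{t+1}$ is a welcome piece of rigor that the paper leaves implicit, but it does not change the argument.
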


\begin{proof}
\todo{We first link the current action-value function to the future state-related cost function, i.e., proving $q_{t}(x_{t}|\lambda_{t+1})=A h_{t+1}(x_{t+1}).$ }

We apply the optimality condition to $Q_{t}(x_t|\lambda_{t+1})$ to find the minimized $p_{t+1}$,
    \begin{align}
    \frac{\partial Q_{t}(x_t|\lambda_{t+1})}{\partial p_{t+1}} &= \lambda_{t+1} + \frac{\partial C_{t+1}(x_{t+1})}{\partial x_{t+1}}
    \frac{\partial x_{t+1}}{\partial p_{t+1}} 
    +\frac{\partial V_{t+1}(x_{t+1})}{\partial x_{t+1}} 
    \frac{\partial x_{t+1}}{\partial p_{t+1}} + 
    \frac{\partial G_{t+1}(p_{t+1})}{\partial p_{t+1}}=0, \label{q/p}
\end{align}
and according to (\ref{cons}), 
% \begin{align}
    $\partial x_{t+1} / \partial p_{t+1} = 1$.
% \end{align}
Then the optimality condition is equivalent to 
\begin{subequations}
\begin{align}
     &\lambda_{t+1} + g_{t+1}(p_{t+1}) + h_{t+1}(x_{t+1}) = 0, \label{p*} \\
     &x_{t+1} = A x_{t} + p_{t+1}.
\end{align}
\end{subequations}

With the optimal $p_{t+1}$, we take derivative for $Q_{t}(x_t|\lambda_{t+1})$ with regards to $x_{t}$,
\begin{subequations}
\begin{align}
    \frac{\partial Q_{t}(x_t|\lambda_{t+1})}{\partial x_{t}} = 
    \lambda_{t+1} \frac{\partial p_{t+1}}{\partial x_t} + 
    \frac{\partial C_{t+1}(x_{t+1})}{\partial x_{t+1}}
    \frac{\partial x_{t+1}}{\partial x_t} 
    +\frac{\partial V_{t+1}(x_{t+1})}{\partial x_{t+1}} 
    \frac{\partial x_{t+1}}{\partial x_{t}} + 
    g_{t+1}(p_{t+1})
    \frac{\partial p_{t+1}}{\partial x_t}, \label{q/x}  
\end{align}
and according to (\ref{cons}),
\begin{align}
    \frac{\partial x_{t+1}}{\partial x_{t}} = 
    A + \frac{\partial p_{t+1}}{\partial x_{t}}. \label{xt/xt-1}
\end{align}
Thus, combine with (\ref{p*}) and (\ref{xt/xt-1}),
\begin{align}
    q_{t}(x_t|\lambda_{t+1}) &= \lambda_{t+1} \frac{\partial p_{t+1}}{\partial x_{t}}+ 
    h_{t+1}(x_{t+1}) \frac{\partial x_{t+1}}{\partial x_{t}}+ 
    g_{t+1}(p_{t+1}) \frac{\partial p_{t+1}}{\partial x_{t}} \nonumber \\
    &= \lambda_{t+1} \frac{\partial p_{t+1}}{\partial x_{t}}+ 
    h_{t+1}(x_{t+1}) (A +\frac{\partial p_{t+1}}{\partial x_{t}})+ 
    g_{t+1}(p_{t+1}) \frac{\partial p_{t+1}}{\partial x_{t}} \nonumber \\
    &= \lambda_{t+1} \frac{\partial p_{t+1}}{\partial x_{t}}+ 
    h_{t+1}(x_{t+1}) (A + \frac{\partial p_{t+1}}{\partial x_{t}})  - (\lambda_{t+1} + h_{t+1}(x_{t+1})) \frac{\partial p_{t+1}}{\partial x_{t}} 
    = A h_{t+1}(x_{t+1}), \label{clean_q/x}
\end{align}
\end{subequations}
where the optimal $x_{t+1}$ is a function of $\lambda_{t+1}$.

\todo{With the relationship, we can analyze the relationship between the current action-value function and the future price. According to the optimality conditions for $p_{t+1}$ as described in (\ref{p*}),}
% Lemma \ref{stage_cost_function_lemma} shows the relationship between  
% $q_{t}(x_t|\lambda_{t+1})$ and $h_{t+1}(x_{t+1})$, which provides the basis to analyze the relationship between the current action-value function and future price, and motivates us to introduce the following Lemma.

% \begin{proof}
    
\begin{subequations}
\begin{align}
    &\frac{\partial (g_{t+1}(p_{t+1})+h_{t+1}(x_{t+1}))}{\partial \lambda_{t+1}} = 
    - \frac{\partial \lambda_{t+1}}{\partial \lambda_{t+1}},  \\
    &\frac{\partial g_{t+1}(p_{t+1})}{\partial p_{t+1}} 
    \frac{\partial p_{t+1}}{\partial \lambda_{t+1}} + 
    \frac{\partial h_{t+1}(x_{t+1})}{\partial x_{t+1}} 
    \frac{\partial x_{t+1}}{\partial p_{t+1}}
    \frac{\partial p_{t+1}}{\partial \lambda_{t+1}} = -1,  \\
    &\dot{g}_{t+1}(p_{t+1}) \frac{\partial p_{t+1}}{\partial \lambda_{t+1}} + 
    \dot{h}_{t+1}(x_{t+1}) \frac{\partial p_{t+1}}{\partial \lambda_{t+1}} =  -1, \\
    &\frac{\partial p_{t+1}}{\partial \lambda_{t+1}} = 
    -\frac{1}{\dot{g}_{t+1}(p_{t+1}) + \dot{h}_{t+1}(x_{t+1})}.      \label{pt*/lambat}
\end{align}
\end{subequations}

  Then, we take derivative of $q_{t}(x_t|\lambda_{t+1})$ with regards to $\lambda_{t+1}$,
        \begin{align}
       \frac{\partial q_{t}(x_t|\lambda_{t+1})}{\partial \lambda_{t+1}} &= 
        \frac{\partial A h_{t+1}(x_{t+1})}{\partial \lambda_{t+1}} = A
    \frac{\partial h_{t+1}(x_{t+1})}{\partial x_{t+1}} 
    \frac{\partial x_{t+1}}{\partial p_{t+1}}
    \frac{\partial p_{t+1}}{\partial \lambda_{t+1}}
     = \frac{-A\dot{h}_{t+1}(x_{t+1})}{\dot{g}_{t+1}(p_{t+1}) + \dot{h}_{t+1}(x_{t+1}) }.
        \end{align}
    This completes the proof of the Lemma.
\end{proof}

Involving the relationship between $h$ and $x$, Lemma \ref{q_t-1_lambda} provide an exact relationship between $p_{t+1}, q_{t}(x_t|\lambda_{t+1})$ and $\lambda_{t+1}$. However, the conditions described in Theorem \ref{price_invariant_theorem} only state the properties of the $C,G$ rather than $h$ or $v$ with regards to $x$. We then introduce the following Lemma to connect $v$ with $c$. 

\begin{lemma}\emph{State-value function derivative}.\label{linear_combination_value_function}
    The state-value function derivative $v_t(x_t)$ of the demand model described in (\ref{eq1}) satisfy the following for all $t\in [1,T]$: 
    \begin{align}
        v_t(x_t) = \sum_{\tau=t+1}^{T} a_{\tau} c_{\tau}(x_{\tau}) +a_{\mathrm{v}} v_T(x_T), \label{v_t_linear_comb}
    \end{align}
    where $a_{\tau}, a_{\mathrm{v}}$ is the parameters for the linear combination.
\end{lemma}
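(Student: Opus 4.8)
The plan is to establish a one-step backward recursion for $v_t$ and then unroll it by induction on the stage index. The natural starting point is the definition of the state-value function in \eqref{3b}, namely $V_t(x_t) = \mathbb{E}_{\Lambda_{t+1}}[Q_t(x_t|\lambda_{t+1})]$. First I would differentiate both sides with respect to $x_t$; since the distribution $\Lambda_{t+1}$ does not depend on $x_t$, the derivative passes inside the expectation and yields
\begin{align}
v_t(x_t) = \mathbb{E}_{\Lambda_{t+1}}[q_t(x_t|\lambda_{t+1})]. \nonumber
\end{align}

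The second step invokes Lemma \ref{stage_cost_function_lemma}, which already supplies $q_t(x_t|\lambda_{t+1}) = A\,h_{t+1}(x_{t+1})$, together with the notational identity $h_{t+1} = c_{t+1} + v_{t+1}$. Substituting produces the key recursion
\begin{align}
v_t(x_t) = A\,\mathbb{E}_{\Lambda_{t+1}}\bigl[c_{t+1}(x_{t+1}) + v_{t+1}(x_{t+1})\bigr]. \nonumber
\end{align}
The crucial feature is that the envelope-type cancellation already carried out inside Lemma \ref{stage_cost_function_lemma} has removed every contribution from the price term $\lambda_{t+1}$ and the action-cost derivative $g_{t+1}$, so only the state-cost derivative $c_{t+1}$ and the next-stage value derivative $v_{t+1}$ survive. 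This is exactly the structure \eqref{v_t_linear_comb} is meant to capture.

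With this recursion in hand I would prove \eqref{v_t_linear_comb} by backward induction on $t$. The terminal case is $t=T$: the sum $\sum_{\tau=T+1}^{T}$ is empty, leaving $v_T = a_{\mathrm{v}} v_T$ with $a_{\mathrm{v}} = 1 = A^{0}$. For the inductive step I would substitute the hypothesis for $v_{t+1}$ into the recursion; each coefficient then acquires one additional factor of $A$, so that $a_\tau = A^{\tau - t}$ for $\tau \in [t+1,T]$ and $a_{\mathrm{v}} = A^{T-t}$, all constants independent of $x$, which is precisely the claimed linear combination.

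The step I expect to be the main obstacle is the bookkeeping of the nested expectations. Each $v_{t+1}(x_{t+1})$ is itself an expectation over $\Lambda_{t+2},\dots,\Lambda_T$, and the states $x_\tau$ entering $c_\tau(x_\tau)$ are random through the realized prices, so the clean sum $\sum_\tau a_\tau c_\tau(x_\tau)$ in \eqref{v_t_linear_comb} is really shorthand for a tower of conditional expectations. I would make this precise by carrying the iterated-expectation operator through the induction and collapsing it with the tower property; the genuine content of the lemma is then that the coefficients are fixed powers of $A$ and that no $g_\tau$ or $\lambda_\tau$ term ever appears. Under the quadratic hypotheses of Theorem \ref{price_invariant_theorem} the derivatives $c_\tau$ are affine, so these expectations further reduce to evaluation at the conditional-mean states, which is the form in which the lemma is subsequently consumed.
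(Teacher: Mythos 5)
Your proposal is correct and follows essentially the same route as the paper: differentiate the Bellman recursion \eqref{3b} to get $v_t(x_t)=\mathbb{E}_{\Lambda_{t+1}}[q_t(x_t|\lambda_{t+1})]$, invoke Lemma \ref{stage_cost_function_lemma} to reduce $q_t$ to $A\,h_{t+1}=A(c_{t+1}+v_{t+1})$, and unroll the resulting recursion using linearity of expectation. You are in fact somewhat more explicit than the paper, which leaves the coefficients as unspecified constants and glosses over the tower of nested expectations that you correctly flag as the only delicate bookkeeping point.
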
 \label{linear_combination_v_t_prop}

\begin{proof}[Proof]
    Taking derivative of (\ref{3b}) with regards to $x_t$,
    \begin{align}
        v_{t}(x_{t})=\mathbb{E}_{\Lambda_{t+1}}[q_{t}(x_{t}|\lambda_{t+1})],
    \end{align}
    and according to (\ref{clean_q/x}) from Lemma \ref{q_t-1_lambda},
    \begin{subequations}   
    \begin{align}
        q_{T-1}(x_{T-1}|\lambda_{T}) = Ah_T(x_T)= A(v_T(x_T) + c_T(x_T)).
    \end{align}
    Recursively,
    \begin{align}
        q_{T-2}(x_{T-2}|\lambda_{T-1}) = A (v_{T-1}(x_{T-1}) + c_{T-1}(x_{T-1})
        =A(\mathbb{E}[A(v_T(x_T) + c_T(x_T))] + c_{T-1}(x_{T-1})).  
    \end{align}   
     \end{subequations}
     
    Then, 
    \begin{align}
        v_t(x_t) = \mathbb{E}\bigg[a_{t+1} c_{t+1}(x_{t+1})
        + \mathbb{E}\big[ a_{t+2} c_{t+2}(x_{t+2})+ \dots
        +\mathbb{E}[a_{T-1}c_{T-1}(x_{T-1}) + \mathbb{E}[a_{T} c_{T}(x_{T})+ a_{\mathrm{v}}v_T(x_T)]] \big] \bigg].      
    \end{align}
    where $a_{\tau},\tau \in [t+1,T]$ and $a_{\mathrm{v}}$ is the constant parameters.
    
    As expectation is a linear transformation, we can write $v_t(x_t)$ as (\ref{v_t_linear_comb}) and complete the proof.
\end{proof}

Lemma \ref{linear_combination_value_function} shows that the state-value function derivative is a linear combination of the state cost function derivative $c$ and the end state-value function derivative $v_T$. As the model definition mentions, we reasonably set the end state-value function to zero and finish the proof. 

\begin{proof}[Proof of Theorem \ref{price_invariant_theorem}]
\todo{We first prove the one-step distribution-insensitive condition.} According to Lemma \ref{linear_combination_value_function}, we rewrite the expression of $h_{t+1} = c_{t+1} +v_{t+1}$ as $\sum_{\tau=t+1}^T a_{\tau}c_\tau$ with an zero end state-value function. According to the conditions described in Theorem \ref{price_invariant_theorem}, if the state cost function $C$ and action cost function $G$ are quadratic, then $h$ is a linear function with regard to $x$. Combining with Lemma \ref{q_t-1_lambda}, denote $M_1, M_2$ as a constant, $M_1,M_2\in 
\mathbb{R}$, $(\partial p_{t+1})/(\partial \lambda_{t+1}) = M_1 $ and $(\partial q_t(x_t|\lambda_{t+1}))/(\partial \lambda_{t+1}) =(\partial h_{t+1}(x_{t+1}))/(\partial \lambda_{t+1}) = M_2$. This means a linear relationship between both $p_{t+1}, \lambda_{t+1}$ and $q_{t}(x_t|\lambda_{t+1}), \lambda_{t+1}$. 

We first take an example with a two-point price distribution given $w \in[0,1]$,
\begin{subequations}
    \begin{align}
        \gamma_{t+1} &= \mathbb{E}[\lambda_{t+1}] - w\delta_{t+1} \text{ with a probability of $1-w$},\\
        \pi_{t+1} &= \mathbb{E}[\lambda_{t+1}] + (1-w)\delta_{t+1} \text{ with a probability of $w$},
    \end{align}     
\end{subequations}
    which ensures the expectation of this two-point price distribution is always $\mathbb{E}[\lambda_{t+1}]$ as $(1-w)\gamma_{t+1} + w\pi_{t+1} = \mathbb{E}[\lambda_{t+1}]$, while with any given $w$, the variance of the distribution scales monotonically with $\delta_{t+1} = \pi_{t+1}-\gamma_{t+1}$.
    
Then, we take $\lambda_{t+1},\gamma_{t+1}$ into the $q_{t}(x_t|\lambda_{t+1})$ expression and explicitly get, 
\begin{subequations} \label{distribution_insensitive_conditions}
\begin{align}
    (1-w)q_{t}(x_{t}|\gamma_{t+1})
    &= A (1-w)h_{t+1}(x_{t}+p_{t+1}(\mathbb{E}[\lambda_{t+1}] - w\delta_{t+1})) \nonumber \\ 
    &= A (1-w)h_{t+1}(x_{t}+p_{t+1}(\mathbb{E}[\lambda_{t+1}])) - A w(1-w)h_{t+1}(p_{t+1}(\delta_{t+1})),  \\
    wq_{t}(x_{t}|\pi_{t+1}) 
    &= A w h_{t+1}(x_{t}+p_{t+1}(\mathbb{E}[\lambda_{t+1}] + (1-w)\delta_{t+1}) ) \nonumber \\ 
    &= A w h_{t+1}(x_{t}+p_{t+1}(\mathbb{E}[\lambda_{t+1}])) + A w(1-w)h_{t+1}(p_{t+1}(\delta_{t+1})).
\end{align}
\end{subequations}

Thus,
    \begin{align}
        w q_{t}(x_{t}|\pi_{t+1}) +(1-w) q_{t}(x_{t}|\gamma_{t+1}) 
        = A h_{{t+1}}(x_{t}+p_{t+1}(\mathbb{E}[\lambda_{t+1}]) ) = q_{t}(x_{t}|\mathbb{E}[\lambda_{t+1}]). \label{final_insensitive}
    \end{align}

Now we consider a general distribution $\Lambda_{t+1}$, and due to the linear relationship between $q_t(x_t|\lambda_{t+1})$ and $\lambda_{t+1}$, we can write (\ref{final_insensitive}) as follows,
\begin{align}
    \mathbb{E}_{\Lambda_{t+1}}[q_t(x_t|\lambda_{t+1})]
    = q_t(x_t|\mathbb{E}_{\Lambda_{t+1}}[\lambda_{t+1}]),\label{final_insensitive1}
\end{align}
which shows that the derivative of the action-value function is distribution-insensitive to the price distribution, but only to the expectation. 

Then we prove (\ref{final_insensitive1}) is a sufficient and necessary condition for the distribution-insensitive demand.
In terms of the sufficient condition, given (\ref{final_insensitive1}), when the expectation of price distribution is fixed, $q_{t}$ is insensitive to the price distribution. This means the state-value function derivative $v_t$ is the same when taking different distributions at stage $t+1$, indicating the same demand action and state at stage $t$.
    
For the necessary condition, given demand at stage $t$ is insensitive to price distribution at stage $t+1$, then $v_t$ and $q_{t}$ are only affected by the expectation instead of the price distributions. This means the expectation can be moved to the price variable ($\mathbb{E}_{\Lambda_{t+1}}[\lambda_{t+1}]$) directly, corresponding to a linear $q_{t}$ function with regards to price $\lambda_{t+1}$, as (\ref{final_insensitive1}) shows.

\todo{Finally, we extrapolate the one-step property to all future stages. We have shown that consumption at stage $t$ is independent of the price distribution at stage $t+1$, and the independence also holds for consumption at stage $t+1$ and price distribution at stage $t+2$. Following the state transition process, the consumption at stage $t$ is independent of the price distribution at stage $t+2$. Thus, recursively, consumption at stage $t$ is independent of all future price distributions. This completes the proof of Theorem \ref{price_invariant_theorem}.}
\end{proof}

\section{Proof of Corollary \ref{corollary_1}}
\begin{proof}
We first show the extra case. As the system is linear and cost function formulation is symmetrical, with a prior state of zero ($x_{t-1} = 0$) and symmetry price distribution with a mean of zero ($\mathbb{E}_{\Lambda_{t+1}}[\lambda_{t+1}] =0$), demands action should be zero. This shows the distribution-insensitive demand. 

We then prove the other parts of the Corollary from three perspectives. The first is the superquadratic state cost function and quadratic action cost function. 

From Lemma \ref{q_t-1_lambda}, denote $\alpha_1$ as a constant, 
    \begin{align}
        \frac{\partial h_{t+1}(x_{t+1})}{\partial \lambda_{t+1}} = 
        - \frac{ \dot{h}_{t+1}(x_{t+1})}{\alpha_1 + \dot{h}_{t+1}(x_{t+1})}, 
    \end{align}
    
    Thus, $\partial h_{t+1}(x_{t+1})/\partial \lambda_{t+1}$ is not a constant, and $h_{t+1}(x_{t+1})$ is not a linear function with regard to $\lambda_{t+1}$. According to the proof of Theorem \ref{price_invariant_theorem}, the distribution-insensitive condition doesn't hold, and the demand model is distribution-sensitive.

    Then, we prove the superquadratic action cost function and quadratic state cost function. Also, from Lemma \ref{q_t-1_lambda}, denote $\alpha_2$ as a constant,
    \begin{align}
        \frac{\partial h_{t+1}(x_{t+1})}{\partial \lambda_{t+1}} = 
        - \frac{\alpha_2}{\dot{g}_{t+1}(p_{t+1})}, 
    \end{align}
    
    Following the same analysis, we show that the demand model becomes distribution-sensitive. These two cases also show that demand is distribution-sensitive with superquadratic state and action cost function, and prove this Corollary.
\end{proof}
    
\section{Proof of Theorem \ref{aggressive_theorem}}
\emph{Overview of the proof.} 
In our proof, we set the event to happen at stage $t+1$ and use the price model for stage $t+1$ as described in (\ref{price_model}). By model definition, we reasonably set the end state-value function $V_{T}$ to zero. 
Here, we only consider the decision behavior before the event at the stage $t+1$ instead of the situation after the event happens, which is independent of the $V_{t+1}(x_{t+1})$; thus, we set $V_{t+1}(x_{t+1}) = 0$. 
We specify the proof process as follows:
\begin{itemize}
    \item We first set $V_{t+1}(x_{t+1}) = 0$ and write the extended form for the first stage optimization problem with price $(\lambda_{\tau} =0, \tau \in [1,t],\gamma_{t+1},\pi_{t+1})$, and take the optimality conditions with regards to actions. Then we write the optimality conditions as a function of the future prices to analyze the optimal solution (Lemma \ref{aggressive_theorem_lamma1}). 
    \item We analyze and show the function relationship between price $\gamma_{t+1},\pi_{t+1}$ and optimal state value $x_{t+1}$, and prove its monotonicity, symmetry, and concavity (Lemma \ref{aggressive_theorem_lamma2}).
    \item By using the function property obtained from Lemma \ref{aggressive_theorem_lamma2}, we show the $t$th stage reverse state-value function derivative $-v_t$ is positive under the optimality condition from Lemma \ref{aggressive_theorem_lamma1} and is greater than the value under the price expectation condition, and the same is true for the state value $x_t$ (Lemma \ref{aggressive_theorem_lamma3}). 
    \item We further analyze the sensitivity of the reverse state-value function derivative and show that the state value $x_t$ strictly increases with future price $\pi_{t+1}$ (Lemma \ref{aggressive_theorem_lamma4}). 
\end{itemize}
Combining these analyses shows that the state value $x_t$ is positive, distribution sensitive, and strictly increases with a more right-skewed price distribution.

\begin{lemma}\emph{Demand's optimality conditions.} \label{aggressive_theorem_lamma1}
    Taking the price $\gamma_{t+1},\pi_{t+1}$ from distribution $\Lambda_{t+1}$ and demand model described in Theorem \ref{aggressive_theorem}, assuming $V_{t+1}(x_{t+1})=0$, the optimality conditions for the first stage action of the demand model is:
    \begin{subequations} \label{optimality_conditions} 
    \begin{align}
            &a_{\mathrm{p}} p_1 + \sum_{\tau=1}^{t} c_{\tau}(x_{\tau}) -\mathbb{E}[\lambda_{t+1}] + a_{\mathrm{p}} Ax_{t} 
            - a_{\mathrm{p}} [w_{\Lambda_{t+1}} \phi(-\pi_{t+1} + a_{\mathrm{p}}A x_{t} ) 
            + (1-w_{\Lambda_{t+1}}) \phi(\gamma_{t+1} + a_{\mathrm{p}} Ax_{t} )] = 0, 
    \end{align}
    where $\phi^{-1}(x_{t+1}) = a_{\mathrm{p}} x_{t+1} + c_{t+1}(x_{t+1})$, describing the function relationship between optimal state $x_{t+1}$ and price $\lambda_{t+1}$.  
    \end{subequations}  
\end{lemma}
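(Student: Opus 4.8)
The plan is to derive the stated first-order condition in two movements: first solve the single event stage $t+1$ in closed form through an inverse cost map $\phi$, then propagate the resulting state-value derivative $v_t$ backward through the deterministic pre-event stages $\tau\in[1,t]$, and finally read off the optimality condition for $p_1$.

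First I would exploit the assumption $V_{t+1}(x_{t+1})=0$, so that $v_{t+1}\equiv 0$ and $h_{t+1}=c_{t+1}$. Under the price model (\ref{price_model}) the price at stage $t+1$ takes value $-\gamma_{t+1}$ with probability $1-w_{\Lambda_{t+1}}$ and $\pi_{t+1}$ with probability $w_{\Lambda_{t+1}}$. Writing the stage-$(t+1)$ optimality condition (\ref{p*}) for each branch and using the quadratic action cost $g_{t+1}(p_{t+1})=a_{\mathrm{p}}p_{t+1}$ together with $p_{t+1}=x_{t+1}-Ax_t$, each condition becomes $a_{\mathrm{p}}x_{t+1}+c_{t+1}(x_{t+1})$ set equal to an affine function of the price and $Ax_t$. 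Because $a_{\mathrm{p}}>0$ and $C_{t+1}$ is convex (so $c_{t+1}$ is nondecreasing), the map $x_{t+1}\mapsto a_{\mathrm{p}}x_{t+1}+c_{t+1}(x_{t+1})$ is strictly increasing and hence invertible; calling its inverse $\phi$ gives the two branch states $x_{t+1}^{\pi}=\phi(-\pi_{t+1}+a_{\mathrm{p}}Ax_t)$ and $x_{t+1}^{\gamma}=\phi(\gamma_{t+1}+a_{\mathrm{p}}Ax_t)$, which defines $\phi$ exactly as in the statement.

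Next I would compute $v_t$. By Lemma \ref{stage_cost_function_lemma} with $v_{t+1}=0$, one has $q_t(x_t|\lambda_{t+1})=Ac_{t+1}(x_{t+1})$, so (\ref{3b}) gives $v_t(x_t)=A[w_{\Lambda_{t+1}}c_{t+1}(x_{t+1}^{\pi})+(1-w_{\Lambda_{t+1}})c_{t+1}(x_{t+1}^{\gamma})]$. Substituting each branch optimality condition to replace $c_{t+1}$ by $-(\text{price})-a_{\mathrm{p}}(x_{t+1}-Ax_t)$ and collecting the linear price terms, the mean constraint $-(1-w_{\Lambda_{t+1}})\gamma_{t+1}+w_{\Lambda_{t+1}}\pi_{t+1}=\mathbb{E}[\lambda_{t+1}]$ collapses them into a single $-\mathbb{E}[\lambda_{t+1}]$, leaving $v_t$ expressed through $Ax_t$ and the two $\phi$ evaluations. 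I would then unroll the backward recursion over the deterministic stages: since $\lambda_\tau=0$ is deterministic for $\tau\le t$, (\ref{3b}) and Lemma \ref{stage_cost_function_lemma} give $v_{\tau-1}(x_{\tau-1})=A\,h_\tau(x_\tau)=A(c_\tau(x_\tau)+v_\tau(x_\tau))$; iterating from $\tau=t$ down to $1$ accumulates the terms $c_\tau(x_\tau)$ and propagates the $v_t$ contribution to form $h_1=c_1+v_1$.

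Finally, the first-stage optimality condition follows from $\partial Q_0/\partial p_1=a_{\mathrm{p}}p_1+h_1(x_1)=0$ (using $\lambda_1=0$ and $\partial x_1/\partial p_1=1$, exactly as in Lemma \ref{stage_cost_function_lemma}); substituting the unrolled $h_1$ reproduces the claimed equation. The main obstacle I anticipate is the recursion bookkeeping: tracking how the $v_t$ contribution and the accumulated state-cost derivatives propagate through the chain of deterministic stages—including the correct powers of the discount factor $A$—and verifying that the affine price terms from both branches cancel cleanly into a single $-\mathbb{E}[\lambda_{t+1}]$ via the two-point mean constraint. The remaining point that must be checked before the inversion is legitimate is the strict monotonicity of $a_{\mathrm{p}}x+c_{t+1}(x)$, which guarantees $\phi$ is genuinely well-defined.
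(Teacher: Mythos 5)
Your proposal is correct and rests on the same two key moves as the paper's proof: inverting the event-stage first-order condition $a_{\mathrm{p}}x_{t+1}+c_{t+1}(x_{t+1})=\text{(affine in price and }Ax_t\text{)}$ to define $\phi$, and using the two-point mean constraint to collapse the branch price terms into a single $-\mathbb{E}[\lambda_{t+1}]$. Where you differ is in how the pre-event stages are handled: the paper writes out the full extended-form program over $p_1,\dots,p_t,p_{\gamma,t+1},p_{\pi,t+1}$ and takes the first-order condition with respect to $p_1$ directly, obtaining $a_{\mathrm{p}}p_1+\sum_{\tau=1}^{t}c_\tau(x_\tau)+w\,c_{t+1}(x_{\pi,t+1})+(1-w)\,c_{t+1}(x_{\gamma,t+1})=0$ in one step, whereas you unroll the dynamic-programming recursion $v_{\tau-1}=A\,h_\tau=A(c_\tau+v_\tau)$ down to $h_1$. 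The two routes are equivalent in principle, but yours makes explicit the discount weights that the chain rule $\partial x_\tau/\partial p_1=A^{\tau-1}$ forces on the accumulated state-cost derivatives: your unrolled condition reads $a_{\mathrm{p}}p_1+\sum_{\tau=1}^{t}A^{\tau-1}c_\tau(x_\tau)+A^{t}v_t(x_t)=0$, which matches the lemma's unweighted sum only when $A=1$. You correctly flag this bookkeeping as the point to verify; note that the paper's own extended-form derivation silently drops these $A^{\tau-1}$ factors, so the discrepancy you anticipate is with the statement as written rather than a defect of your argument. Your remaining check --- strict monotonicity of $a_{\mathrm{p}}x+c_{t+1}(x)$ so that $\phi$ is well defined --- is exactly what the paper establishes (in its Lemma on the properties of $\phi$) from $a_{\mathrm{p}}>0$ and convexity of $C_{t+1}$.
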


\begin{proof}
    According to the demand model define in Theorem \ref{aggressive_theorem} and setting $V_{t+1}(x_{t+1})=0$, 
        \begin{align}
            &Q_t(x_t|\lambda_{t+1}) = \min_{p_{t+1}}\lambda_{t+1} p_{t+1} + \frac{a_{\mathrm{p}}} {2}p_{t+1}^{2} + C_{t+1}(x_{t+1}), \nonumber \\
            &\text{s.t. } x_{t+1} = Ax_{t} + p_{t+1}.
        \end{align}
        % \rev{(can we say this?) where $p'_{t+1}, x'_{t+1}$ is the decision variable of action and state at stage $t+1$, we use this notation only in this Lemma to distinguish them from $p_{t+1},x_{t+1}$ that represent optimal action and state after taking KKT conditions.}
        
        Taking the price scenarios ($\gamma_{t+1},\pi_{t+1}$) inside, we get the state-value function $V_t(x_t)$, (For brevity, we omit the subscript of $w_{\Lambda_{t+1}}$ and express as $w$ without confusion),
        \begin{subequations}
        \begin{align}
            V_t(x_t) = \mathbb{E}_{\Lambda_{t+1}}[Q_t(x_t|\lambda_{t+1})] 
            &= \min_{p_{\mathrm{\pi},t+1}, p_{\mathrm{\gamma},t+1}} 
            w[\pi_{t+1} p_{\mathrm{\pi},t+1} + \frac{a_{\mathrm{p}}} {2}p_{\mathrm{\pi},t+1}^2 + C_{t+1}(x_{\mathrm{\pi},t+1})] \nonumber \\ 
            &+ (1-w) [-\gamma_{t+1} p_{\mathrm{\gamma},t+1} + \frac{a_{\mathrm{p}}} {2}p_{\mathrm{\gamma},t+1}^2 + C_{t+1}(x_{\mathrm{\gamma},t+1})],    \\
            \text{s.t. } x_{\mathrm{\pi},t+1} &= A x_{t} + p_{\mathrm{\pi},t+1}, \\
            x_{\mathrm{\gamma},t+1} &= A x_{t} + p_{\mathrm{\gamma},t+1}, 
        \end{align}
        \end{subequations}
        where $p_{\mathrm{\pi},t+1}, x_{\mathrm{\pi},t+1}$ is the actions and state in stage $t+1$ under price scenario $\pi_{t+1}$; $p_{\mathrm{\gamma},t+1}, x_{\mathrm{\gamma},t+1}$ is the actions and state in stage $t+1$ under price scenario $\gamma_{t+1}$.
        
        Thus, we express the optimization problem for the demand recursively, starting from the first stage as 
        \begin{subequations}
        \begin{align}
            Q_{0}(x_{0}|\lambda_{\tau,\tau \in[1,t]}) = \min_{p_{\mathrm{\gamma},t+1},p_{\mathrm{\pi},t+1},p_{\tau,\tau \in [1,t]}} 
            &\sum_{\tau=1}^{t} [\lambda_{\tau} p_{\tau} + \frac{a_{\mathrm{p}}}{2}p_{\tau}^2 + C_{\tau}(x_{\tau})]  \nonumber \\
            +w[\pi_{t+1} p_{\mathrm{\pi},t+1} + \frac{a_{\mathrm{p}}} {2}p_{\mathrm{\pi},t+1}^2 + C_{t+1}(x_{\mathrm{\pi},t+1})] 
            &+(1-w) [-\gamma_{t+1} p_{\mathrm{\gamma},t+1} + \frac{a_{\mathrm{p}}} {2}p_{\mathrm{\gamma},t+1}^2 + C_{t+1}(x_{\mathrm{\gamma},t+1})],      \\
            \text{s.t. } 
            x_{\tau} &= A x_{\tau-1}+p_{\tau} ,\ \forall \tau \in[1,t] \label{constraint1} \\
            x_{\mathrm{\pi},t+1} &= A x_{t} + p_{\mathrm{\pi},t+1}, \\
            x_{\mathrm{\gamma},t+1} &= A x_{t} + p_{\mathrm{\gamma},t+1}. \label{constraint2}
        \end{align}
        \end{subequations}

        Then, we take the optimality conditions for the decision variables, and highlight the conditions of $p_1,p_{\mathrm{\gamma},t+1},p_{\mathrm{\pi},t+1}$ and take $\lambda_{\tau}=0,\tau \in[1,t]$ inside:
    \begin{subequations}
        \begin{align}
        &p_1: a_{\mathrm{p}} p_1 + \sum_{\tau=1}^{t}  c_{\tau}(x_{\tau}) +w c_{t+1}(x_{\mathrm{\pi},t+1}) +(1-w) c_{t+1}(x_{\mathrm{\gamma},t+1}) =0, \label{1} \\
        &p_{\mathrm{\pi},t+1}: \pi_{t+1} + a_{\mathrm{p}} p_{\mathrm{\pi},t+1} +  c_{t+1}(x_{\mathrm{\pi},t+1}) =0, \label{2}\\
        &p_{\mathrm{\gamma},t+1}: -\gamma_{t+1} + a_{\mathrm{p}} p_{\mathrm{\gamma},t+1} + c_{t+1}(x_{\mathrm{\gamma},t+1}) =0, \label{3}
    \end{align}
    among them, (\ref{constraint1})-(\ref{constraint2}) also stands with optimal $x$ and $p$.
    \end{subequations}
    
Then, by multiplying (\ref{2}) by $w$, multiplying (\ref{3}) by $1-w$, and taking this into (\ref{1}), we can rewrite the $t$th stage state-value function derivative in (\ref{1}),
\begin{subequations}
\begin{align}
    a_{\mathrm{p}} p_1 &+\sum_{\tau=1}^{t} c_{\tau}(x_{\tau}) 
    -w\pi_{t+1} - w a_{\mathrm{p}} p_{\mathrm{\pi},t+1} +(1-w)\gamma_{t+1} -(1-w)a_{\mathrm{p}} p_{\mathrm{\gamma},t+1} = 0,  
\end{align}
as $w\pi_{t+1} - (1-w)\gamma_{t+1} = \mathbb{E}[\lambda_{t+1}]$, 
\begin{align}
    a_{\mathrm{p}} p_1 + \sum_{\tau=1}^{t} &c_{\tau}(x_{\tau}) -\mathbb{E}[\lambda_{t+1}]  
    - a_{\mathrm{p}} [wp_{\mathrm{\pi},t+1} + (1-w) p_{\mathrm{\gamma},t+1}] = 0.  \label{update_1}
\end{align}
\end{subequations}

Then, to find the relationships between the state-value function derivative 
$v_t$ and price $\lambda,\gamma,\pi$, we replace the action $p_{\mathrm{\pi},t+1}$ and $p_{\mathrm{\gamma},t+1}$ in (\ref{update_1}) as the price $\pi_{t+1}$ and $\gamma_{t+1}$, respectively. We connect the price with optimal action/state according to (\ref{2}),
\begin{subequations}
    \begin{align}
       \pi_{t+1}  &= - a_{\mathrm{p}} p_{\mathrm{\pi},t+1} -  c_{t+1}(x_{\mathrm{\pi},t+1}) 
       = - a_{\mathrm{p}} x_{\mathrm{\pi},t+1} -  c_{t+1}(x_{\mathrm{\pi},t+1}) + a_{\mathrm{p}} A  x_t,  \\
        \pi_{t+1} &- a_{\mathrm{p}} A x_t  = -\phi^{-1} (x_{\mathrm{\pi},t+1}), \\
        x_{\mathrm{\pi},t+1} & = \phi(-\pi_{t+1}+ a_{\mathrm{p}} A x_t ),
    \end{align}
    where $\phi^{-1}(x_{t+1}) = a_{\mathrm{p}} x_{t+1} + c_{t+1}(x_{t+1})$.
    
    Applying the same steps to (\ref{3}), 
    \begin{align}
        \gamma_{t+1} + a_{\mathrm{p}}  A x_{t} 
        &= a_{\mathrm{p}} x_{\mathrm{\gamma},t+1} + c_{t+1}(x_{\mathrm{\gamma},t+1}) = \phi^{-1} (x_{\mathrm{\gamma},t+1}), \\
        x_{\mathrm{\gamma},t+1} &= \phi(\gamma_{t+1} + a_{\mathrm{p}} A x_{t} ).  \label{4}
    \end{align}
\end{subequations}

Then we rewrite (\ref{update_1}) as $x_{t+1}$ expression by adding $A x_t $ and replace $x_{\mathrm{\gamma},t+1},x_{\mathrm{\pi},t+1}$ to the function of price $\gamma_{t+1},\pi_{t+1}$.
\begin{subequations}
\begin{align}
    a_{\mathrm{p}} p_1 + \sum_{\tau=1}^{t} c_{\tau}(x_{\tau}) - \mathbb{E}[\lambda_{t+1}]  
    & - a_{\mathrm{p}} [w x_{\mathrm{\pi},t+1} + (1-w) x_{\mathrm{\gamma},t+1} - A x_{t} ] = 0,    \\
     a_{\mathrm{p}} p_1 + \sum_{\tau=1}^{t} c_{\tau}(x_{\tau}) - \mathbb{E}[\lambda_{t+1}] +  a_{\mathrm{p}} Ax_{t}
    & - a_{\mathrm{p}} [w\phi(-\pi_{t+1} + a_{\mathrm{p}} Ax_{t} ) 
    + (1-w) \phi(\gamma_{t+1} + a_{\mathrm{p}} Ax_{t})] = 0. 
\end{align}
\end{subequations}

Thus, we finish the proof of Lemma \ref{aggressive_theorem_lamma1}.
\end{proof}
\begin{remark}\emph{Optimality conditions with price expectation.}
\label{remark_proof}
    Follow Lemma \ref{aggressive_theorem_lamma1}, when taking the expectation of price distribution $\Lambda_{t+1}$, which is $\mathbb{E}[\lambda_{t+1}]$, the same optimality conditions of the demand model in Theorem \ref{aggressive_theorem} is
    \begin{align}
        a_{\mathrm{p}} p_1 + \sum_{\tau=1}^{t} c_{\tau}(x_{\tau}) - \mathbb{E}[\lambda_{t+1}] +  a_{\mathrm{p}} Ax_{t}
     - a_{\mathrm{p}} \phi(-\mathbb{E}[\lambda_{t+1}] + a_{\mathrm{p}} Ax_{t}) = 0. 
    \end{align}
\end{remark}

An intuitive explanation of this Lemma is to replace the state-value function derivative from state expression $x_{t+1}$ to price expression $(\mathbb{E}[\lambda_{t+1}],\gamma_{t+1},\pi_{t+1})$. The key to analyzing the optimality conditions is to show the function property of $\phi(\lambda)$, which motivates us to provide the following Lemma. 

\begin{lemma}\emph{Property of $\phi(\lambda)$.} \label{aggressive_theorem_lamma2}
    The inverse function $\phi(\lambda)$ from $\phi^{-1}(x)$ described in Lemma \ref{aggressive_theorem_lamma1} is strictly increasing, symmetric about the origin, strictly concave when $\lambda \geq 0$, and strictly convex when $\lambda < 0$ and $\phi(0) = 0$.
\end{lemma}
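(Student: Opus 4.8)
The plan is to read off every property of $\phi$ from its explicit inverse $\phi^{-1}(x) = a_{\mathrm{p}} x + c_{t+1}(x)$, since $\phi$ is only defined implicitly. Throughout I would use three facts: $a_{\mathrm{p}}>0$ from Definition~\ref{def2}; $C_{t+1}$ convex from Definition~\ref{def1}, so $\dot{c}_{t+1}=\partial c_{t+1}/\partial x_{t+1}\geq 0$; and the symmetry of the state cost about the equilibrium $x=0$, which makes $C_{t+1}$ an even function and hence $c_{t+1}$ odd. The argument then splits into the four listed claims, each obtained by an inverse-function computation.

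First I would establish strict monotonicity and the fixed point. Differentiating gives $(\phi^{-1})'(x)=a_{\mathrm{p}}+\dot{c}_{t+1}(x)\geq a_{\mathrm{p}}>0$, so $\phi^{-1}$ is a strictly increasing bijection and its inverse $\phi$ exists and is strictly increasing. Since $C_{t+1}$ is even and minimized at the origin, $c_{t+1}(0)=0$, whence $\phi^{-1}(0)=0$ and therefore $\phi(0)=0$. For symmetry, oddness of $c_{t+1}$ gives $\phi^{-1}(-x)=-a_{\mathrm{p}}x-c_{t+1}(x)=-\phi^{-1}(x)$, so $\phi^{-1}$ is odd; the inverse of an odd strictly increasing map is odd, giving $\phi(-\lambda)=-\phi(\lambda)$.

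For the curvature I would apply the second-derivative inverse rule,
\begin{align}
\phi''(\lambda) = -\frac{(\phi^{-1})''(\phi(\lambda))}{\big[(\phi^{-1})'(\phi(\lambda))\big]^{3}} = -\frac{\ddot{c}_{t+1}(\phi(\lambda))}{\big[a_{\mathrm{p}} + \dot{c}_{t+1}(\phi(\lambda))\big]^{3}}.
\end{align}
The denominator is strictly positive, so the sign of $\phi''(\lambda)$ is exactly the opposite of the sign of $\ddot{c}_{t+1}$ (the third derivative of $C_{t+1}$) evaluated at $x=\phi(\lambda)$. Because $\phi$ increases through the origin, $\lambda>0$ corresponds to $x>0$ and $\lambda<0$ to $x<0$. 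It therefore remains to show $\ddot{c}_{t+1}(x)>0$ for $x>0$; since $\ddot{c}_{t+1}$ is odd (the third derivative of an even function), this automatically yields $\ddot{c}_{t+1}(x)<0$ for $x<0$, so that $\phi''<0$ on $\lambda>0$ (strict concavity) and $\phi''>0$ on $\lambda<0$ (strict convexity).

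The main obstacle is exactly this sign condition on $\ddot{c}_{t+1}$: convexity alone controls only $\dot{c}_{t+1}\geq 0$ and says nothing about the third derivative, while the super-quadratic hypothesis of Definition~\ref{def2} only guarantees $\ddot{c}_{t+1}\not\equiv 0$. I would close the gap by invoking the hard-constraint/barrier modeling of the state cost, whose defining feature is that the marginal penalty \emph{strictly accelerates} as the state departs from equilibrium, i.e. $\dot{c}_{t+1}$ is strictly increasing on $(0,\infty)$, equivalently $\ddot{c}_{t+1}(x)>0$ there. I would make this concrete with the log-barrier cost of the case study, where $\ddot{c}_{t+1}(x)$ compares $1/(x_{\mathrm{max}}-x)^{3}$ against $1/(x_{\mathrm{max}}+x)^{3}$ and is positive for every $x>0$, and note that this accelerating-curvature property is precisely what distinguishes a super-quadratic cost from the curvature-constant quadratic case.
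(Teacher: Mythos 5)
Your proposal follows the same route as the paper: every property of $\phi$ is read off from the explicit form $\phi^{-1}(x)=a_{\mathrm{p}}x+c_{t+1}(x)$, using positivity of $a_{\mathrm{p}}$, convexity of $C_{t+1}$, and evenness of the state cost, and then transferred to the inverse. The only substantive difference is in the curvature step: the paper argues qualitatively (the inverse of a strictly increasing convex function is increasing and concave, and symmetry mirrors this to $\lambda<0$), whereas you compute $\phi''$ via the inverse-function rule and reduce everything to the sign of $\ddot{c}_{t+1}$. Your version is the more careful one, because you correctly flag that the stated hypotheses (convexity plus "super quadratic", i.e.\ $\partial^3 C/\partial x^3\not\equiv 0$) do not by themselves force $\ddot{c}_{t+1}(x)>0$ for $x>0$; the paper simply asserts that "by definition" $c(x)$ is strictly convex on $x\geq 0$, which is exactly the unproved step you isolate and patch with the barrier-type modeling assumption. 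Both arguments also lean on symmetry of $C_{t+1}$ about the origin, which is used in the paper's proof and in your oddness argument but is not formally part of Definition~\ref{def1}; that is a shared implicit assumption rather than a defect of your proposal. In short, your proof is correct modulo the same strengthened hypothesis the paper itself needs, and you make that hypothesis explicit where the paper leaves it tacit.
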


\begin{proof}
By definition, the state cost function $C(x)$ is continuously differentiable, superquadratic, convex, and symmetrical with the y-axis. This means the first-order derivative $c(x)$ is strictly convex when $x\geq0$ and strictly concave when $x < 0$. We also have $c(x)=-c(-x)$ as $c(x)$ is symmetric about the origin.

Then we analyze the property of $\phi^{-1}(x) = c(x)+a_{\mathrm{p}} x$. 
\begin{itemize}
    \item For convexity, the affine function $a_{\mathrm{p}} x$ doesn't affect the function convexity, so the function convexity is determined by $c(x)$. This means $\phi^{-1}(x)$ is strictly convex when $x\geq0$ and strictly concave when $x<0$.
    \item Because the $c(x)$ is strictly increase, $\phi^{-1}(x)$ is strictly increase.
    \item In terms of the symmetry, as $c(x)$ and $a_{\mathrm{p}} x$ are symmetry about the origin, $\phi^{-1}(x)$ is symmetry about the origin. 
    \item Intuitively, we have $\phi^{-1}(0) = 0$
\end{itemize}

Thus, $\phi^{-1}(x)$ is strictly increase, and strictly convex when $x\geq0$ and strictly concave when $x<0$, and $\phi^{-1}(x) +\phi^{-1}(-x) = 0, \phi^{-1}(0) = 0$.

Because the inverse function of a strictly increasing convex function is concave and strictly increased, and the inverse of a strictly increasing concave function is convex and strictly increased, we know $\phi(\lambda)$ strictly increases, strictly concave when $\lambda\geq 0$ and strictly convex when $\lambda < 0$, and $\phi(\lambda) + \phi(-\lambda) = 0, \phi(0)=0$. This completes the proof of Lemma \ref{aggressive_theorem_lamma2}.
\end{proof}

The analysis of Lemma \ref{aggressive_theorem_lamma2} implies $\phi(\lambda)$ is a piecewise convex and concave function with central symmetry property. We naturally provide the following Lemma to analyze the $t$th stage state-value function derivative in the optimality conditions from Lemma \ref{aggressive_theorem_lamma1} using the property of this $\phi(\lambda)$ function.

\begin{lemma}\emph{Property of reverse state-value function derivative.} \label{aggressive_theorem_lamma3}
    % Given the same conditions of Lemma \ref{aggressive_theorem_lamma1} about the $\phi^{-1}(x)$ and its inverse function $\phi(\lambda)$, as well as the price model, the following properties are true:
    Consider the $\phi^{-1}(x)$ described in Lemma \ref{aggressive_theorem_lamma1} and its inverse function $\phi(\lambda)$, and take the price model described in Lemma \ref{aggressive_theorem_lamma1}, the following properties are true,
    \begin{enumerate}
        \item $a_{\mathrm{p}} \phi(\lambda) - \lambda$ strictly decreases with regards to $\lambda$;
        \item The $t$th stage reverse state-value function derivative from Lemma \ref{aggressive_theorem_lamma1} under price distribution $\gamma_{t+1}, \pi_{t+1}$ and expectation $\mathbb{E}[\lambda_{t+1}]$ satisfies,
        \begin{subequations}
        \begin{align}
            x_t &> 0, \label{property21} \\
            -q_t(x_t|\mathbb{E}[\lambda_{t+1}])/A &= \mathbb{E}[\lambda_{t+1}] -  a_{\mathrm{p}} Ax_t 
            +a_{\mathrm{p}} \phi(-\mathbb{E}[\lambda_{t+1}] + a_{\mathrm{p}} Ax_t ) > 0, \label{property22}\\
            -\mathbb{E}[q_t(x_t|\lambda_{t+1})]/A &= \mathbb{E}[\lambda_{t+1}] -  a_{\mathrm{p}} Ax_t 
            +a_{\mathrm{p}} [w_{\Lambda_{t+1}}\phi(-\pi_{t+1} + a_{\mathrm{p}} Ax_t )  + (1-w_{\Lambda_{t+1}})\phi(\gamma_{t+1} + a_{\mathrm{p}}A x_t)]  \nonumber\\
            &> -q_t(x_t|\mathbb{E}[\lambda_{t+1}])/A.\label{property23}
        \end{align}           
        \end{subequations}
    \end{enumerate}
\end{lemma}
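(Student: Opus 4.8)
The plan is to recast both reverse state-value derivatives through the single scalar map $D(\lambda) := a_{\mathrm{p}}\phi(\lambda) - \lambda$ and reduce the whole statement to properties of $D$. For claim~1, since $\phi$ is the inverse of $\phi^{-1}(x) = a_{\mathrm{p}} x + c_{t+1}(x)$, the inverse-function theorem gives $\phi'(\lambda) = 1/(a_{\mathrm{p}} + \dot{c}_{t+1}(\phi(\lambda)))$, so that $D'(\lambda) = a_{\mathrm{p}}\phi'(\lambda) - 1 = -\dot{c}_{t+1}(\phi(\lambda))/(a_{\mathrm{p}} + \dot{c}_{t+1}(\phi(\lambda)))$. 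Because $C_{t+1}$ is convex and super-quadratic, $\dot{c}_{t+1}\ge 0$ with strict positivity off the origin, hence $D'(\lambda)<0$ and $a_{\mathrm{p}}\phi(\lambda)-\lambda$ strictly decreases. I would record two facts for later use: $D$ is odd with $D(0)=0$ (inherited from the central symmetry of $\phi$ in Lemma~\ref{aggressive_theorem_lamma2}), and $D''(\lambda)=a_{\mathrm{p}}\phi''(\lambda)$ is positive for $\lambda<0$ and negative for $\lambda>0$, i.e. $D$ is convex on the negative axis and concave on the positive axis.

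The second step is an algebraic reduction. Writing $w=w_{\Lambda_{t+1}}$, $y:=a_{\mathrm{p}}A x_t$, $u:=-\pi_{t+1}+y$ and $v:=\gamma_{t+1}+y$, and using the mean constraint $w\pi_{t+1}-(1-w)\gamma_{t+1}=\mathbb{E}[\lambda_{t+1}]$ together with $w+(1-w)=1$ to split the common $-y$ term, a direct computation collapses the expressions in \eqref{property22}--\eqref{property23} into $-\mathbb{E}[q_t(x_t|\lambda_{t+1})]=w\,D(u)+(1-w)\,D(v)$ and $-q_t(x_t|\mathbb{E}[\lambda_{t+1}])=D(\bar u)$, where $\bar u:=wu+(1-w)v=-\mathbb{E}[\lambda_{t+1}]+y$ is exactly the weighted mean of the two arguments. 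The same manipulation turns the first-stage optimality condition of Lemma~\ref{aggressive_theorem_lamma1} into $a_{\mathrm{p}}p_1+\sum_{\tau=1}^{t}c_\tau(x_\tau)=w\,D(u)+(1-w)\,D(v)$. In this language claim~2 reads: the left-hand aggregate is positive (forcing $x_t>0$), $D(\bar u)>0$, and $w\,D(u)+(1-w)\,D(v)>D(\bar u)$.

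For \eqref{property21} and \eqref{property22} I would run a monotone-intersection argument in the scalar $x_t$. The aggregate $a_{\mathrm{p}}p_1+\sum_\tau c_\tau(x_\tau)$ is strictly increasing in $x_t$ and vanishes at $x_t=0$ (zero prices and $x_0=0$ give the all-zero path), while $w\,D(u(x_t))+(1-w)\,D(v(x_t))$ is strictly decreasing in $x_t$ because $D'<0$. Hence the optimal $x_t$ is the unique crossing, and it is strictly positive precisely when the decreasing curve starts above zero, i.e. when $w\,D(-\pi_{t+1})+(1-w)\,D(\gamma_{t+1})>0$, which is the $y=0$ instance of the Jensen inequality below. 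Positivity of $x_t$ then makes the common left side strictly positive, giving $-\mathbb{E}[q_t]>0$, and \eqref{property22} follows from the equivalence $D(\bar u)>0 \Leftrightarrow \bar u<0 \Leftrightarrow a_{\mathrm{p}}A x_t<\mathbb{E}[\lambda_{t+1}]$, i.e. that the expectation-problem state $x_{t+1}=\phi(\bar u)$ is negative; I would obtain this bound from the optimality balance, since the marginal gain of pre-charging at a zero price is capped by the expected future price.

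The main obstacle is the last inequality $w\,D(u)+(1-w)\,D(v)>D(\bar u)$, which is the quantitative form of skewness aversion. It is a two-point Jensen inequality for $D$, but $D$ is neither globally convex nor concave, so Jensen does not apply directly; in fact the inequality is false for freely chosen arguments and holds only in the admissible regime. The plan is to exploit the ordering $u<\bar u<v$, the weighting $w\le\tfrac12$ (so the far, light-weight point $u$ sits on the convex negative branch), the sign restriction from $\mathbb{E}[\lambda_{t+1}]\ge 0$, and the central symmetry $D(-s)=-D(s)$ to fold the convex-branch surplus at $u$ against the concave-branch deficit near $\bar u$ and $v$, and to show the former strictly dominates. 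Strictness is guaranteed since $D''=a_{\mathrm{p}}\phi''\neq 0$ away from the origin for a super-quadratic $C_{t+1}$. Carrying out this comparison cleanly, most plausibly through a case split on the sign of $\bar u$ together with tangent/secant estimates on each branch, is where the real work lies.
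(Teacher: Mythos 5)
Your reduction of the whole lemma to the scalar map $D(\lambda)=a_{\mathrm{p}}\phi(\lambda)-\lambda$ is sound and genuinely clean: claim~1 follows exactly as you compute, the identities $-q_t(x_t|\mathbb{E}[\lambda_{t+1}])=D(\bar u)$ and $-\mathbb{E}[q_t(x_t|\lambda_{t+1})]=wD(u)+(1-w)D(v)$ with $u=-\pi_{t+1}+a_{\mathrm{p}}Ax_t$, $v=\gamma_{t+1}+a_{\mathrm{p}}Ax_t$, $\bar u=wu+(1-w)v$ check out, and your monotone-intersection argument for $x_t>0$ is at least as transparent as the paper's proof by contradiction. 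The genuine gap is that the central inequality \eqref{property23}, which in your notation reads $wD(u)+(1-w)D(v)>D(\bar u)$, is never actually proved: you describe a folding strategy and then state that carrying it out ``is where the real work lies.'' Since this inequality is the entire quantitative content of the lemma, and your proof of $x_t>0$ itself invokes its $x_t=0$ instance, the argument is incomplete as submitted.

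What you have set up actually makes the missing step easier than you suggest. Because $\bar u$ is exactly the $w$-weighted mean of $u$ and $v$, the affine part of $D$ cancels from both sides, so \eqref{property23} is equivalent to $w\phi(u)+(1-w)\phi(v)>\phi(\bar u)$, and by the central symmetry of $\phi$ (Lemma~\ref{aggressive_theorem_lamma2}) to $w\phi(-u)+(1-w)\phi(-v)<\phi(-\bar u)$. When $\gamma_{t+1}+a_{\mathrm{p}}Ax_t\le 0$ all three arguments $-u,-v,-\bar u$ are nonnegative and this is precisely the two-point Jensen inequality for the strictly concave positive branch of $\phi$ --- exactly the paper's Case~(1), a one-line application rather than a delicate mixed-curvature comparison. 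The only residual work is the case $\gamma_{t+1}+a_{\mathrm{p}}Ax_t>0$, where $-v<0$ and plain Jensen fails; there the ingredients are the weight restriction $w\le\tfrac12$ (which, combined with $\bar u<0$, forces $\pi_{t+1}-a_{\mathrm{p}}Ax_t>\gamma_{t+1}+a_{\mathrm{p}}Ax_t$) and the monotonicity of the secant slope $\phi(s)/s$ on $s>0$. Note also that the relevant case split is on the sign of $v$, not of $\bar u$ as you propose --- $\bar u<0$ is forced by \eqref{property22} and holds in both cases. So your diagnosis of where the difficulty sits is directionally right, but the $D$-decomposition obscures that the main case is immediate, and the hard sub-case is left entirely undone.
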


\begin{proof}
    We first prove the \emph{property 1)}.

    According to Lemma \ref{aggressive_theorem_lamma1}, $\phi^{-1}(x) = c(x)+a_{\mathrm{p}} x$. Denote $\dot{}$ as the first-order derivative in terms of the function variable $x$ or $\lambda$, due to $\dot{c}(x)$ is positive, $\dot{\phi}^{-1}(x) > a_{\mathrm{p}}$, and according to the definition of inverse function, $\dot{\phi}(\lambda)< \frac{1}{a_{\mathrm{p}}}$.
    
    By taking derivative of the $a_{\mathrm{p}} \phi(\lambda) - \lambda$, 
    \begin{align}
        \frac{\partial (a_{\mathrm{p}} \phi(\lambda) - \lambda)}{\partial \lambda} = 
        a_{\mathrm{p}} \dot{\phi}(\lambda) -1 < 0 .
    \end{align}
    This means the function $a_{\mathrm{p}} \phi(\lambda) - \lambda$ strictly decreases with regrades to $\lambda$.

    Then, we prove the \emph{property 2)} by contradiction. For brevity, we omit the stage index $(t + 1)$ for variables and the subscript of $w_{\Lambda_{t+1}}$ when there is no ambiguity. 
    
    We first assume $x_t \leq 0$. Then, the $-\mathbb{E}[\lambda]+ a_{\mathrm{p}}Ax_t < 0$. Based on Lemma \ref{aggressive_theorem_lamma2}, we rewrite (\ref{property22}) as
    \begin{align}
        a_{\mathrm{p}} \phi(\mathbb{E}[\lambda]- a_{\mathrm{p}} Ax_t )
        < \mathbb{E}[\lambda] - a_{\mathrm{p}} Ax_t. \label{property2_case0}
    \end{align}
    According to the property 1), $a_{\mathrm{p}} \phi[\mathbb{E}[\lambda] - a_{\mathrm{p}} Ax_t] - [\mathbb{E}[\lambda] - a_{\mathrm{p}} Ax_t]$ strictly decreases with regrades to $\mathbb{E}[\lambda] - a_{\mathrm{p}} Ax_t$, and $\phi(0)=0$. Thus, with $\mathbb{E}[\lambda] - a_{\mathrm{p}}Ax_t > 0$, (\ref{property2_case0}) stands.
    
    Thus, we prove (\ref{property2_case0}) when assuming $x_t \leq 0$. Then we take this property into the optimality conditions from Remark \ref{remark_proof},
    \begin{align}
        a_{\mathrm{p}} p_1 + \sum_{\tau=1}^{t} c_{\tau}(x_{\tau}) = \mathbb{E}[\lambda] - a_{\mathrm{p}} Ax_t 
         + a_{\mathrm{p}} \phi(-\mathbb{E}[\lambda]  + a_{\mathrm{p}}A x_t). \label{optimality_condition}
    \end{align}
    
    As the reverse state-value function derivative (right-hand side) is positive, the left-hand side must be positive,
    \begin{align}
        a_{\mathrm{p}} p_1 + \sum_{\tau=1}^{t} c_{\tau}(x_{\tau}) > 0. \label{left_hand_side}
    \end{align} 
    
    Because $x_{\tau} = A x_{\tau-1} + p_{\tau}, \forall \tau \in [1,t]$, $x_0=0$, according to the objective of the demand model with quadratic action cost and super quadratic state cost, (\ref{left_hand_side}) means the state value $x_t$ at stage $t$ should be positive, and state and actions at prior stages should be non-negative. The zero actions and states come from the discount factor $A$. This contradicts our assumption of $x_t \leq 0$. Thus, we get $x_t > 0$. Also, following the same process, when $x_t > 0$, the (\ref{property22}) should still stand to satisfy the optimality conditions and $-\mathbb{E}[\lambda]+ a_{\mathrm{p}}Ax_t < 0$. 
    
    Then, we separate the two cases to analyze the (\ref{property23}). 

    \emph{Case (1):} $\gamma + a_{\mathrm{p}} Ax_t  < 0$. 
    Under this condition, after some moderately tedious algebra, we rewrite (\ref{property23}) as 
    \begin{subequations}
    \begin{align}
        w\phi(\pi - a_{\mathrm{p}} Ax_t) + (1-w)\phi(-\gamma - a_{\mathrm{p}} Ax_t ) < \phi(\mathbb{E}[\lambda] - a_{\mathrm{p}} Ax_t ), \label{property2_case1}  
    \end{align}
    where $\pi - a_{\mathrm{p}} Ax_t>0, -\gamma - a_{\mathrm{p}} Ax_t  >0, \mathbb{E}[\lambda] - a_{\mathrm{p}} Ax_t >0$, and $\phi(\lambda)$ is a concave when $\lambda \geq 0$. Then, according to Jensen's inequality of a strictly concave function~\cite{convex},
    \begin{align}
         w\phi(\pi - a_{\mathrm{p}} Ax_t) + (1-w)\phi(-\gamma - a_{\mathrm{p}} Ax_t)<  \phi[w\pi - wa_{\mathrm{p}} Ax_t 
        - (1-w)\gamma - (1-w)a_{\mathrm{p}} Ax_t]
        = \phi(\mathbb{E}[\lambda] - a_{\mathrm{p}} Ax_t),
    \end{align}    
    which prove (\ref{property2_case1}).
    \end{subequations}
    
    \emph{Case (2):} $\gamma+ a_{\mathrm{p}}A x_t 
    \geq 0$. 
    Under this condition, after some moderately tedious algebra, we rewrite (\ref{property23}) as 
    % \begin{subequations}
    \begin{align}
        w\phi(\pi - a_{\mathrm{p}} Ax_t) - (1-w)\phi(\gamma + a_{\mathrm{p}}A x_t) < \phi(\mathbb{E}[\lambda] - a_{\mathrm{p}} Ax_t )    
    \end{align}   
    % Following the Jensen's inequality of concave function~\cite{convex}, 
    % \begin{align}
    %     a_{\mathrm{p}} &[w\phi(\pi - a_{\mathrm{p}} Ax_t) - (1-w)\phi(\gamma + a_{\mathrm{p}} Ax_t)]  \nonumber \\
    %     &< a_{\mathrm{p}} \phi[w\pi - wa_{\mathrm{p}} Ax_t 
    %     - (1-w)\gamma - (1-w)a_{\mathrm{p}}A x_t] 
    %     = a_{\mathrm{p}} \phi[\mathbb{E}[\lambda] - a_{\mathrm{p}} Ax_t] < \mathbb{E}[\lambda] - a_{\mathrm{p}} Ax_t,   \label{property2_case2}  
    % \end{align} 
    % \end{subequations}
    which is true according to the \emph{case (1)} analysis.
    
    Combining two cases shows that (\ref{property23}) is true. Taking this property into the optimality conditions from Lemma \ref{aggressive_theorem_lamma1}, we get a greater right-hand side of (\ref{optimality_condition}), a greater left-hand side follows, thus resulting in a positive and greater $x_t$.
    This completes the proof of Lemma \ref{aggressive_theorem_lamma3}.
\end{proof}

Lemma \ref{aggressive_theorem_lamma3} provides important properties of the $t$th stage state-value function derivative, helping us analyze the optimality conditions with price $\gamma_{t+1}, \pi_{t+1}$ and $\mathbb{E}[\lambda_{t+1}]$. The lack of properties with regards to more skewed price distribution $\Gamma$ motivates us to provide the following Lemma to analyze the sensitivity of $t$th stage state-value function derivative. 

\begin{lemma}\emph{Sensitivity of reverse state-value function derivative.} \label{aggressive_theorem_lamma4}
          Given the same conditions of Lemma \ref{aggressive_theorem_lamma3}, the following property is true:
          \begin{itemize}
              \item The $t$th stage reverse state-value function derivative from Lemma \ref{aggressive_theorem_lamma1}: $-v_t(x_t) = a_{\mathrm{p}} [w_{\Lambda_{t+1}}\phi(-\pi_{t+1} + a_{\mathrm{p}}A x_t) + (1-w_{\Lambda_{t+1}}) \phi(\gamma_{t+1} + a_{\mathrm{p}} Ax_t)] 
        + \mathbb{E}[\lambda_{t+1}] -  a_{\mathrm{p}} A x_t$ strictly increase with $\pi_{t+1}$.
          \end{itemize}      
\end{lemma}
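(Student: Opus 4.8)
The plan is to read the displayed formula for $-v_t(x_t)$ as a function of $\pi_{t+1}$ alone, in which the weight $w_{\Lambda_{t+1}}$ is tied to $\pi_{t+1}$ by the fixed-expectation constraint \eqref{price_old} and the state $x_t$ is held fixed. This isolates the direct effect of the skewness shift $\pi_{t+1}\to\pi'_{t+1}$; coupling it afterwards with the monotone dependence of the optimality condition of Lemma \ref{aggressive_theorem_lamma1} on $x_t$ is what will eventually force the optimal $x_t$ to rise. Solving \eqref{price_old} gives $w_{\Lambda_{t+1}}=(\mathbb{E}[\lambda_{t+1}]+\gamma_{t+1})/(\pi_{t+1}+\gamma_{t+1})$, so $\mathrm{d}w_{\Lambda_{t+1}}/\mathrm{d}\pi_{t+1}=-w_{\Lambda_{t+1}}/(\pi_{t+1}+\gamma_{t+1})<0$. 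Abbreviating $s=a_{\mathrm{p}}Ax_t$ and differentiating $-v_t$ in $\pi_{t+1}$ (the terms $\mathbb{E}[\lambda_{t+1}]$ and $-a_{\mathrm{p}}Ax_t$ being constant), I would factor out $-a_{\mathrm{p}}w_{\Lambda_{t+1}}<0$ to reduce the sign of $\mathrm{d}(-v_t)/\mathrm{d}\pi_{t+1}$ to the sign of the bracket
\begin{align}
\phi'(s-\pi_{t+1})-\frac{\phi(s+\gamma_{t+1})-\phi(s-\pi_{t+1})}{(s+\gamma_{t+1})-(s-\pi_{t+1})}. \nonumber
\end{align}
Writing $u=s-\pi_{t+1}$ and $v=s+\gamma_{t+1}$ (so $v-u=\pi_{t+1}+\gamma_{t+1}>0$), strict monotone increase of $-v_t$ in $\pi_{t+1}$ is therefore equivalent to the \emph{secant inequality} $\phi'(u)<[\phi(v)-\phi(u)]/(v-u)$, i.e.\ to $\phi$ lying strictly above its tangent line at $u$ when evaluated at $v$.

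To prove the secant inequality I would first locate $u$ and $v$. Lemma \ref{aggressive_theorem_lamma3} gives $x_t>0$ and $s=a_{\mathrm{p}}Ax_t<\mathbb{E}[\lambda_{t+1}]$; since $\pi_{t+1}>\mathbb{E}[\lambda_{t+1}]>s$ (which itself follows from \eqref{price_constraint} together with $w_{\Lambda_{t+1}}\le 0.5$), this yields $u<0$. The admissibility restriction $w_{\Lambda_{t+1}}\in[0,0.5]$ applied to \eqref{price_old} gives $\pi_{t+1}-\gamma_{t+1}\ge 2\mathbb{E}[\lambda_{t+1}]$, and combining this with $2s<2\mathbb{E}[\lambda_{t+1}]$ produces $v-(-u)=2s-(\pi_{t+1}-\gamma_{t+1})<0$, so $v<-u=|u|$ and the whole interval $[u,v]$ lies inside the symmetric window $(-|u|,|u|)$. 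I would then write the gap as
\begin{align}
\frac{\phi(v)-\phi(u)}{v-u}-\phi'(u)=\frac{1}{v-u}\int_{u}^{v}\bigl[\phi'(\xi)-\phi'(u)\bigr]\,\mathrm{d}\xi, \nonumber
\end{align}
and invoke Lemma \ref{aggressive_theorem_lamma2}: because $\phi$ is odd, $\phi'$ is even, strictly increasing on $(-\infty,0)$ and strictly decreasing on $(0,\infty)$ with its maximum at $0$. Hence $\phi'(u)=\phi'(|u|)$, and every $\xi\in(u,v)$ obeys $|\xi|<|u|$, so $\phi'(\xi)>\phi'(|u|)=\phi'(u)$. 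The integrand is strictly positive, the integral is strictly positive, the secant inequality follows, and therefore $\mathrm{d}(-v_t)/\mathrm{d}\pi_{t+1}>0$.

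The main obstacle is precisely the mixed curvature of $\phi$, which is convex for negative and concave for positive arguments: the naive secant/tangent comparison can fail once $v$ reaches deep into the concave region, where $\phi'(\xi)$ can drop below $\phi'(u)$. The decisive step is the containment $v<|u|$, which keeps $[u,v]$ strictly inside the band where evenness of $\phi'$ guarantees $\phi'(\xi)>\phi'(u)$. Establishing that containment is where the two standing hypotheses do the real work: the probability cap $w_{\Lambda_{t+1}}\le 0.5$ (so the high price carries at most half the mass) bounds $\pi_{t+1}-\gamma_{t+1}$ from below, while the bound $s<\mathbb{E}[\lambda_{t+1}]$ inherited from Lemma \ref{aggressive_theorem_lamma3} bounds $s$ from above; neither alone suffices, but together they force $v<|u|$. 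A secondary point to handle carefully is that all inequalities must be strict, which is guaranteed by the strict convexity/concavity of $\phi$ from Lemma \ref{aggressive_theorem_lamma2}: this makes $\phi'$ strictly monotone and hence the integrand strictly positive on the open interval.
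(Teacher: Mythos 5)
Your proof is correct and follows essentially the same route as the paper: eliminate $w_{\Lambda_{t+1}}$ via the fixed-expectation constraint, differentiate $-v_t$ in $\pi_{t+1}$ holding $x_t$ fixed to reduce the claim to the secant inequality $\phi'(u)<[\phi(v)-\phi(u)]/(v-u)$ with $u=a_{\mathrm{p}}Ax_t-\pi_{t+1}<0$ and $v=a_{\mathrm{p}}Ax_t+\gamma_{t+1}$, and establish the containment $u<v<-u$ from $w_{\Lambda_{t+1}}\le 0.5$ together with $a_{\mathrm{p}}Ax_t<\mathbb{E}[\lambda_{t+1}]$ from Lemma \ref{aggressive_theorem_lamma3}. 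The only difference is in the final step: where the paper verifies the secant inequality through a chain of cross-multiplied algebraic inequalities, you use the integral representation of the secant slope together with the evenness and unimodality of $\phi'$ inherited from Lemma \ref{aggressive_theorem_lamma2}, which is a cleaner and arguably more transparent justification of the same fact.
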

 
 \begin{proof}   
    We prove this by showing the derivative of the reverse state-value function derivative $-v_t(x_t)$ with regards to $\pi_{t+1}$ is strictly positive.

        From Lemma \ref{aggressive_theorem_lamma3}, the following conditions are true (for brevity, we still omit the stage index $(t + 1)$ for variables and subscript of $w_{\Lambda_{t+1}}$ when there is no ambiguity.)
        \begin{subequations}
            \begin{align}
                x_t > 0, \\
                -\pi+ a_{\mathrm{p}} Ax_t < 0, \\
                \mathbb{E}[\lambda]- a_{\mathrm{p}}Ax_t > 0 .\label{Lemma4_derivative}
            \end{align}
        \end{subequations}
        
        According to (\ref{Lemma4_derivative}), 
        \begin{subequations}
        \begin{align}
            w\pi - (1-w) \gamma > a_{\mathrm{p}}Ax_t, \\
            w\pi - (1-w) \gamma > (1-w) a_{\mathrm{p}} Ax_t + w a_{\mathrm{p}}Ax_t, \\
            w(\pi - a_{\mathrm{p}} Ax_t) >(1-w) (\gamma + a_{\mathrm{p}}Ax_t).  \label{lemma4_condition2}
        \end{align}               
        \end{subequations}
        
        Due to $w \in [0,0.5]$, $\pi - a_{\mathrm{p}}Ax_t > \gamma + a_{\mathrm{p}}Ax_t$. Also, by definition, $\gamma+a_{\mathrm{p}}Ax_t > -\mathbb{E}[\lambda] + a_{\mathrm{p}}Ax_t$, and $\mathbb{E}[\lambda] < \pi$. Thus, let $\chi = -\pi+  a_{\mathrm{p}}Ax_t < 0$, $\alpha = -\gamma - a_{\mathrm{p}} Ax_t$, we get $\chi < \alpha < -\chi$.
        Also, from $w\pi - (1-w)\gamma = \mathbb{E}[\lambda] $, 
    \begin{align}
        w=\frac{\mathbb{E}[\lambda] + \gamma}{\pi+\gamma}, 1-w = \frac{\pi - \mathbb{E}[\lambda]}{\pi+\gamma},
    \end{align}
    Take $w,1-w$ inside and take the derivative of the function with regard to $\pi$, 
        \begin{align}
            &\frac{\partial [w\phi(\chi) +(1-w)\phi(-\alpha) + \mathbb{E}[\lambda] - a_{\mathrm{p}}A x_t]}{\partial \pi}  \nonumber \\
            &= -\frac{\mathbb{E}[\lambda] + \gamma}{(\pi+\gamma)^2} \phi(\chi) - 
            \frac{\mathbb{E}[\lambda] + \gamma}{\gamma+\pi} \phi'(\chi) + \frac{\mathbb{E}[\lambda] + \gamma}{(\pi+\gamma)^2} \phi(-\alpha)    
            = \frac{\mathbb{E}[\lambda] + \gamma}{(\gamma+\pi)^2} [\phi(-\alpha) - \phi(\chi)] - \frac{\mathbb{E}[\lambda] + \gamma}{\gamma+\pi} \phi'(\chi). \label{detivative_first_order}
            \end{align}

            According to the definition of the derivative, 
        \begin{align}
            \phi'(\chi) = \lim_{\alpha \rightarrow \chi} 
            \frac{\phi(\alpha) - \phi(\chi)}{\alpha - \chi}. 
        \end{align}
        Because $\chi<\alpha<-\chi$, and from Lemma \ref{aggressive_theorem_lamma2}, $\phi(\lambda)$ is symmetry about the origin, strictly increased, convex, and concave when $\lambda<0$ and $\lambda \geq0$, respectively, and $\phi(0)=0$, we conclude
        \begin{align}
            \phi'(\chi) \leq \frac{\phi(\chi) - \phi(\alpha)}{ \chi - \alpha} 
            < \frac{\phi(\chi)+\phi(\alpha)} 
            {\chi + \alpha} = \frac{-\phi(\alpha)-\phi(\chi)}{-\alpha-\chi}, \label{second_inequality}
        \end{align}
        and the second inequity holds because of the following,
        \begin{subequations}
            \begin{align}
               ( \phi(\chi)-\phi(\alpha)) (\chi+\alpha) 
               <  ( \phi(\chi)+\phi(\alpha)) (\chi-\alpha), \\
               \alpha \phi(\chi) < \chi \phi(\alpha), \\
               \phi(\chi) < \phi(\alpha) - \phi'(\chi)(\alpha -\chi), \\
               \alpha \phi(\alpha) - \alpha \phi'(\chi)(\alpha -\chi) - \chi \phi(\alpha) < 0, \\
               (\alpha -\chi) \phi(\alpha) < \alpha \phi'(\chi)(\alpha -\chi), \\
               \phi(\alpha) < \alpha \phi'(\chi), \\
               \frac{\phi(\alpha) - \phi(0)}{\alpha - 0} > \phi'(\chi).
            \end{align}
        \end{subequations}
        % $\chi<\alpha<0$ and $\phi(\chi)<\phi(\alpha)<0$. 

        As $\phi(\lambda)$ symmetry about the origin, we know $-\phi(\alpha) = \phi(-\alpha)$ and 
            \begin{align}
            \phi'(\chi) < \frac{\phi(-\alpha)-\phi(\chi)} 
            {-\alpha- \chi} = \frac{\phi(-\alpha)-\phi(\chi)} 
            {\gamma+a_{\mathrm{p}} Ax_t- (-\pi+a_{\mathrm{p}} Ax_t )}  = \frac{\phi(-\alpha)-\phi(\chi)} 
            {\gamma+ \pi},
             \label{Lemma4_condition1}              
            \end{align}
            which means the sign of (\ref{detivative_first_order}) is positive.
            
        Thus, we show the first-order derivative of the reverse state-value function derivative $-v_t(x_t)$ with regards to $\pi$ (as (\ref{detivative_first_order}) shows) is positive. This implies that the reverse state-value function derivative strictly increases with $\pi$, and completes the proof of Lemma \ref{aggressive_theorem_lamma4}.
\end{proof}

Lemma \ref{aggressive_theorem_lamma4} indicates that with an increased price $\pi_{t+1}$, i.e., $\pi'_{t+1}$, the $t$th stage reverse state-value function derivative increases, i.e., the right-hand side in the optimality conditions described in (\ref{optimality_condition}). This implies the left-hand side also increases; thus, $x_t$ increases. By combining all these properties, we are able to complete the proof of Theorem \ref{aggressive_theorem}.

\begin{proof}\emph{Proof of Theorem \ref{aggressive_theorem}.}
    First, from Lemma \ref{aggressive_theorem_lamma1}, the optimality conditions of the demand model for the first stage action can be expressed as (\ref{optimality_conditions}). Note that we set $V_{t+1}(x_{t+1}) = 0$ as it doesn't affect the action and state before stage $t+1$. We rewrite part of it here for convenience
        \begin{align}
            a_{\mathrm{p}} p_1 + \sum_{\tau=1}^{t} c_{\tau}(x_{\tau}) =   \mathbb{E}[\lambda_{t+1}] - a_{\mathrm{p}} Ax_{t}  
     + a_{\mathrm{p}} [w_{\Lambda_{t+1}} \phi(-\pi_{t+1} + a_{\mathrm{p}} Ax_{t} ) + (1-w_{\Lambda_{t+1}}) \phi(\gamma_{t+1} + a_{\mathrm{p}} Ax_{t} )].  \label{rewrite}
        \end{align}
        
    According to Lemmas \ref{aggressive_theorem_lamma2}, \ref{aggressive_theorem_lamma3}, the right-hand side (reverse state-value function derivative) is positive, and greater than the optimality conditions under price expectation in Remark \ref{remark_proof}, indicating the positive and greater left-hand side in (\ref{rewrite}), so as to the $x_t$. Also, from Lemma \ref{aggressive_theorem_lamma4}, the right-hand side strictly increases with the $\pi_{t+1}$, and the left-hand side follows. Thus, we get the following property for the state-value function,
    \begin{align}
        \mathbb{E}_{\Gamma_{t+1}}[Q_t(x_{t}|\lambda_{t+1})]
        < \mathbb{E}_{\Lambda_{t+1}}[Q_t(x_{t}|\lambda_{t+1})] 
        < Q_t(x_{t}|\mathbb{E}_{\Lambda_{t+1}}[\lambda_{t+1}]).\label{key_state_for_theorem12} 
    \end{align}           
    
    % Based on the model definition (\ref{eq1}), $p_t,x_t$ with price distribution $\Gamma_{t+1}$ is greater than that with price distribution $\Lambda_{t+1}$, indicating $x_t$ strictly increases with the right price skewness. 

    By the model definition, $x_{\tau} = Ax_{\tau-1} + p_{\tau},\tau \in [1,t]$, $x_0=0$, the influence of $x_t$ degrades following the state transition with a discount factor of $A<1$. Also, the demand has a minimization objective with quadratic action cost and superquadratic state cost. Thus, all actions and states prior to the stage $t$ should be non-negative and non-decreasing with the price skewness. This means that price skewness causes higher or equal demand levels to change ahead of time, and this proves the Theorem. 
\end{proof}

\section{Proof of Corollary \ref{distribution}}
\begin{proof}
    The key to analyzing this Corollary is that the price distributions that satisfy the Corollary \ref{distribution} can always be discretized as a combination of different two-point price pairs as described in Theorem \ref{aggressive_theorem}.

        We first express the state-value function with price distribution $\Lambda$ as follows: (for brevity, we omit the stage index $t+1$ without ambiguity)
    \begin{subequations}
        \begin{align}
            \mathbb{E}_{\Lambda}[Q_t(x_t|\lambda)] = \int_{\lambda \in \mathcal{X}} f_{\Lambda}(\lambda) Q_t(x_t|\lambda) d\lambda,  \label{value_function_int}
        \end{align}
        where we separate the integral into two parts with price variable $\gamma$ and $\pi_{\Lambda}$, and get
        \begin{align}
           \mathbb{E}_{\Lambda}[Q_t(x_t|\lambda)] 
           =  \int_{-\infty}^{\mu} f_{\Lambda}(\gamma) Q_t(x_t|\gamma) d\gamma
           + \int_{\mu}^{+\infty} f_{\Lambda}(\pi_{\Lambda}) Q_t(x_t|\pi_{\Lambda}) d\pi_{\Lambda},  \label{11b}
        \end{align}
        and the same expression stands for the distribution $\Gamma$:
        \begin{align}
            \mathbb{E}_{\Gamma}[Q_t(x_t|\lambda)] 
            =  \int_{-\infty}^{\mu} f_{\Gamma}(\gamma) Q_t(x_t|\gamma) d\gamma
           + \int_{\mu}^{+\infty} f_{\Gamma}(\pi_{\Gamma}) Q_t(x_t|\pi_{\Gamma}) d\pi_{\Gamma}.  \label{11c}
        \end{align}
    \end{subequations}

    Then, we discretize the two-point price variables to many two-point price pairs (denoted with superscript $*$) with $w_{\Gamma},1-w_{\Gamma}$ and $w_{\Lambda},1-w_{\Lambda}$ probability, and satisfy expectation condition (\ref{complicated_pricemodel1}) $w_{\Gamma}\pi^*_{\Gamma} + (1-w_{\Gamma})\gamma^* = w_{\Lambda}\pi^*_{\Lambda} + (1-w_{\Lambda})\gamma^* = \mu$. From conditions (\ref{complicated_pricemodel2})-(\ref{complicated_pricemodel5}), $\pi^*_{\Gamma}>\pi^*_{\Lambda}$, $w_{\Gamma},w_{\Lambda} \in [0,0.5]$, aligning with the price model in Theorem \ref{aggressive_theorem}. Here we show the right skewness condition, and the left skewness case can be obtained as a mirror, as mentioned in Remark \ref{remark}.
    Then, each price pair should follow the same probability distribution as the variables $\gamma, \pi_{\Gamma}$ ($f_{\Gamma}$) and $\gamma,\pi_{\Lambda}$ ($f_{\Lambda}$), respectively, i.e., 
    \begin{align}
        \frac{f_{\Gamma}(\pi_{\Gamma}^*)}{f_{\Gamma}(\gamma^*)+f_{\Gamma}(\pi^*_{\Gamma})} 
        = w_{\Gamma}, 
        \frac{f_{\Lambda}(\pi_{\Lambda}^*)}{f_{\Lambda}(\gamma^*)+f_{\Lambda}(\pi^*_{\Lambda})} 
        = w_{\Lambda}, \forall \gamma^* \in \gamma, \pi_{\Gamma}^* \in \pi_{\Gamma}, \pi_{\Lambda}^* \in \pi_{\Lambda},\label{10}
    \end{align}
    and the expectation of all two-point price pairs is equivalent to the following with the PDF:
    \begin{align}
        \frac{f_{\Gamma}(\gamma^*) \gamma^*}{f_{\Gamma}(\pi_{\Gamma}^*) + f_{\Gamma}(\gamma^*)} 
        + \frac{f_{\Gamma}(\pi_{\Gamma}^*) \pi_{\Gamma}^*} {f_{\Gamma}(\pi_{\Gamma}^*) + f_{\Gamma}(\gamma^*)} 
        = \frac{f_{\Lambda}(\gamma^*) \gamma^*} {f_{\Lambda}(\pi_{\Lambda}^*) + f_{\Lambda}(\gamma^*)} 
        + \frac{f_{\Lambda}(\pi_{\Lambda}^*) \pi_{\Lambda}^*}{f_{\Lambda}(\pi_{\Lambda}^*)+ f_{\Lambda}(\gamma^*) } = \mu. \label{expectation}
    \end{align}

    According to our analysis of the discretization, all the two-point price pairs from $\Gamma,\Lambda$ distribution have the same expectations, $\pi^*_{\Gamma}>\pi_{\Lambda}^*$, and $w_{\Gamma},w_{\Lambda} \in [0,0.5]$, which satisfy the price model described in Theorem \ref{aggressive_theorem}. Thus, all price pairs satisfy:
    \begin{subequations}
    \begin{align}
        (1-w_{\Gamma})Q_t(x_t|\gamma^*) + w_{\Gamma}Q_t(x_t|\pi^*_{\Gamma}) < (1-w_{\Lambda})Q_t(x_t|\gamma^*) + w_{\Lambda}Q_t(x_t|\pi^*_{\Lambda}), \label{two_point}
    \end{align}
    taking (\ref{10}) into (\ref{two_point}), 
       \begin{align}
            \frac{Q_t(x_t|\gamma^*) f_{\Gamma}(\gamma^*) +
             f_{\Gamma}(\pi_{\Gamma}^*) Q_t(x_t|\pi^*_{\Gamma})}
             {f_{\Gamma}(\pi_{\Gamma}^*)+f_{\Gamma}(\gamma^*)}
            < \frac{Q_t(x_t|\gamma^*) f_{\Lambda}(\gamma^*)
            + f_{\Lambda}(\pi_{\Lambda}^*)
            Q_t(x_t|\pi^*_{\Lambda})}
            {f_{\Lambda}(\pi_{\Lambda}^*)+f_{\Lambda}(\gamma^*)}.
       \end{align}
    \end{subequations}

   Given the condition (\ref{seperate_condition}), $f_{\Gamma}(\pi_{\Gamma}^*)+f_{\Gamma}(\gamma^*) < f_{\Lambda}(\pi_{\Lambda}^*)+f_{\Lambda}(\gamma^*)$. Thus, for $\forall \gamma^* \in \gamma, \pi_{\Gamma}^* \in \pi_{\Gamma}, \pi_{\Lambda}^* \in \pi_{\Lambda}$,   
    \begin{align}
        f_{\Gamma}(\gamma^*)Q_t(x_t|\gamma^*) + f_{\Gamma}(\pi_{\Gamma}^*) Q_t(x_t|\pi^*_{\Gamma}) 
        <  f_{\Lambda}(\gamma^*)Q_t(x_t|\gamma^*) + f_{\Lambda}(\pi_{\Lambda}^*) Q_t(x_t|\pi^*_{\Lambda}).
        \label{inter_expand}
    \end{align}

    Now, by combining all price pairs,
    \begin{subequations}
    \begin{align}
         \int_{-\infty}^{+\infty} f_{\Gamma}(\gamma) Q_t(x_t|\gamma) d\gamma 
        + \int_{-\infty}^{+\infty} 
        f_{\Gamma}(\pi_{\Gamma}) Q_t(x_t|\pi_{\Gamma}) d\pi_{\Gamma} 
        < \int_{-\infty}^{+\infty} f_{\Lambda}(\gamma) Q_t(x_t|\gamma) d\gamma 
        + \int_{-\infty}^{+\infty} 
        f_{\Lambda}(\pi_{\Lambda}) Q_t(x_t|\pi_{\Lambda}) d\pi_{\Lambda}.
    \end{align}
    Due to $\gamma<\mu, \pi_{\Gamma} \geq \mu, \pi_{\Lambda}\geq \mu$, 
        \begin{align}
            \int_{-\infty}^{\mu} f_{\Gamma}(\gamma) Q_t(x_t|\gamma) d\gamma 
        + \int_{\mu}^{+\infty} 
        f_{\Gamma}(\pi_{\Gamma}) Q_t(x_t|\pi_{\Gamma}) d\pi_{\Gamma} 
        < \int_{-\infty}^{\mu} f_{\Lambda}(\gamma) Q_t(x_t|\gamma) d\gamma 
        + \int_{\mu}^{+\infty} 
        f_{\Lambda}(\pi_{\Lambda}) Q_t(x_t|\pi_{\Lambda}) d\pi_{\Lambda},
        \end{align}
        and from (\ref{11b}) and (\ref{11c}),
\begin{align}
    \mathbb{E}_{\Gamma}[Q_t(x_t|\lambda)] < \mathbb{E}_{\Lambda}[Q_t(x_t|\lambda)] 
\end{align}
    which shows given the price distribution conditions as (\ref{complicated_pricemodel}), demand shows prudence and proves this Corollary.
    \end{subequations}
\end{proof}

% References
% \printbibliography
% \bibliography{ref} % Change to your bibliography file name

\end{document}